\newcommand{\slightspacing}{\setstretch{1.15}}
\DeclareMathOperator*{\argmin}{arg\,min}
\definecolor{Bleu}{RGB}{0,0,204}
\definecolor{CB5red}{HTML}{DA0000}
\definecolor{CB5blue}{HTML}{0066FF}
\definecolor{CB5purple}{HTML}{8000B3}
\definecolor{customblue}{HTML}{2166AC}
\definecolor{customred}{HTML}{B2182B}
\toks@\expandafter{\@endtheorem\@endpetrue}
\edef\@endtheorem{\the\toks@}
\newcommand\DoToC{%
  \startcontents
  \printcontents{}{1}{\vskip 1.5em\hrule\vskip .75em}
  \vskip .75em\hrule\vskip 2em
}
\newtheorem{theorem}{Theorem}
\newtheorem{lemma}{Lemma}
\theoremstyle{remark}
\newtheorem{condition}{Condition}
\newlist{enumcond}{enumerate}{1}
\setlist[enumcond,1]{label=\textup{(\alph*)},
                     ref=\thecondition\textup{(\alph*)}}
\crefname{enumcondi}{condition}{conditions}
\Crefname{enumcondi}{Condition}{Conditions}
\crefname{condition}{condition}{conditions}
\Crefname{condition}{Condition}{Conditions}
\newcommand{\E}{\mathbb{E}}
\DeclareRobustCommand{\uQ}{\underaccent{\bar}{Q}}
\DeclareMathOperator*{\essinf}{ess\,inf}
\date{}
\begin{document}
\allowdisplaybreaks
\def\spacingset#1{\renewcommand{\baselinestretch}%
{#1}\small\normalsize} \spacingset{1}
  
\title{Estimation of causal dose-response functions under data fusion}
\author{Jaewon Lim$^1$, and Alex Luedtke$^{2,3}$ \\
\\$^1$Department of Biostatistics, University of Washington
\\$^2$Department of Health Care Policy, Harvard University
\\$^3$Department of Statistics, University of Washington}
\maketitle
	
\bigskip
\begin{abstract}\slightspacing
Estimating the causal dose-response function is challenging, particularly when data from a single source are insufficient to estimate responses precisely across all exposure levels. To overcome this limitation, we propose a data fusion framework that leverages multiple data sources that are partially aligned with the target distribution. Specifically, we derive a Neyman-orthogonal loss function tailored for estimating the dose-response function within data fusion settings. To improve computational efficiency, we propose a stochastic approximation that retains orthogonality. We apply kernel ridge regression with this approximation, which provides closed-form estimators. Our theoretical analysis demonstrates that incorporating additional data sources yields tighter finite-sample regret bounds and improved worst-case performance, as confirmed via minimax lower bound comparison. Simulation studies validate the practical advantages of our approach, showing improved estimation accuracy when employing data fusion. This study highlights the potential of data fusion for estimating non-smooth parameters such as causal dose-response functions.
\end{abstract}
	
\noindent
\textit{Keywords:} causal-dose response function; data fusion; orthogonal statistical learning; minimax; reproducing kernel Hilbert space

\slightspacing

\section{Introduction}
The causal dose-response function (CDRF) maps a level of continuous or multi-valued exposure to a corresponding expected counterfactual outcome. The exposure can be a potentially beneficial treatment, such as the duration of training \citep{kluve_evaluating_2012} or the frequency of therapy sessions \citep{robinson_dose-response_2020}, or a harmful contaminant, such as the concentration of air pollution \citep{dominici_air_2002}. CDRFs are also referred to as causal dose-response curves \citep{diaz_targeted_2013}, dose-response functions \citep{bonvini_fast_2022}, average dose-response functions \citep{galvao_uniformly_2015}, or average structural functions \citep{blundell_endogeneity_2004}. Estimating a CDRF is challenging because, typically, few, if any, individuals in a dataset have received most dose levels \citep{bahadori_end--end_2022}.

Extensive work has been devoted to estimating CDRFs. Many existing approaches rely on parametric assumptions. An early formal approach, introduced by \citet{robins_marginal_2000}, was based on a marginal structural model, which fits a parametric model using inverse-probability weighting. This approach requires the parametric model to contain the true CDRF. Later, the assumption of correct model specification was relaxed by projecting CDRFs onto the function space implied by a particular parametric model for more robust estimation \citep{neugebauer_nonparametric_2007}.
\citet{ai_unified_2021} introduced a weighted optimization scheme for estimating CDRFs, employing weights suggested by \citet{robins_marginal_2000}.
\citet{yiu_covariate_2018} constructed balancing weights that render pretreatment variables unassociated with treatment assignment after weighting.
\citet{hirano_propensity_2004,imai_causal_2004} used parametric generalized propensity scores to estimate CDRFs, and \citet{christian_fong_covariate_2018} proposed a covariate-balancing generalized propensity score methodology.

More recently, a growing body of literature has focused on estimating CDRFs using nonparametric methods and orthogonal statistical learning \citep{bonvini_fast_2022, foster_orthogonal_2023}.
\citet{flores_estimation_2007} considered a plug-in, kernel-based estimator of CDRFs.
\citet{singh_kernel_2023} explored kernel ridge regression to develop simple, closed-form estimators of CDRFs.
\citet{galvao_uniformly_2015} proposed a semiparametric, two-step estimator that first estimates a ratio of conditional densities and then estimates the CDRF.
\citet{kennedy_non-parametric_2016,bonvini_fast_2022} introduced a two-step estimation procedure in which nuisance functions are estimated and then a pseudo-outcome is regressed on the exposure variable.
\citet{westling_causal_2020} proposed a nonparametric estimator for monotone CDRFs.
Given multiple candidate CDRF estimators, early work demonstrated how to use cross-validation to select among them, with corresponding oracle guarantees \citep{van_der_laan_unified_2003, diaz_targeted_2013}.

Recent approaches include the estimation of CDRFs in nonstandard settings. \citet{huang_nonparametric_2023} studied settings in which the continuous treatment variable is measured with error. They proposed a method based on weighting and generalized empirical likelihood techniques. \citet{zeng_continuous_2024} examined scenarios in which the primary outcome is missing for some proportion of the data. They developed a two-step estimation procedure similar to that of \citet{bonvini_fast_2022}, but their method constructs pseudo-outcomes by leveraging both labeled and unlabeled data through a surrogate outcome.

Previous approaches have considered only samples originating from a single set of trial or observational data. This limitation makes the estimation of CDRFs practically difficult because data from a single source may not be sufficient for precise estimation across all exposure levels. To address this issue, we consider data fusion, which enables the combination of information from distinct sources that are partially aligned with the target distribution \citep{bareinboim_causal_2016}. Data fusion has been applied to the transportability of causal effects \citep{hernan_compound_2011, pearl_transportability_2011, bareinboim_transportability_2014, stuart_assessing_2015, rudolph_robust_2016, dahabreh_extending_2019, dahabreh_efficient_2022}. \citet{qiu_efficient_2024} combined information from an auxiliary population to address dataset shift and to estimate the target population risk. \citet{sun_semiparametric_2022} combined information from two data sources—one containing treatment information and the other outcome information—to estimate average treatment effects via an instrumental variable. \citet{li_efficient_2023,graham_towards_2025,li_data_2025} developed general frameworks for the efficient estimation of smooth, finite-dimensional parameters under data fusion.

We apply a general form of data fusion, leveraging one or more sources. Our first contribution is as follows:
\begin{enumerate}
\item We derive a Neyman-orthogonal loss for estimating the CDRF in data fusion settings.
\end{enumerate}
Evaluating this loss can be computationally expensive.
To overcome this challenge:
\begin{enumerate}[resume*]
\item We propose a stochastic approximation of the loss that retains Neyman orthogonality. When used with kernel ridge regression, it yields a closed-form estimator.
\item We establish finite-sample regret upper bounds for both kernel ridge regression and empirical risk minimizers. These bounds become tighter as additional sources are incorporated into the analysis.
\end{enumerate}
Since upper bounds can be loose, the above results do not guarantee that incorporating additional data sources will necessarily improve performance.
Our final contribution is to investigate this question:
\begin{enumerate}[resume*]
\item We characterize settings in which using data fusion necessarily reduces the risk of estimating CDRFs with high probability. In doing so, we establish a high-probability worst-case lower bound for the risk of estimating CDRFs without data fusion.
\end{enumerate}

\section{Data Fusion Framework and Estimation Algorithm}
\subsection{Statistical Setting and Parameters of Interest}

We begin by defining the parameter of interest and its associated risk.
Let $(X, A, Y) \sim Q^{0}$, where $X \in \mathcal{X} \subset \mathbb{R}^r$ denotes the covariates, $A \in \mathcal{A} = [0, 1]^d$ the exposure, $Y \in \mathcal{Y} \subset \mathbb{R}$ the outcome of an experiment, and $Q^{0}$ the target distribution.
Our parameter of interest, the CDRF, maps a realized exposure $a$ to the expected value of the potential outcome $Y^{a}$.
Under the assumptions discussed in \citet{westling_causal_2020}, the CDRF can be identified as
\begin{align*}
\theta_{Q^{0}}(a) = \E\!\left[ Y^{a} \right] &= \int \E\!\left[ Y \mid X = x, A = a \right] \, Q^{0}_{X}(dx).
\end{align*}
The target distribution is assumed to belong to a model $\mathcal{Q}$ of distributions for the random vector $(X, A, Y)$—some of our later results will impose moment and smoothness restrictions on this model, but for now we leave it unspecified.
For any $Q \in \mathcal{Q}$, we denote by $Q_{X}$ the marginal distribution of $X$, by $Q_{A \mid X}$ the conditional distribution of $A$ given $X$, and by $Q_{Y \mid X, A}$ the conditional distribution of $Y$ given $(X, A)$.

We evaluate performance by integrating a squared error loss with respect to $\mu$, a prespecified measure on $\mathcal{A}$ that dominates $Q^{0}_{A \mid X}$ with $Q_X^0$-probability one. Hence, the risk at $\theta: \mathcal{A} \to \mathbb{R}$ is
\begin{align*}
  \mathcal{L}_{Q^{0}}(\theta)
  &:= \int \!\bigl[\theta(a) - \theta_{Q^{0}}(a)\bigr]^{2}\, \mu(da) = 
  \int \!\E_{Q^{0}_{X}}\!\Bigl[
        \E_{Q^{0}_{Y \mid X, A}}\!\bigl[\theta(A) - Y \,\big|\, X, A = a\bigr]
      \Bigr]^{2}\! \mu(da).
\end{align*}
Our data do not consist of draws directly from the target distribution $Q^{0}$.  
Instead, they contain samples from $k$ data sources, formalized as observing $n$ independent copies of $(X, A, Y, S) \sim P^{0}$. Here, $S$ is a categorical random variable with support $[k]$ indicating the data source. For any distribution $P$, we define $P_{X}(\cdot \mid s)$ as the conditional distribution of $X \mid S = s$,  
$P_{A \mid X}(\cdot \mid x, s)$ as that of $A \mid X = x, S = s$,  
and $P_{Y \mid X, A}(\cdot \mid x, a, s)$ as that of $Y \mid X = x, A = a, S = s$.  
We assume that all such conditional distributions are well defined on common measurable spaces.

Under the following condition, we establish a connection between the conditional distributions of $P^{0}$ and $Q^{0}$.  
Let $\mathcal{S}_{X}$ and $\mathcal{S}_{Y}$ denote known collections of data sources whose distributions of $X$ and $Y \mid X, A$ align with $Q^{0}_{X}$ and $Q^{0}_{Y \mid X, A}$, respectively, in the sense defined below.

\begin{condition}[Data Fusion Conditions]\label{cond:identifiability}
All of the following hold:
\begin{enumcond}
    \item \label{cond:source_probabilities}
    (Positive correctly aligned sources) 
     There exist constants $M_{\xi}, M_{\eta} \ge 1$ such that
    \[
      P^{0}(S \in \mathcal{S}_{X}) \;\ge\; M_{\xi}^{-1}, 
      \quad P^{0}(S \in \mathcal{S}_{Y}) \;\ge\; M_{\eta}^{-1}.
    \]
    
    \item \label{cond:overlap}(Sufficient overlap)  
    $Q^{0}_{X,A}(\cdot)$ is absolutely continuous with respect to $P^{0}_{X,A}(\,\cdot \mid S = s)$ for all $s \in \mathcal{S}_{Y}$.  
    Moreover, there exists $c_{1} \geq 1$ such that 
    \[
      c_{1}^{-1} \;\leq\; \frac{dQ_{X, A}^{0}}{dP_{X, A}^{0}(\,\cdot \mid S \in \mathcal{S}_{Y})}(x,a) \;\leq\; c_{1},
      \quad Q^{0}\text{-a.e. } (x,a).
    \]

    \item \label{cond:common}(Exchangeability of conditionals) 
    For all $s \in \mathcal{S}_{X}$, $P^{0}_{X}(\,\cdot \mid S = s) = Q^{0}_{X}(\cdot)$. For all $s \in \mathcal{S}_{Y}$,
    \[
      P^{0}_{Y \mid X,A}(\,\cdot \mid X=x, A=a, S=s) 
      \;=\; Q^{0}_{Y \mid X,A}(\,\cdot \mid X=x, A=a),
      \quad Q^{0}_{X,A}\text{-a.e.}
    \]
\end{enumcond}
\end{condition}

The statistical model $\mathcal{P}$ is defined as the set of all distributions $P^{0}$ on some measurable space such that $(P^{0}, Q^{0})$ satisfy \Cref{cond:identifiability} for some $Q^{0} \in \mathcal{Q}$.  
\Cref{cond:identifiability} can be relaxed in the sense that it suffices for there to exist at least one $\uQ \in \mathcal{Q}$ satisfying the condition, even if the true $Q^{0}$ fails to meet \Cref{cond:overlap}, as discussed in \citet{li_efficient_2023}.  
This relaxation is justified because the risk does not depend on the distribution of $A \mid X$, which is not identifiable.  
\Cref{cond:source_probabilities} guarantees that the probabilities of correctly and partially aligned sources are uniformly bounded away from zero across $\mathcal{P}$.  
\Cref{cond:overlap,cond:common} are used to identify the risk through the observed distribution $P^{0}$.  
Specifically, \Cref{cond:overlap} ensures that the conditional distribution $P^{0}_{Y \mid X, A}(\cdot \mid x, a, S = s)$ is uniquely defined up to $Q^{0}$-null sets, while \Cref{cond:common} enables us to link the risk defined by the conditional distributions of $Q^{0}$ to those of $P^{0}$.  
Moreover, we require \Cref{cond:overlap} to define a Neyman–orthogonal loss for the risk in the next section.

By Theorem~1 in \citet{li_efficient_2023}, \Cref{cond:overlap,cond:common} allow us to identify $\theta_{Q^{0}}$ and $\mathcal{L}_{Q^{0}}(\theta)$ with the following functions indexed by $P^{0}$:
\begin{align*}
  \mathcal{L}_{Q^{0}}(\theta)
  &= L_{P^{0}}(\theta)
  := \int \!
     \E_{P^{0}}\!\Bigl[
       \E_{P^{0}}\!\bigl[\theta(A) - Y \,\big|\, X, A = a, S \in \mathcal{S}_{Y}\bigr]
       \Big|\, S \in \mathcal{S}_{X}
     \Bigr]^{2}\! \mu(da).
\end{align*}
In view of this identification result, $\mathcal{L}_{Q^{0}}(\theta)$ can be estimated using data drawn from $P^0$. The same applies to the global risk minimizer $\theta_{Q^{0}}$, which is identified with
\begin{align*}
  \theta_{Q^{0}}(a)
  &= \theta_{0}(a)
  := \E_{P^{0}}\!\Bigl[
       \E_{P^{0}}\!\bigl[Y \,\big|\, X = x, A = a, S \in \mathcal{S}_{Y}\bigr]
       \Big|\, S \in \mathcal{S}_{X}
     \Bigr].
\end{align*}

\subsection{Derivation of Neyman-orthogonal Loss}

Our goal is to find a minimizer of the population risk, denoted by $L_{P^{0}}(\theta)$, within a function class $\Theta \subseteq L^{2}(\mu)$.  
Because the population risk involves nuisance parameters, we aim to construct a Neyman–orthogonal loss function such that the estimation error of the nuisance parameters has a fourth-order impact on the oracle population risk \citep{foster_orthogonal_2023, curth_estimating_2021}.  
This property is particularly beneficial because estimators of some nuisance parameters on which the problem relies typically do not converge at the $n^{-1/2}$ rate when estimated using machine-learning methods \citep{sugiyama_density_2012}.

A candidate for a Neyman-orthogonal loss is identified as the loss function whose empirical mean corresponds to a one-step estimator of $L_{P^0}(\theta)$, $L_{OS, \widehat{P}_{n}}(\theta):= L_{\widehat{P}_{n}}(\theta) + \mathbb{P}_{n}D^{*}_{\widehat{P}_{n}}(\theta)$, based on an initial estimate $\widehat{P}_{n}$ of $P^0$ \citep{bickel_adaptive_1982}. This estimator adjusts for the bias induced by plug-in estimation of estimator $\widehat{P}_{n}$ of $P$, where $D^{*}_{P}(\theta)$ denotes a gradient of $P\mapsto L_P(\theta)$ at $P$. By Corollary~1 in \citet{li_efficient_2023}, $D^{*}_{P}(\theta)$ can be obtained as the projection of $D^{*}_{Q}(\theta)$ onto the tangent space of $\mathcal{P}$ at $P$. 

The resulting one-step estimator of $L_{P^{0}}(\theta)$ is given by the empirical mean
\[
\mathbb{P}_{n}\!\left[\ell^{\text{OS}}_{P^{0}}(\theta, g_{0}; \cdot)\right],
\]
where $\ell^{\text{OS}}_{P^{0}}(\theta, g_{0}; (x, a, y, s))
:= \E_{B \sim \mu}\!\left[\ell_{P^{0}}(\theta, g_{0}; (x, a, y, s, B))\right]$ and
\begin{align}
\ell_{P^{0}}(\theta, g_{0}; (x, a, y, s, b))
&:= \bigl(\theta(b) - \theta_{0}(b)\bigr)^{2}
+ 2\xi_{0}\,\mathbf{1}(s \in \mathcal{S}_{X})
   \bigl(\theta(b) - \theta_{0}(b)\bigr)
   \bigl(\theta_{0}(b) - m_{0}(x, b)\bigr)  \nonumber\\
&\quad
+ 2\eta_{0}\,w_{0}(x, a)\,\mathbf{1}(s \in \mathcal{S}_{Y})
   \bigl(\theta(a) - \theta_{0}(a)\bigr)
   \bigl(m_{0}(x, a) - y\bigr),
\label{eq:loss}
\end{align}

with nuisance parameters defined as
\begin{align*}
&g_{0}: (a, x) \mapsto (\xi_{0}, \eta_{0}, w_{0}(a, x), m_{0}(a, x), \theta_{0}(a)), \\
&\xi_0 := \frac{1}{P^0(S \in \mathcal{S}_X)}, \quad \eta_0 := \frac{1}{P^0(S \in \mathcal{S}_Y)}, \quad w_0(x, a) := \frac{p^0(x \mid S \in \mathcal{S}_X)}{p^0(x \mid S \in \mathcal{S}_Y)} \cdot \frac{1}{p^0(a \mid x, S \in \mathcal{S}_Y)} \\
&m_0(x, a) := \E_{P^0}[Y \mid x, a, S \in \mathcal{S}_Y].
\end{align*}
We define \( p^{0}(x \mid S \in \mathcal{S}_X) \) and \( p^{0}(x \mid S \in \mathcal{S}_Y) \) as densities with respect to a common dominating measure on \(\mathcal{X}\). Similarly, \( p(a \mid x, S \in \mathcal{S}_Y) \) denotes the conditional density of \( A \) given \( X \) and \( S \in \mathcal{S}_Y \) with respect to the known measure \( \mu \) on \(\mathcal{A}\).

We employ a stochastic approximation of the one-step estimator because it involves an expectation with respect to the known measure $\mu$, which makes direct optimization computationally challenging in practice.  
To approximate this expectation, we draw $n$ independent samples $B_{1}, \dots, B_{n} \sim \mu$ and denote a realization of the full data vector by $z := (x, a, y, s, b)$.  
For notational convenience, we denote the (oracle) loss function by $\ell_{P^{0}}(\theta, g_{0}; z)$, as defined in~\eqref{eq:loss}, for the remainder of the paper.

Distributions in $\mathcal{Q}$ are required to satisfy the following moment conditions:
\begin{condition}[Constraints on Model for Target Distribution]\label{cond:target_model}
For each $Q \in \mathcal{Q}$:
\begin{enumcond}
    \item \text{(Bounded conditional mean)} 
    \label{cond:bounded-m} $|m_{0}(x, a)| \leq 1$ $Q_{X} \times \mu$-almost everywhere (a.e.). 
In particular, this implies that $\big|\theta_{0}(a)\big| = \big|\E_{Q_X}\big[m_{0}(x, a)\big] \big| \leq 1$ $\mu$-a.e.
    \item \text{(Sub-exponential tails)} 
    \label{cond:subexp-y}
    There exist constants $\sigma, L > 0$ such that, for all integers $m \geq 2$,
    \[
    \int |y - \theta_{Q}(a)|^{m}\, Q(dy \mid x, a) \leq \tfrac{1}{2}m!\,\sigma^{2}L^{m-2}.
    \]
    \item \text{(Bounded density ratio between $\mu$ and $Q_{A\mid X}$)} 
    \label{cond:dmu-dqax-bdd}
    There exists a constant $c_{\mu} \ge 1$ such that for $Q_{X}$-almost every $x$,
    \begin{align*}
    \mu \ll Q_{A \mid X}(\cdot \mid x)
    \quad\text{and}\quad
    c_{\mu}^{-1} \;\le\;
    \frac{d\mu}{dQ_{A \mid X}(\cdot \mid x)}(a)
    \;\le\; c_{\mu},
    \quad Q_{A \mid X}(\cdot \mid x)\text{-a.e.\ } a.
    \end{align*}
\end{enumcond}
\end{condition}

The bounded conditional mean assumption in \Cref{cond:bounded-m} can be enforced, without loss of generality, by rescaling the outcome $Y$ whenever the conditional mean is uniformly bounded by some finite constant. The sub-exponential tail condition in \Cref{cond:subexp-y} allows the outcome to be unbounded while ensuring that its tail probabilities are appropriately controlled. More precisely, for each $(X, A)$ it implies that 
\[
P_{Q}\big(|Y - \theta_{Q}(A)| \geq t \,\big|\, X = x, A = a\big) \leq 2\exp(-t/L), \quad t \gg 0,
\]
by Proposition~2.7.1 in \citet{vershynin_high-dimensional_2018}. \Cref{cond:dmu-dqax-bdd} states that the Radon--Nikodym derivative of $\mu$ with respect to $Q_{A \mid X}(\cdot \mid x)$ is uniformly bounded away from zero and infinity over $\mathcal{Q}$. 
Combining \Cref{cond:overlap,cond:dmu-dqax-bdd}, we also obtain that, for any $P^{0} \in \mathcal{P}$, $M_{w}^{-1} \leq w_{0}(x,a) \;\leq\; M_{w}$ for $M_{w} := c_{1} c_{\mu} \ge 1$. Hence, $w_{0}$ is also uniformly bounded away from zero and infinity.

We define the nuisance parameter space $\mathcal{G}$ (with $g_{0} \in \mathcal{G}$) associated with $\mathcal{P}$ as the set of tuples 
$g = (\xi, \eta, w, m, \tau)$, where $\xi, \eta \in (0, 1]$, 
$w, m\!:\! \mathcal{X} \times \mathcal{A} \to \mathbb{R}$, 
and $\tau\!:\! \mathcal{A} \to \mathbb{R}$, satisfying the following condition.

\begin{condition}[Uniform Boundedness of Nuisances]\label{cond:nuisances}
For each $g \in \mathcal{G}$, the constants $M_{\xi}, M_{\eta}, M_{w}$ serve as universal bounds for any $P^{0} \in \mathcal{P}$ such that:
\mbox{}
\begin{enumcond}
    \item \text{(Positive probability for relevant sources)} 
    \label{cond:bounded-xi-eta}
    $\xi \le M_{\xi}$ and $\eta \le M_{\eta}$.
    
    \item \text{(Bounded density ratio and conditional mean)} 
    \label{cond:bounded-w-m}
    \[
      M_{w}^{-1} \le w(x, a) \le M_{w}, 
      \qquad |m(x, a)| \le 1 \quad Q_X^{0}\times \mu\text{-a.e.}, 
      \qquad |\tau(a)|\le 1 \quad \mu\text{-a.e.}.
    \]
\end{enumcond}
\end{condition}

Condition~\ref{cond:bounded-xi-eta} requires that the source probability estimators are uniformly bounded away from zero over $\mathcal{P}$.
Condition~\ref{cond:bounded-w-m} requires that the density ratio estimators of $w_{0}$ are uniformly bounded away from zero and infinity, and that the estimators of the conditional mean $m_{0}$ and the CDRF $\theta_{0}$ are uniformly bounded by one. We also define a constant $M_{\lambda}:= M_{w}^{-1}(M_{\eta} + M_{w} + 2)$. 
The loss function defined in \Cref{eq:loss} can be extended to any $g \in \mathcal{G}$ by replacing $g_{0}$ with $g$.  
Building on this extension, we define the expected loss as a function of both the parameter of interest and the nuisance parameters:
\[
  L_{P^{0}}(\theta, g)
  := \E_{\widetilde{P}^{0}}\!\big[\ell_{P^{0}}(\theta, g; Z)\big],
\]
where $\widetilde{P}^{0} := P^{0} \times \mu$ denotes the product measure on the extended space $Z := (X, A, Y, S, B)$.  
In particular, when evaluated at the true nuisance parameters, the loss reduces to the target functional:
\[
  L_{P^{0}}(\theta, g_{0}) = L_{P^{0}}(\theta).
\]

\subsection{Estimation Algorithms}
Using the loss function defined above, we propose two estimation algorithms-a generic two-stage estimation strategy and one based on kernel ridge regression-both of which employ sample splitting \citep{chernozhukov_doubledebiased_2018, foster_orthogonal_2023}. We divide the sample into two non-overlapping subsets, estimate nuisance parameters using the first subset, and then estimate the target parameter based on the estimated nuisance parameters using the remaining subset. 

\Cref{alg:gen_est} implements empirical risk minimization over a function class 
$\Theta \subseteq L^{2}(\mu)$ subject to the constraint 
$\sup_{\theta \in \Theta}\|\theta\|_{L^{\infty}(\mu)} \le 1$.  
The sampling weights $\xi_{0}$ and $\eta_{0}$ are estimated as the inverses of the empirical probabilities of $S \in \mathcal{S}_{X}$ and $S \in \mathcal{S}_{Y}$, respectively.  
The density ratio $w_{0}(x, a)$ is estimated using direct density-ratio methods such as Unconstrained Least-Squares Importance Fitting (uLSIF) or the Kullback–Leibler Importance Estimation Procedure (KLIEP) \citep{sugiyama_density_2012}.  
The outcome regression $m_{0}(x, a)$ is fit using a flexible learning algorithm, 
for example, Super Learner (via the \texttt{SuperLearner} package).

\begin{algorithm}[tb]
\caption{Estimation of CDRF over a general function class $\Theta$}
\label{alg:gen_est}
\begin{algorithmic}[1]
\State \textbf{Input:} Partition the sample $z^{n}$ into two disjoint subsamples $z^{n}_{1}$ and $z^{n}_{2}$.
\State \textbf{Nuisance estimation:} Estimate nuisance parameters $\widehat{g}$ using only $z^{n}_{1}$:
\begin{align*}
  &\widehat{\xi} := \frac{1}{\widehat{P}_{n}(S \in \mathcal{S}_{X})}, 
  \qquad
  \widehat{\eta} := \frac{1}{\widehat{P}_{n}(S \in \mathcal{S}_{Y})}, 
  \qquad
  \widehat{w}(x, a) := \text{direct estimator of } w_{0}(x, a), \\[4pt]
  &\widehat{m}(x, a) := \E_{\widehat{P}_{n}}\![Y \mid X = x, A = a, S \in \mathcal{S}_{Y}], 
  \qquad
  \widehat{\tau}_{0}(a) := \E_{\widehat{P}_{n}}\![\widehat{m}(X, a) \mid S \in \mathcal{S}_{X}].
\end{align*}
\State \textbf{Target estimation:} Find the empirical-risk minimizer $\widehat{\theta}$ using only $z^{n}_{2}$:
\begin{align*}
  \widehat{\theta}
  &= \underset{\theta \in \Theta}{\arg\min} \;
     \mathbb{P}^{(2)}_{n}\!\big[\ell_{P^{0}}(\theta, \widehat{g}; \cdot)\big],
\end{align*}
where $\mathbb{P}^{(2)}_{n}$ denotes the empirical distribution based on $z^{n}_{2}$.
\end{algorithmic}
\end{algorithm}

\Cref{alg:rkhs_est} uses kernel ridge regression.  
Let $\mathcal{H}$ denote a reproducing kernel Hilbert space (RKHS) with norm $\|\cdot\|_{\mathcal{H}}$ and kernel $\mathcal{K}$.  
We consider the constrained RKHS ball 
\begin{align*}
\mathcal{H}_{c}
  = \bigl\{ \theta \in \mathcal{H} \,\big|\, 
    \|\theta\|_{\mathcal{H}} \le c,\;
    \|\theta\|_{L^{\infty}(\mu)} \le 1
  \bigr\},    
\end{align*}
where $c$ is a hyperparameter selected by cross-validation.  
Given $z$, the kernel ridge regression problem becomes an empirical loss minimization over the constrained class $\mathcal{H}_{c}$.

\begin{algorithm}[tb]
\caption{Estimation of the CDRF using kernel ridge regression}
\label{alg:rkhs_est}
\begin{algorithmic}[1]
\State \textbf{Input:} Two disjoint subsamples $z^{n}_{1}$ and $z^{n}_{2}$ obtained by splitting the full sample $z^{n}$.
\State \textbf{Nuisance estimation:} Estimate nuisance parameters using only $z^{n}_{1}$ as described in \Cref{alg:gen_est}, yielding $\widehat{g}$.
\State \textbf{Target estimation:}
  \begin{enumerate}
    \item For each $i = 1, \ldots, |z^{n}_{2}|$, define
    \begin{align*}
      u_{i} &= \mathbf{1}(s_{i} \in \mathcal{S}_{Y})\, \widehat{\eta}\, \widehat{w}(x_{i}, a_{i}) \bigl(\widehat{m}(x_{i}, a_{i}) - y_{i}\bigr),\\
      v_{i} &= \mathbf{1}(s_{i} \in \mathcal{S}_{X})\, \widehat{\xi}\,
              \bigl(\widehat{\tau}(b_{i}) - \widehat{m}(x_{i}, b_{i})\bigr)
              - \widehat{\tau}(b_{i}).
    \end{align*}
    \item Define the Gram matrix
      $\mathbf{K} = (K_{ij}) \in \mathbb{R}^{2|z^{n}_{2}| \times 2|z^{n}_{2}|}$,
      where
      $K_{ij} = \mathcal{K}(a'_{i}, a'_{j}) / |z^{n}_{2}|$
      and $a' = (a_{i}) \cup (b_{i})$ for $i = 1, \ldots, |z^{n}_{2}|$.
    \item Use cross-validation (\Cref{alg:cross_val}) to select $\lambda_{n}$ and compute the kernel weights:
    \begin{align*}
      \widehat{\beta} &= -\frac{\mathbf{u}}{\lambda_{n} \sqrt{|z^{n}_{2}|}}, \qquad
      \widehat{\gamma}
      = -(\mathbf{K}_{22} + \lambda_{n} \mathbf{I})^{-1}
        \!\left(
          \frac{\mathbf{v}}{\sqrt{|z^{n}_{2}|}}
          + \mathbf{K}_{21} \widehat{\beta}
        \right).
    \end{align*}
    \item Construct the closed-form kernel-ridge estimator $\widehat{\theta}$ as
    \begin{align*}
      \widehat{\theta}(\cdot)
      = \frac{1}{\sqrt{|z^{n}_{2}|}}
        \sum_{j=1}^{|z^{n}_{2}|}
        \bigl[
          \widehat{\beta}_{j} \mathcal{K}(\cdot, a_{j})
          + \widehat{\gamma}_{j} \mathcal{K}(\cdot, b_{j})
        \bigr].
    \end{align*}
  \end{enumerate}
\end{algorithmic}
\end{algorithm}

\section{Excess Risk Guarantees and Worst-Case Efficiency Gains from Using Data Fusion}

\subsection{Overview}

This section summarizes and formalizes our theoretical findings regarding the impact of data fusion on estimator performance. We first derive oracle excess risk bounds for empirical risk minimizers, both in general settings and in kernel-ridge regression. These results demonstrate that, when nuisance functions are estimated with sufficient accuracy, the excess risk is dominated by the target estimation error. We then show that incorporating data fusion reduces the Lipschitz constant of the loss function, which in turn tightens the upper bound on the excess risk.  
Lastly, we establish that the high-probability worst-case upper bound on the risk under data fusion can fall strictly below the high-probability worst-case lower bound attainable without data fusion. Together, these results provide both predictive and worst-case efficiency guarantees for leveraging partially aligned data sources. \Cref{fig:maxrisk_line} summarizes these relationships between excess risk bounds and worst-case guarantees under data fusion and no fusion.

\begin{figure}[tb]
    \centering
    \begin{adjustbox}{margin=2cm 0cm 2cm 0cm,center}
        \begin{tikzpicture}
            \draw[thick,{Stealth[length=2.5mm]}-{Stealth[length=2.5mm]}] (0,0) -- (15,0);

            \fill[CB5blue] (3.5,0) circle (2.5pt);
            \fill[CB5red] (11.5,0) circle (2.5pt);
            \node[below, CB5blue] at (3.5,1.2) {\shortstack{Risk under \\[.25em]\Cref*{alg:rkhs_est}}};
            \node[below, CB5red] at (11.5,1.2) {\shortstack{Risk of\\[.25em]optimal estimator}};

            \draw[line width = 1.25pt, CB5blue] (5.9, 0.2) -- (6.02164, 0.2);
            \draw[line width = 1.25pt, CB5blue] (6, 0.2) -- (6, -0.2); 
            \draw[line width = 1.25pt, CB5blue] (5.9, -0.2) -- (6.02164, -0.2);

            \draw[line width = 1.25pt, CB5red] (8.47836, 0.2) -- (8.6, 0.2);
            \draw[line width = 1.25pt, CB5red] (8.5, 0.2) -- (8.5, -0.2); 
            \draw[line width = 1.25pt, CB5red] (8.47836, -0.2) -- (8.6, -0.2);
            

            
            \node[below, CB5blue] at (6, -0.5) {\shortstack{UB,\\ \Cref*{thm:rkhs_excess}}};
            \node[below, CB5red] at (8.5, -0.5) {\shortstack{LB,\\ \Cref*{thm:minimax_bound}}};
            
            \node[below, CB5blue] at (3.5, 2) {\bf With data fusion};
            \node[below, CB5red] at (11.5, 2) {\bf Without data fusion};
        \end{tikzpicture}
    \end{adjustbox}
\caption{Number line demonstrating approach for establishing data fusion necessarily improves worst-case performance. The key idea is to show that a \textbf{\color{CB5blue}high-probability worst-case upper bound (UB)} of the risk for kernel ridge regression with data fusion lies below a \textbf{\color{CB5red}high-probability worst-case lower bound (LB)} for any estimator without it.}
\label{fig:maxrisk_line}
\end{figure}
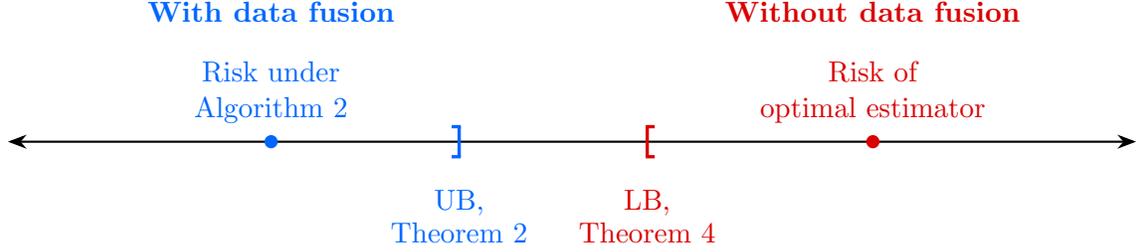

\subsection{Excess Risk Bound}
This subsection establishes oracle excess risk bounds for the estimators obtained from \Cref{alg:gen_est,alg:rkhs_est}. We show that, for a general function class $\Theta \subseteq L^{2}(\mu)$, the nuisance parameter estimation error affects the excess risk only at the fourth order. As a result, when the nuisance estimation error is sufficiently small, the oracle estimation error is dominated by the target estimation error.

Let $\theta^{*} \in \arg\min_{\theta \in \Theta} L_{P^{0}}(\theta, g_{0})$ denote a minimizer of the oracle risk. Throughout, we equip $\Theta$ with the $L^{2}(\mu)$ norm and, depending on the context, $\mathcal{G}$ with either the $L^{2}(Q^{0}_{X} \times \mu; \ell^{2})$ or $L^{4}(Q^{0}_{X} \times \mu; \ell^{2})$ norm.  
For $g \in \mathcal{G}$ and $1 \le p < \infty$, we define
\[
  \|g\|_{L^{p}(Q^{0}_{X} \times \mu; \ell^{2})}
  := \left( \E_{(X, A) \sim Q^{0}_{X} \times \mu}\!\big[\|g(X, A)\|_{2}^{p}\big] \right)^{1/p}.
\]
The nuisance estimation error of $\widehat{g}$ is then quantified as 
$\|\widehat{g} - g_{0}\|_{L^{2}(Q^{0}_{X} \times \mu; \ell^{2})}$ 
or 
$\|\widehat{g} - g_{0}\|_{L^{4}(Q^{0}_{X} \times \mu; \ell^{2})}$.

We next show that the oracle excess risk, $L_{P^{0}}(\widehat{\theta}, g_{0}) - L_{P^{0}}(\theta^{*}, g_{0})$, is bounded by the sum of the excess risk evaluated at $\widehat{g}$, $L_{P^{0}}(\widehat{\theta}, \widehat{g}) - L_{P^{0}}(\theta^{*}, \widehat{g})$, and a fourth-order term quantifying the nuisance estimation error, 
where $\widehat{\theta}$ and $\widehat{g}$ are the estimators produced by \Cref{alg:gen_est}.

\begin{lemma}[Oracle Excess Risk Bound in Terms of Target and Nuisance Errors]\label{lem:general}
Assume \Cref{cond:identifiability,cond:target_model,cond:nuisances} hold. If $\widehat{\theta}$ and $\widehat{g}$ are the target estimator and nuisance estimators from \Cref{alg:gen_est}, then
\begin{align*}
L_{P^{0}}(\widehat{\theta}, g_{0}) - L_{P^{0}}(\theta^{*}, g_{0}) \leq 2 \left( L_{P^{0}}(\widehat{\theta}, \widehat{g}) - L_{P^{0}}(\theta^{*}, \widehat{g}) \right) + 2M_{\lambda}^{2} \| \widehat{g} - g_{0} \|_{L^{4}(Q^{0}_{X} \times \mu; \ell^{2})}^{4}.
\end{align*}
\end{lemma}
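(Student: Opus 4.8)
The plan is to run the ``Neyman orthogonality $\Rightarrow$ second-order remainder $\Rightarrow$ strong-convexity absorption'' argument of \citet{foster_orthogonal_2023}. Condition throughout on the nuisance fold $z_1^n$, so that $\widehat g$ and $\widehat\theta$ may be treated as deterministic, and write $\delta := \widehat\theta - \theta^*$, $\Delta := \widehat g - g_0$, and $g_t := g_0 + t\Delta$ for $t\in[0,1]$. Since every constraint in \Cref{cond:nuisances} is an interval constraint, $\mathcal G$ is convex, so $g_t\in\mathcal G$ and the uniform bounds ($|m|\le1$, $|\tau|\le1$, $M_w^{-1}\le w\le M_w$, $\xi\le M_\xi$, $\eta\le M_\eta$) hold along the whole path. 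Set
\[
  F(t) := L_{P^{0}}(\widehat\theta, g_t) - L_{P^{0}}(\theta^{*}, g_t),
\]
so that the left-hand side of the lemma is $F(0)$, the first term on the right is $F(1)$, and it remains to show $F(0)\le 2F(1)+2M_\lambda^2\|\Delta\|_{L^4(Q^0_X\times\mu;\ell^2)}^4$.

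Two structural facts drive the argument. First, $\ell_{P^{0}}(\theta,g;z)$ is affine in $\theta$ up to the $g$-free term $\theta(b)^2$, which cancels in $F(t)$; hence $F(t)=\mathrm{const}+\E_{\widetilde P^0}[A(\delta,g_t;Z)]$, where $A(\cdot,g;z)$ is the part of $\ell$ linear in its first slot, and $F$ is moreover a polynomial in $t$ of degree at most three, hence $C^2$. Second, for each fixed $\theta$ the map $t\mapsto L_{P^0}(\theta,g_t)$ has vanishing derivative at $t=0$: this is precisely the Neyman orthogonality property by which the loss \eqref{eq:loss} was constructed (its empirical mean is a debiased/one-step estimator of $L_{P^0}(\theta)$, whose first-order sensitivity to the nuisance vanishes), and it holds for \emph{every} $\theta$, not merely for $\theta^*$. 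Applying it to $\widehat\theta$ and $\theta^*$ gives $F'(0)=0$, so Taylor's theorem with integral remainder yields $F(0)-F(1)=-\int_0^1(1-t)F''(t)\,dt$ and therefore $|F(0)-F(1)|\le\tfrac12\sup_{t\in[0,1]}|F''(t)|$.

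The heart of the proof is the bound on $\sup_t|F''(t)|$. Differentiating $A(\delta,g_t;z)$ twice, $F''(t)$ is (four times) a sum of expectations of $\delta(B)$ or $\delta(A)$ against a product of exactly two coordinates of $\Delta$ (each coordinate enters $\ell$ affinely), times a uniformly bounded factor --- except for the single term carrying $m(X,A;t)-Y$. For that term I would first condition on $(X,A,S)$ and use $\E_{P^{0}}[Y\mid X,A,S\in\mathcal S_Y]=m_0(X,A)$ (from \Cref{cond:common}) to replace $m(X,A;t)-Y$ by its conditional mean $t\Delta_m(X,A)$; this step is essential since $Y$ is only sub-exponential. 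Next I would pass from the relevant $P^{0}$-conditional law to $Q^{0}_X\times\mu$: for the $\mathcal S_X$-pieces via $P^{0}_X(\cdot\mid S\in\mathcal S_X)=Q^{0}_X$ (\Cref{cond:common}), and for the $\mathcal S_Y$-pieces via the identity $\E_{P^{0}}[\mathbf{1}(S\in\mathcal S_Y)\,w_0(X,A)\,h(X,A)]=P^{0}(S\in\mathcal S_Y)\,\E_{Q^{0}_X\times\mu}[h(X,A)]$ (immediate from the definition of $w_0$), together with $M_w^{-1}\le w_0\le M_w$ (\Cref{cond:overlap,cond:dmu-dqax-bdd}) to absorb any residual $w_0$-factor. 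Finally, Cauchy--Schwarz in $L^{2}(Q^{0}_X\times\mu)$ (with $\|\delta\|_{L^2(Q^0_X\times\mu)}=\|\delta\|_{L^2(\mu)}$), the pointwise bound $|\delta|\le2$, the uniform bounds of \Cref{cond:nuisances}, and the elementary inequality $\E_{Q^0_X\times\mu}[\Delta_i^2\Delta_j^2]\le\tfrac14\|\Delta\|_{L^4(Q^0_X\times\mu;\ell^2)}^4$ give
\[
  \sup_{t\in[0,1]}|F''(t)|\;\le\;2\sqrt{2}\,M_{\lambda}\,\|\delta\|_{L^{2}(\mu)}\,\|\Delta\|_{L^{4}(Q^{0}_X\times\mu;\,\ell^{2})}^{2},
\]
the constant collapsing to $2\sqrt2 M_\lambda$ once one tracks which coordinates get bounded (contributing factors $M_\eta$, $M_w$, or the ``$+2$'' from $|\Delta_m|,|\Delta_\tau|\le2$) and which are kept as errors.

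To close, Young's inequality turns the last display into $|F(0)-F(1)|\le\tfrac12\|\delta\|_{L^2(\mu)}^2+M_\lambda^2\|\Delta\|_{L^4(Q^0_X\times\mu;\ell^2)}^4$. Since the oracle risk is $L_{P^{0}}(\theta,g_0)=\|\theta-\theta_0\|_{L^{2}(\mu)}^2$, a squared $L^2(\mu)$-distance, and $\theta^*$ minimizes it over the convex class $\Theta$, the first-order optimality of $\theta^*$ gives $F(0)=L_{P^{0}}(\widehat\theta,g_0)-L_{P^{0}}(\theta^*,g_0)=\|\delta\|_{L^2(\mu)}^2+2\langle\delta,\theta^*-\theta_0\rangle_{L^2(\mu)}\ge\|\delta\|_{L^2(\mu)}^2$. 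Hence $F(0)=F(1)+(F(0)-F(1))\le F(1)+\tfrac12 F(0)+M_\lambda^2\|\Delta\|^4_{L^4(Q^0_X\times\mu;\ell^2)}$, and rearranging yields $F(0)\le2F(1)+2M_\lambda^2\|\Delta\|_{L^4(Q^0_X\times\mu;\ell^2)}^4$, which is the claim (the inequality is deterministic given $z_1^n$, so it holds unconditionally). The main obstacle is the $F''$ estimate of the third paragraph: conditioning the unbounded $Y$ out of the one exceptional term, executing the two flavors of change of measure to $Q^0_X\times\mu$, and bookkeeping all the boundedness constants so as to land precisely on the coefficient $M_\lambda$.
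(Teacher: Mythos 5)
Your proposal is correct and is essentially the paper's own argument in lightly repackaged form: the path function $F(t)=L_{P^0}(\widehat\theta,g_0+t\Delta)-L_{P^0}(\theta^*,g_0+t\Delta)$ with $F'(0)=0$ and an integral-remainder Taylor bound on $F''$ is exactly the paper's Taylor expansion of $\mathcal{D}_\theta L_{P^0}(\theta^*,\cdot)[\widehat\theta-\theta^*]$ around $g_0$, with your $F''$ estimate coinciding with the higher-order smoothness bound (constant $M_\lambda$) of \Cref{lem:neyman}, followed by the same Neyman-orthogonality, Young's-inequality, strong-convexity, and first-order-optimality steps. Your observation that orthogonality holds at every fixed $\theta$ (not just $\theta^*$) is true but not needed, and your slightly larger constant $2\sqrt{2}M_\lambda$ still lands within the stated bound $2M_\lambda^2\|\widehat g-g_0\|_{L^4(Q^0_X\times\mu;\ell^2)}^4$.
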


The proof is given in the appendix. It relies on the Neyman orthogonality of the loss, as well as second-order smoothness, strong convexity, and higher-order smoothness conditions, as stated in \Cref{lem:neyman}. This lemma shows that the oracle excess risk is bounded by the sum of two components: the target estimation error and the fourth power of the nuisance estimation error. Thus, when the nuisance estimation error is smaller than the one-fourth power of the target estimation error, the excess risk bound is dominated by the target estimation error alone.

We now analyze the oracle excess risk for the estimators produced by \Cref{alg:gen_est,alg:rkhs_est}. The difficulty of estimating the CDRF depends on the complexity of the function class $\Theta$, which we quantify via the local Rademacher complexity. For a measure $\mathcal{D}$, defined as
\begin{align*}
\mathcal{R}_{n, \mathcal{D}}(\Theta, \delta) := \E_{\epsilon_{1:n}, a_{1:n} \sim \mathcal{D}} \left[ \sup_{\theta \in \Theta: \|\theta\|_{L^{2}(\mu)} \leq \delta} \left| \frac{1}{n} \sum_{i=1}^{n} \epsilon_{i} \theta(a_{i}) \right| \right],
\end{align*}
where $\epsilon_1, \ldots, \epsilon_n$ are independent Rademacher random variables. 

For any $\theta' \in \Theta$, define the star hull of $\Theta$ around $\theta'$ as
\begin{align*}
\text{star}(\Theta, \theta') := \left\{ t\theta + (1 - t)\theta' \mid \theta \in \Theta,\, t \in [0, 1] \right\}.
\end{align*}

We now present the oracle excess risk bound for the estimator obtained from \Cref{alg:gen_est}. To control deviations of the empirical loss from its expectation, we apply Talagrand’s concentration inequality, leveraging the fact that the loss is Lipschitz in its first argument, with a Lipschitz constant that depends on the true nuisance parameters with high probability. 
We define
\begin{align*}
  B_{P^{0}}(\delta)
  := 4(1 + \delta)^{2}\!\bigl(1 + \xi_{0} + C_{\sigma,\delta}\, \eta_{0}\|w_{0}\|_{\infty}\bigr),  
\end{align*}
where
\begin{align*}
  C_{\sigma,\delta}
  := 1 + \max\!\left\{\sigma\sqrt{2\log(8/\delta)},\, 2L\log(8/\delta)\right\},  
\end{align*}
and $\|\cdot\|_{\infty}$ denotes the essential supremum with respect to $Q^{0}_{X} \times \mu$.  
Throughout, we write $A \lesssim B$ to mean that $A \le C B$ for some universal constant $C > 0$.

\begin{theorem}[Excess Risk Bound for \Cref{alg:gen_est}]\label{thm:excess_risk}
Suppose \Cref{cond:identifiability,cond:target_model,cond:nuisances} hold. Fix $\delta \in (0,1)$, and let $(\widehat{\theta}, \widehat{g})$ be as in \Cref{alg:gen_est}, with 
$\|\widehat{g} - g_{0}\|_{L^{4}(Q^{0}_{X} \times \mu; \ell^{2})} = o_{p}(1)$ and $\|\widehat{w} - w_{0}\|_{L^{\infty}(Q^{0}_{X} \times \mu; \ell^{2})} = o_{p}(1)$.
Suppose $\delta_n^{2} = \Omega\!\left(n^{-1}\log\log n\right)$, and let $\delta_n$ be any solution to the inequality
\[
\max\Big\{ \mathcal{R}_{n, Q^{0}}\big(\mathrm{star}(\Theta - \theta^{*}, 0), r\big),\ 
           \mathcal{R}_{n, \mu}\big(\mathrm{star}(\Theta - \theta^{*}, 0), r\big) \Big\} \le r^{2}.
\]
Then there exists $N(\delta) \in \mathbb{N}$ such that, for all $n \ge N(\delta)$, the following holds with probability at least $1 - \delta/2$: for any realization $z := (x,a,b,y,s)$ of $Z \sim \widetilde{P}^{0}$ and any $\theta_{1}, \theta_{2} \in \Theta$,
\begin{align}\label{eq:lipschitz_bound}
\big| \ell_{P^{0}}(\theta_{1}, \widehat{g}; z) - \ell_{P^{0}}(\theta_{2}, \widehat{g}; z) \big|
\le B_{P^{0}}(\delta)\, \big\| \big( \theta_{1}(b) - \theta_{2}(b),\ \theta_{1}(a) - \theta_{2}(a) \big) \big\|_{2}.
\end{align}
Furthermore, with probability at least $1 - \delta$, the oracle excess risk satisfies
\[
L_{P^{0}}(\widehat{\theta}, g_{0}) - L_{P^{0}}(\theta^{*}, g_{0})
\;\lesssim\; B_{P^{0}}(\delta)^{2} \left( \delta_{n}^{2} + \frac{\log(1/\delta)}{n} \right)
\;+\; 2 M_{\lambda}^{2} \big\| \widehat{g} - g_{0} \big\|_{L^{4}(Q^{0}_{X} \times \mu; \ell^{2})}^{4}.
\]
\end{theorem}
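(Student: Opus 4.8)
The plan is to derive the two conclusions from three pieces: the pointwise Lipschitz estimate \eqref{eq:lipschitz_bound}, a generic oracle inequality for strongly convex Neyman-orthogonal losses \citep{foster_orthogonal_2023} whose second-order smoothness, strong convexity, and higher-order smoothness hypotheses are furnished by \Cref{lem:neyman}, and \Cref{lem:general} to convert the excess risk measured at the plugged-in nuisance $\widehat g$ into the oracle excess risk at $g_0$; throughout the target step everything is conditioned on the nuisance subsample $z_1^n$, so $\widehat g$ is fixed and $\widehat\theta$ is an empirical risk minimizer over $\Theta$ for the loss $\ell_{P^0}(\cdot,\widehat g;\cdot)$. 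For \eqref{eq:lipschitz_bound} I would expand $\ell_{P^0}(\theta_1,\widehat g;z)-\ell_{P^0}(\theta_2,\widehat g;z)$ term by term: the squared term contributes $(\theta_1(b)-\theta_2(b))\bigl((\theta_1(b)-\widehat\tau(b))+(\theta_2(b)-\widehat\tau(b))\bigr)$, bounded by $4\,|\theta_1(b)-\theta_2(b)|$ since $\|\theta_i\|_{L^\infty(\mu)}\le1$ and $|\widehat\tau|\le1$ (\Cref{cond:bounded-w-m}); the $\widehat\xi$ cross term contributes at most $4\widehat\xi\,|\theta_1(b)-\theta_2(b)|$ via $|\widehat\tau(b)-\widehat m(x,b)|\le2$; and the $\widehat\eta\widehat w$ term contributes at most $2\widehat\eta\|\widehat w\|_\infty(2+|y-\theta_0(a)|)\,|\theta_1(a)-\theta_2(a)|$ using $|\widehat m|\le1$, $|\theta_0|\le1$. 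Cauchy--Schwarz in the $(b,a)$ coordinates then yields a Lipschitz constant of the form $\mathrm{const}\cdot\bigl(1+\widehat\xi+\widehat\eta\|\widehat w\|_\infty(1+|y-\theta_0(a)|)\bigr)$, and two reductions turn this into $B_{P^0}(\delta)$: first, \Cref{cond:subexp-y} with Bernstein's inequality gives $|Y-\theta_0(A)|\le C_{\sigma,\delta}-1$ with probability at least $1-\delta/4$, which is where $C_{\sigma,\delta}$ enters and why the bound cannot hold for literally every realization $z$; second, the hypotheses $\|\widehat g-g_0\|_{L^4(Q^0_X\times\mu;\ell^2)}=o_p(1)$ and $\|\widehat w-w_0\|_{L^\infty}=o_p(1)$ force $\widehat\xi\to\xi_0$, $\widehat\eta\to\eta_0$, $\|\widehat w\|_\infty\to\|w_0\|_\infty$ in probability, so there is $N(\delta)$ with, for $n\ge N(\delta)$, $\widehat\xi\le(1+\delta)\xi_0$, $\widehat\eta\le(1+\delta)\eta_0$, $\|\widehat w\|_\infty\le(1+\delta)\|w_0\|_\infty$ with probability at least $1-\delta/4$. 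On the intersection of these events, of probability at least $1-\delta/2$, the Lipschitz constant is at most $B_{P^0}(\delta)$.

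For the excess risk bound, since $\widehat\theta$ minimizes $\mathbb{P}_n^{(2)}[\ell_{P^0}(\cdot,\widehat g;\cdot)]$ over $\Theta$ one has $L_{P^0}(\widehat\theta,\widehat g)-L_{P^0}(\theta^*,\widehat g)\le (L_{P^0}-\mathbb{P}_n^{(2)})[\ell_{P^0}(\widehat\theta,\widehat g;\cdot)-\ell_{P^0}(\theta^*,\widehat g;\cdot)]$, and I would control this empirical process by the standard localized-complexity argument. The Lipschitz property \eqref{eq:lipschitz_bound} lets Ledoux--Talagrand contraction reduce the Rademacher complexity of the loss-difference class over the star hull $\mathrm{star}(\Theta-\theta^*,0)$ to $B_{P^0}(\delta)$ times the Rademacher complexity of $\Theta-\theta^*$ on both the $\mu$-distributed samples $b_i$ and the $Q^0$-distributed samples $a_i$, which is why the critical inequality carries $\max\{\mathcal{R}_{n,Q^0},\mathcal{R}_{n,\mu}\}$. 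The map $\theta\mapsto L_{P^0}(\theta,\widehat g)$ is $2$-strongly convex in $\|\cdot\|_{L^2(\mu)}$ because the only quadratic-in-$\theta$ part of $\ell_{P^0}$ is $(\theta(b)-\widehat\tau(b))^2$, whose $\widetilde P^0$-expectation equals $\|\theta-\widehat\tau\|_{L^2(\mu)}^2$; this, together with Talagrand's concentration inequality and a peeling argument over the scales $r$ — which terminates precisely because $\delta_n^2=\Omega(n^{-1}\log\log n)$ — upgrades the in-expectation bound to the critical-radius rate, and solving the resulting quadratic inequality gives $L_{P^0}(\widehat\theta,\widehat g)-L_{P^0}(\theta^*,\widehat g)\lesssim B_{P^0}(\delta)^2\bigl(\delta_n^2+\log(1/\delta)/n\bigr)$ with probability at least $1-\delta/2$. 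Substituting this into \Cref{lem:general} and taking a union bound with the event of the previous paragraph yields the stated bound with probability at least $1-\delta$.

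The main obstacle is the sharp Lipschitz constant \eqref{eq:lipschitz_bound}: because $Y$ is only sub-exponential rather than bounded, the term carrying $\widehat\eta\widehat w$ involves the unbounded quantity $|\widehat m(x,a)-y|$, so one must simultaneously extract the exact tail dependence $C_{\sigma,\delta}$ from \Cref{cond:subexp-y} and replace the estimated nuisance magnitudes $\widehat\xi,\widehat\eta,\|\widehat w\|_\infty$ by the true ones $\xi_0,\eta_0,\|w_0\|_\infty$ at the price of the $(1+\delta)^2$ factor; it is this replacement, valid only once the $o_p(1)$ consistency has taken effect, that forces the qualifier ``there exists $N(\delta)$''. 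Once \eqref{eq:lipschitz_bound} is in hand, the remaining steps are a routine instantiation of the orthogonal-learning oracle inequality whose structural conditions are verified in \Cref{lem:neyman} and \Cref{lem:general}.
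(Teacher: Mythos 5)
Your proposal is correct and follows essentially the same route as the paper: the identical term-by-term expansion and high-probability events (sub-exponential tail for $|Y-\theta_{0}(A)|$ plus consistency of $\widehat{\xi},\widehat{\eta},\widehat{w}$ to absorb the $(1+\delta)^{2}$ factor) for the Lipschitz bound \eqref{eq:lipschitz_bound}, followed by the orthogonal-learning oracle inequality of \citet{foster_orthogonal_2023} (which the paper invokes directly as their Theorem~3 after recasting $\theta$ as the vector-valued $\boldsymbol{\theta}(a,b)=(\theta(a),\theta(b))$, whereas you sketch its internal localization/contraction/peeling argument) together with \Cref{lem:neyman}, \Cref{lem:general}, and a union bound.
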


Next, we consider the excess risk bound for kernel ridge regression.  
We follow the setup introduced in \citet{nie_quasi-oracle_2021, zhang_optimality_2023}. Assume a separable RKHS $\mathcal{H}$ on the compact set $\mathcal{A}$ with a strictly positive-definite, continuous kernel $\mathcal{K}\!:\! \mathcal{A} \times \mathcal{A} \to \mathbb{R}$.  
Define the integral operator $T_{\mathcal{K}}\!:\! L^{2}(\mu) \to L^{2}(\mu)$ by
\begin{align*}
  (T_{\mathcal{K}}\theta)(a)
  := \int_{\mathcal{A}} \mathcal{K}(a, t)\, \theta(t)\, d\mu(t).
\end{align*}

This operator is self-adjoint, positive, and compact \citep{steinwart_mercers_2012}.
By the spectral theorem for self-adjoint compact operators and Mercer's decomposition \citep{cucker_mathematical_2001},
\begin{align*}
k(a, t) = \sum_{j=1}^{\infty} \sigma_{j} e_{j}(a) e_{j}(t), \hspace{3em}
T_{\mathcal{K}}(\cdot) = \sum_{j=1}^{\infty} \sigma_{j} \langle \cdot, e_{j} \rangle_{L^{2}(\mu)} e_{j},
\end{align*}
where $\{ \sigma_{j} \}_{j=1}^{\infty}$ is a non-increasing summable sequence of eigenvalues and $\{ e_{j} \}_{j=1}^{\infty}$ is a corresponding $L^{2}(\mu)$-orthonormal system of eigenfunctions. For $s \geq 0$, we denote the fractional integral operator $T_{\mathcal{K}}^{s}: L^{2}(\mu) \rightarrow L^{2}(\mu)$ and its image, called an $s$-power space $\mathcal{H}^{s}$ of $\mathcal{H}$, as
\begin{align*}
T_{\mathcal{K}}^{s}(\cdot) := \sum_{j=1}^{\infty} \sigma_{j}^{s} \langle \cdot, e_{j} \rangle_{L^{2}(\mu)} e_{j}, \hspace{1cm}
\mathcal{H}^{s} := \left\{ \sum_{j=1}^{\infty} \sigma_{j}^{s/2} a_{j} e_{j} \, : \, \sum_{j=1}^{\infty} a_{j}^{2} < \infty \right\}.
\end{align*}

It holds that $\mathcal{H}^{1} \subseteq \{[h] : h\in \mathcal{H}\}$, with $[h]$ a $\mu$-equivalence class of functions. Moreover, $0 < s_{1} < s_{2}$, $\mathcal{H}^{s_{2}} \subseteq \mathcal{H}^{s_{1}} \subseteq \mathcal{H}^{0}\subseteq L^2(\mu)$. Functions in $\mathcal{H}^{s}$ with larger $s$ are naturally interpreted as having higher smoothness with respect to the RKHS $\mathcal{H}$.

Moreover, we impose the following assumptions on the $\mathcal{H}$ and its associated kernel $\mathcal{K}$.

\begin{condition}[RKHS Conditions]\label{cond:rkhs}
The following conditions hold:
\begin{enumcond}
     \item \text{(Kernel function)}\label{cond:kernel}
    The reproducing kernel $\mathcal{K}:\mathcal{A}\times\mathcal{A} \to \mathbb{R}$ is continuous and strictly positive definite, so that the Gram matrix is invertible for distinct points.
    
    \item \text{(Eigenvalue decay at least polynomial)}\label{cond:evd}  
    There exist constants $p \in (0, 1)$ and $G < \infty$ such that the eigenvalues $\{\sigma_j\}_{j=1}^\infty$ of the kernel integral operator satisfy $\sigma_{j} \leq G j^{-1/p}$ for all $j\ge 1$. 
    
    \item \text{(Bounded eigenfunctions)}\label{cond:bounded_eigenf} The eigenfunctions $e_{j}$ corresponding to $\sigma_j$ are uniformly bounded by a constant $M_{e} < \infty$ not depending on $j$.
\end{enumcond}
\end{condition}

Condition~\ref{cond:kernel} ensures the strict positive-definiteness of the kernel, and hence the invertibility of the Gram matrix for distinct inputs, guaranteeing the closed-form estimator in \Cref{alg:rkhs_est}.  
By the spectral theorem (e.g., Theorem 2 of \citealp{cucker_mathematical_2001}),  
the eigenvalues of a compact, self-adjoint, positive operator $T_{\mathcal{K}}$ converge to zero.  
Condition~\ref{cond:evd} further requires that this decay is at least polynomial,  
$\sigma_{j} \lesssim j^{-1/p}$ with $p \in (0,1)$.  
The eigenvalue-decay rate reflects the smoothness of the kernel and of the associated RKHS (see, e.g., \citealp{rasmussen_gaussian_2005}).  
Finally, Condition~\ref{cond:bounded_eigenf} assumes that the eigenfunctions are uniformly bounded to control the local Rademacher complexity (cf. Lemma 5 in \citet{nie_quasi-oracle_2021}).

The following condition imposes both that $\mathcal{Q}$ is not too large-satisfying both the moment conditions from \Cref{cond:target_model} and an additional smoothness condition-and also is large enough so that estimating $\theta_{0}$ is challenging.
\begin{condition}[Characterization of $\mathcal{Q}$]\label{cond:src}  
    $\mathcal{Q}$ consists of all distributions satisfying \Cref{cond:target_model} and the following source condition: there exist constants $\alpha \in (0, 1/2)$ and $R < \infty$ such that, for any $Q^{0} \in \mathcal{Q}$, the CDRF $\theta_{0}$ satisfies
    \[
    \| T_{\mathcal{K}}^{\alpha} \theta_{0} \|_{\mathcal{H}} \leq R. 
    \]
\end{condition}

\Cref{cond:src} is referred to as a \textit{source condition} and requires that the true CDRF be sufficiently smooth, belonging to a $(1 - 2\alpha)$-power space where $0 < 1 - 2\alpha < 1$ \citep{fischer_sobolev_2020}.

Given a nuisance estimator $\widehat{g}$, the use of kernel ridge regression in \Cref{alg:rkhs_est} produces an estimator $\widehat{\theta}$. Since the penalty parameter $\lambda$ is chosen by cross-validation, it is random, and consequently the corresponding ball radius $c$ is also random. Nonetheless, as shown by \citet{van_der_laan_unified_2003,mitchell_general_2009}, the cross-validation selector is asymptotically equivalent to the oracle selector. At the population level, Algorithm~\ref{alg:rkhs_est} is equivalent to empirical risk minimization over the constrained RKHS ball $\mathcal{H}_{c}$ centered at the origin with radius $c \ge 1$, which depends only on $\lambda$. Combining these facts, the oracle constrained RKHS ball minimizer is asymptotically equivalent to the estimator from \Cref{alg:rkhs_est}. For clarity, we focus on this constrained formulation throughout the remainder of the paper.

\begin{theorem}[Excess Risk Bound for \Cref{alg:rkhs_est}]\label{thm:rkhs_excess}
Assume that \Cref{cond:identifiability,cond:target_model,cond:nuisances,cond:rkhs,cond:src} hold.  
Fix $\delta \in (0,1)$, and let $(\widehat{\theta}, \widehat{g})$ be as in \Cref{alg:rkhs_est}, such that 
$\|\widehat{g} - g_{0}\|_{L^{2}(Q^{0}_{X} \times \mu; \ell^{2})} = o_{p}(1)$ and 
$\|\widehat{w} - w_{0}\|_{L^{\infty}(Q^{0}_{X} \times \mu; \ell^{2})} = o_{p}(1)$.  
Then there exists $N(\delta) \in \mathbb{N}$ such that, for all $n \ge N(\delta)$, with probability at least $1 - \delta$,
\begin{align*}
  &L_{P^{0}}(\widehat{\theta}, g_{0}) - L_{P^{0}}(\theta^{*}, g_{0}) \\
  &\qquad\lesssim B_{P^{0}}(\delta)^{2}
    \!\left(
      c^{\frac{2p}{1+p}}(n\log(n)^{-2})^{-\frac{1}{1+p}}
      + \frac{\log(1/\delta)}{n}
    \right)
    + M_{\lambda}^{2}c^{\frac{2p}{1+p}}
      \|\widehat{g} - g_{0}\|_{L^{2}(Q^{0}_{X} \times \mu; \ell^{2})}^{\frac{4}{1+p}}.
\end{align*}
Moreover, suppose $\widehat{g}$ satisfies 
$\|\widehat{g} - g_{0}\|_{L^{2}(Q^{0}_{X} \times \mu; \ell^{2})}
   = o_{p}\!\big((n\log(n)^{-2})^{-1/4}\big)$,
and the regularization parameter is chosen as
\[
  c \propto
  \bigl(B_{P^{0}}(\delta)^{-2(1+p)}n\log(n)^{-2}\bigr)^{\frac{\alpha}{p + (1 - 2\alpha)}}.
\]
Then there exists $N'(\delta)\in\mathbb{N}$ such that, for all $n \ge N'(\delta)$, with probability at least $1 - \delta$,
\[
  L_{P^{0}}(\widehat{\theta}, g_{0}) = L_{P^{0}}(\widehat{\theta}, g_{0})
  - L_{P^{0}}(\theta_{0}, g_{0})
  \;\lesssim\;
  \bigl(
    B_{P^{0}}(\delta)^{-2(1+p)}n\log(n)^{-2}
  \bigr)^{-\frac{1 - 2\alpha}{p + (1 - 2\alpha)}}.
\]
\end{theorem}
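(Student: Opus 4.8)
The plan is to combine the localized empirical-process machinery already behind \Cref{thm:excess_risk} (now specialized to $\Theta=\mathcal H_{c}$) with an RKHS-specific treatment of the nuisance-induced bias. The starting point is that at the true nuisance the risk is an honest squared error, $L_{P^{0}}(\theta,g_{0})=\|\theta-\theta_{0}\|_{L^{2}(\mu)}^{2}$, so $L_{P^{0}}(\theta_{0},g_{0})=0$, the population loss is $2$-strongly convex in $\theta$ on $L^{2}(\mu)$, and $\theta^{*}$ is the $L^{2}(\mu)$-projection of $\theta_{0}$ onto the convex set $\mathcal H_{c}$; hence $L_{P^{0}}(\theta,g_{0})-L_{P^{0}}(\theta^{*},g_{0})\ge\|\theta-\theta^{*}\|_{L^{2}(\mu)}^{2}$ for every $\theta\in\mathcal H_{c}$. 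I would condition throughout on the fold producing $\widehat g$ (treating $\widehat g$ as fixed) and use that, under $\|\widehat g-g_{0}\|_{L^{2}}=o_{p}(1)$ and $\|\widehat w-w_{0}\|_{L^{\infty}}=o_{p}(1)$, the Lipschitz bound \eqref{eq:lipschitz_bound} holds with probability at least $1-\delta/2$ once $n\ge N(\delta)$.

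On that event I would run the argument behind \Cref{thm:excess_risk} with $\Theta=\mathcal H_{c}$: since $\widehat\theta$ minimizes the empirical loss over $\mathcal H_{c}$, $\mathbb P^{(2)}_{n}\ell_{P^{0}}(\widehat\theta,\widehat g;\cdot)\le\mathbb P^{(2)}_{n}\ell_{P^{0}}(\theta^{*},\widehat g;\cdot)$, and combining the $B_{P^{0}}(\delta)$-Lipschitzness of the loss in $(\theta(b),\theta(a))$ with the Ledoux--Talagrand contraction principle, Talagrand's concentration inequality, and localization/peeling over $\mathrm{star}(\mathcal H_{c}-\theta^{*},0)$ yields $L_{P^{0}}(\widehat\theta,\widehat g)-L_{P^{0}}(\theta^{*},\widehat g)\lesssim B_{P^{0}}(\delta)^{2}(\delta_{n}^{2}+\log(1/\delta)/n)$, where $\delta_{n}$ is the critical radius of $\mathrm{star}(\mathcal H_{c}-\theta^{*},0)$ under $Q^{0}$ and under $\mu$. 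The one genuinely new computation is this critical radius: by Mercer's theorem, \Cref{cond:evd} ($\sigma_{j}\le Gj^{-1/p}$, $p<1$) and \Cref{cond:bounded_eigenf}, the local Rademacher complexity obeys $\mathcal R_{n,\mathcal D}(\{\theta\in\mathcal H:\|\theta\|_{\mathcal H}\le 2c,\ \|\theta\|_{L^{2}(\mu)}\le r\})\lesssim n^{-1/2}(\sum_{j}\min\{r^{2},c^{2}\sigma_{j}\})^{1/2}\lesssim n^{-1/2}c^{p}r^{1-p}$ (splitting the sum at $j^{*}\asymp(c^{2}/r^{2})^{p}$), so $\delta_{n}^{2}\asymp c^{2p/(1+p)}(n\log(n)^{-2})^{-1/(1+p)}$, the logarithmic correction being the usual refinement for a merely bounded (rather than sub-Gaussian) kernel, cf.\ \citet{nie_quasi-oracle_2021,zhang_optimality_2023}.

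Next I would bound $\Delta:=[L_{P^{0}}(\widehat\theta,g_{0})-L_{P^{0}}(\theta^{*},g_{0})]-[L_{P^{0}}(\widehat\theta,\widehat g)-L_{P^{0}}(\theta^{*},\widehat g)]$ \emph{without} invoking \Cref{lem:general} verbatim, since that would only give the crude term $\|\widehat g-g_{0}\|_{L^{4}}^{4}$. Because the loss difference $\ell_{P^{0}}(\theta,g;z)-\ell_{P^{0}}(\theta^{*},g;z)$ is linear in the pair $(\theta(b)-\theta^{*}(b),\theta(a)-\theta^{*}(a))$ with coefficients polynomial in $g$ and uniformly bounded over $\mathcal G$, the Neyman orthogonality of \Cref{lem:neyman} kills the first-order term in $\widehat g-g_{0}$; for the second-order Taylor remainder I would use the Hölder split that pulls $\widehat\theta-\theta^{*}$ out in sup-norm (rather than the $L^{2}$ split used for \Cref{lem:general}), giving $|\Delta|\lesssim M_{\lambda}\,\|\widehat\theta-\theta^{*}\|_{L^{\infty}(\mu)}\,\|\widehat g-g_{0}\|_{L^{2}(Q^{0}_{X}\times\mu;\ell^{2})}^{2}$. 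Then I would apply the interpolation inequality $\|f\|_{L^{\infty}(\mu)}\lesssim\|f\|_{\mathcal H}^{p}\|f\|_{L^{2}(\mu)}^{1-p}$ — valid under \Cref{cond:evd,cond:bounded_eigenf} via Mercer and an optimized spectral truncation — with $f=\widehat\theta-\theta^{*}$ and $\|\widehat\theta-\theta^{*}\|_{\mathcal H}\le 2c$, so $|\Delta|\lesssim M_{\lambda}c^{p}\|\widehat\theta-\theta^{*}\|_{L^{2}(\mu)}^{1-p}\|\widehat g-g_{0}\|_{L^{2}}^{2}$. Feeding this into $L_{P^{0}}(\widehat\theta,g_{0})-L_{P^{0}}(\theta^{*},g_{0})=[L_{P^{0}}(\widehat\theta,\widehat g)-L_{P^{0}}(\theta^{*},\widehat g)]+\Delta$, using $\|\widehat\theta-\theta^{*}\|_{L^{2}(\mu)}^{2}\le L_{P^{0}}(\widehat\theta,g_{0})-L_{P^{0}}(\theta^{*},g_{0})$ from the first paragraph, and solving the resulting self-bounding inequality (equivalently, Young's inequality with conjugate exponents $\tfrac{2}{1-p},\tfrac{2}{1+p}$) leaves the residual $\lesssim M_{\lambda}^{2}c^{2p/(1+p)}\|\widehat g-g_{0}\|_{L^{2}}^{4/(1+p)}$. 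Adding the bound of the previous paragraph gives the first displayed inequality.

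For the second part I would write $L_{P^{0}}(\widehat\theta,g_{0})=[L_{P^{0}}(\widehat\theta,g_{0})-L_{P^{0}}(\theta^{*},g_{0})]+L_{P^{0}}(\theta^{*},g_{0})$, where $L_{P^{0}}(\theta^{*},g_{0})=\inf_{\theta\in\mathcal H_{c}}\|\theta-\theta_{0}\|_{L^{2}(\mu)}^{2}$ is the approximation error of the ball. Under \Cref{cond:src}, $\theta_{0}$ lies (modulo $\mu$-equivalence) in the power space $\mathcal H^{1-2\alpha}$ with $\|\theta_{0}\|_{\mathcal H^{1-2\alpha}}=\|T_{\mathcal K}^{\alpha}\theta_{0}\|_{\mathcal H}\le R$; truncating the Mercer expansion of $\theta_{0}$ at the level where the $\mathcal H$-norm of the truncation equals $c$ (the truncation staying $L^{\infty}(\mu)$-bounded by \Cref{cond:bounded_eigenf} and \Cref{cond:bounded-m}, hence feasible for $\mathcal H_{c}$ for $n$ large) gives $L_{P^{0}}(\theta^{*},g_{0})\lesssim c^{-(1-2\alpha)/\alpha}$. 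Balancing this against the leading stochastic term $B_{P^{0}}(\delta)^{2}c^{2p/(1+p)}(n\log(n)^{-2})^{-1/(1+p)}$ forces the stated $c\propto(B_{P^{0}}(\delta)^{-2(1+p)}n\log(n)^{-2})^{\alpha/(p+(1-2\alpha))}$, at which both equal $(B_{P^{0}}(\delta)^{-2(1+p)}n\log(n)^{-2})^{-(1-2\alpha)/(p+(1-2\alpha))}$; under the assumed rate $\|\widehat g-g_{0}\|_{L^{2}}=o_{p}((n\log(n)^{-2})^{-1/4})$ the residual nuisance term is $o$ of this and $\log(1/\delta)/n$ is of strictly lower order, which gives the final display. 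The \textbf{main obstacle} is the bias step of the third paragraph: selecting the Hölder decomposition that isolates $\widehat\theta-\theta^{*}$ in $L^{\infty}(\mu)$, deploying the RKHS interpolation inequality against the norm budget $2c$, and pushing the self-bounding argument through so that precisely the $c^{2p/(1+p)}\|\widehat g-g_{0}\|_{L^{2}}^{4/(1+p)}$ residual survives; pinning down the exact logarithmic factor in $\delta_{n}$ (hence in $c$ and in the final rate) is a secondary but delicate point.
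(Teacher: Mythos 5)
Your proposal is correct and follows essentially the same route as the paper's proof: the paper likewise handles the nuisance bias by replacing the $L^{4}$ bound of \Cref{lem:general} with a H\"older split that isolates $\widehat{\theta}-\theta^{*}$ in $L^{\infty}(\mu)$, applies the interpolation inequality $\|\theta-\theta^{*}\|_{L^{\infty}(\mu)}\le C_{p}\|\theta-\theta^{*}\|_{\mathcal H}^{p}\|\theta-\theta^{*}\|_{L^{2}(\mu)}^{1-p}$ (cited from Lemma~5.1 of \citealp{mendelson_regularization_2010} rather than rederived), absorbs via the same Young-exponent bookkeeping (delegated to Theorem~3 of \citealp{foster_orthogonal_2023} with $r=p$, $B_{2}=(M_{\lambda}C_{p}c^{p})^{2/(1+p)}$), uses the critical radius $\delta_{n}\propto c^{p/(1+p)}\log(n)^{1/(1+p)}n^{-1/(2(1+p))}$ from Lemma~5 of \citet{nie_quasi-oracle_2021}, and bounds the approximation error by $c^{2-1/\alpha}\|T_{\mathcal K}^{\alpha}\theta_{0}\|_{\mathcal H}^{1/\alpha}$ via \citet{smale_estimating_2003} before balancing. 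The only differences are that you derive from first principles several steps the paper imports by citation, which does not change the argument.
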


\Cref{thm:rkhs_excess} is a special case of \Cref{thm:excess_risk} with $\Theta$ an RKHS ball $\mathcal{H}_c$. The proof derives a valid critical radius via Lemma~5 of \citet{nie_quasi-oracle_2021}, which bounds the local Rademacher complexity in terms of $c$, $\delta$, and $n$. It also requires a weaker condition on the nuisance rates ($L^{2}$ rather than $L^{4}$ in \Cref{thm:excess_risk}). For the second part, the approximation error is absorbed into the excess risk term, yielding an estimation error bound relative to the global minimizer $\theta_{0}$. Moreover, \Cref{thm:rkhs_excess} provides not only a high-probability upper bound on the risk but—under a stronger uniform rate for the nuisance estimators as in \Cref{thm:minimax_bound}—also yields a bound uniform over $\mathcal{P}$, obtained by taking the supremum over $P \in \mathcal{P}$.

\subsection{Efficiency Gain and Prediction Guarantee}

We now show that performing data fusion reduces the excess risk bound. Specifically, we compare the excess risk bounds under two settings: (i) using data fusion to leverage the full sample, and (ii) not using data fusion and instead only using target samples---that is, those from sources in $\mathcal{S}_{X} \cap \mathcal{S}_{Y}$.
When not using data fusion, the (oracle) Neyman-orthogonal loss reduces to:
\begin{align*}
\uline{\ell}_{P^0}(\theta, g_{0}; z) 
&:= \left( \theta(b) - \underline{\tau}_{0}(b) \right)^{2} 
+ 2\underline{\xi}_{0} \mathbf{1}(s \in \mathcal{S}_{X} \cap \mathcal{S}_{Y}) 
\left( \theta(b) - \underline{\tau}_{0}(b) \right) 
\left( \underline{\tau}_{0}(b) - \underline{m}_{0}(x, b) \right) \nonumber \\
&\quad + 2\underline{\eta}_{0} \underline{w}_{0}(x, a) \mathbf{1}(s \in \mathcal{S}_{X} \cap \mathcal{S}_{Y}) 
\left( \theta(a) - \underline{\tau}_{0}(b) \right) 
\left( \underline{m}_{0}(x, a) - y \right),
\end{align*}
where $z := (x,a,b,y,s)$ and the nuisance components are defined as
\begin{align*}
&\underline{\xi}_{0} := \frac{1}{P^{0}(S \in \mathcal{S}_{X} \cap \mathcal{S}_{Y})}, \quad 
\underline{\eta}_{0} := \frac{1}{P^{0}(S \in \mathcal{S}_{X} \cap \mathcal{S}_{Y})}, \quad
\underline{w}_{0}(x, a) := \frac{1}{p^{0}(a \mid x, S \in \mathcal{S}_{X} \cap \mathcal{S}_{Y})}\,, \\
&\underline{m}_{0}(x, a) := \E_{P^0} \left[ Y \mid X = x, A = a, S \in \mathcal{S}_{X} \cap \mathcal{S}_{Y} \right], \quad 
\underline{\tau}_{0}(a) := \E_{P^0} \left[ \underline{m}_{0}(X, a) \mid S \in \mathcal{S}_{X} \cap \mathcal{S}_{Y} \right],
\end{align*}
where $p^{0}(a \mid x, S \in \mathcal{S}_{X} \cap \mathcal{S}_{Y})$ denotes the conditional density of $A$ given $X$ and $S \in \mathcal{S}_{X} \cap \mathcal{S}_{Y}$ with respect to the known measure $\mu$ on $\mathcal{A}$. 

In this restricted setting, we analogously define the nuisance estimators 
$\widehat{\underline{\xi}}, \widehat{\underline{\eta}}, \widehat{\underline{w}}, \widehat{\underline{m}},$ and $\widehat{\underline{\tau}}$.  
Let $(\widehat{\theta}, \widehat{g})$ denote the estimators obtained from \Cref{alg:rkhs_est} with data fusion,  
and $(\underline{\theta}, \underline{g})$ the corresponding estimators without data fusion,  
where $\underline{g}\!:\! (a, x) \mapsto (\widehat{\underline{\xi}}, \widehat{\underline{\eta}}, \widehat{\underline{w}}(a, x), \widehat{\underline{m}}(a, x), \widehat{\underline{\tau}}(a))$.

For \Cref{thm:bound_fusion} below, we impose the same nuisance estimation and cross-validation conditions as in the second statement of \Cref{thm:rkhs_excess}.  
Specifically, suppose the nuisance estimators satisfy
\begin{align*}
  \|\widehat{g} - g_{0}\|_{L^{2}(Q^{0}_{X}\times\mu; \ell^{2})}
  &= o_{p}\!\big((n\log(n)^{-2})^{-1/4}\big),
  &\|\widehat{w} - w_{0}\|_{L^{\infty}(Q^{0}_{X}\times\mu; \ell^{2})}
  &= o_{p}(1), \\
  \|\underline{g} - g_{0}\|_{L^{2}(Q^{0}_{X}\times\mu; \ell^{2})}
  &= o_{p}\!\big((n\log(n)^{-2})^{-1/4}\big),
  &\|\underline{\widehat{w}} - w_{0}\|_{L^{\infty}(Q^{0}_{X}\times\mu; \ell^{2})}
  &= o_{p}(1),
\end{align*}
and the regularization parameter is chosen as
\[
  c \propto
  \Big(
    B_{P^{0}}(\delta)^{-2(1+p)} n\log(n)^{-2}
  \Big)^{\frac{\alpha}{p + (1 - 2\alpha)}}.
\]

\begin{theorem}[Data Fusion Improves the Excess Risk Upper Bound]\label{thm:bound_fusion}
Suppose the nuisance estimators and the regularization parameter satisfy the conditions listed directly above the theorem.  
Further assume that \Cref{cond:identifiability,cond:target_model,cond:nuisances,cond:rkhs,cond:src} hold,  
and that \Cref{cond:identifiability,cond:nuisances} also hold with fusion sets 
$\underline{\mathcal{S}}_{X} = \underline{\mathcal{S}}_{Y} = \mathcal{S}_{X} \cap \mathcal{S}_{Y}$.  

Fix $\delta \in (0,1)$.  
Then there exists $N(\delta)$ such that, for all $n \ge N(\delta)$, the $(1-\delta)$-probability excess risk bounds under data fusion and without data fusion are both of order 
$(n\log(n)^{-2})^{-\frac{1 - 2\alpha}{p + (1 - 2\alpha)}}$.  
Moreover, they satisfy
\[
  \frac{\textnormal{Excess-risk bound under data fusion}}
       {\textnormal{Excess-risk bound without data fusion}}
  =
  \left(
    \frac{1 + \xi_{0} + C_{\sigma,\delta}\,\eta_{0}\|w_{0}\|_{\infty}}
         {1 + \underline{\xi}_{0} + C_{\sigma,\delta}\,\underline{\eta}_{0}\|\underline{w}_{0}\|_{\infty}}
  \right)^{\!\frac{2(1+p)(1 - 2\alpha)}{p + (1 - 2\alpha)}}
  \le 1,
\]
where $C_{\sigma,\delta} := 1 + \max\!\left\{  \sigma\sqrt{2\log(8/\delta)},\, 2L\log(8/\delta) \right\}$.
The inequality is strict if 
\[
  P^{0}\!\big(S \in \mathcal{S}_{X}\setminus\mathcal{S}_{Y}\big)
  + \!\operatorname*{ess\,inf}_{(X,A,Y)\sim P^{0}}
      P^{0}\!\big(S \in \mathcal{S}_{Y}\setminus\mathcal{S}_{X}
      \mid A,X,S\in\mathcal{S}_{Y}\big)
  > 0.
\]
\end{theorem}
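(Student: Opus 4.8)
The plan is to obtain both of the claimed rate bounds as instances of the second statement of \Cref{thm:rkhs_excess} and then to compare the two explicit constants that multiply the common rate. Write $N := n\log(n)^{-2}$ and $\gamma := \tfrac{1-2\alpha}{p+(1-2\alpha)}\in(0,1)$, and note the algebraic identity $\bigl(B^{-2(1+p)}N\bigr)^{-\gamma} = B^{2(1+p)\gamma}N^{-\gamma}$ with $2(1+p)\gamma = \tfrac{2(1+p)(1-2\alpha)}{p+(1-2\alpha)}$, valid for any $B>0$. First I would apply \Cref{thm:rkhs_excess} in the data-fusion setting: its hypotheses are exactly those imposed above the present theorem, so for $n$ past some $N_{1}(\delta)$ there is a $(1-\delta)$-probability event on which $L_{P^{0}}(\widehat{\theta},g_{0}) \lesssim B_{P^{0}}(\delta)^{2(1+p)\gamma}N^{-\gamma}$. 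Next I would re-run the same theorem in the no-fusion setting: with the fusion sets collapsed to $\underline{\mathcal{S}}_{X}=\underline{\mathcal{S}}_{Y}=\mathcal{S}_{X}\cap\mathcal{S}_{Y}$, the general loss~\eqref{eq:loss} coincides with $\uline{\ell}_{P^{0}}$ and $g_{0}$ with its no-fusion counterpart, and by assumption \Cref{cond:identifiability,cond:nuisances} hold for these fusion sets while the nuisance-rate and tuning conditions on $\underline{g}$ are imposed; \Cref{thm:rkhs_excess} then gives, for $n$ past some $N_{2}(\delta)$, on a $(1-\delta)$-probability event, $L_{P^{0}}(\underline{\theta},g_{0}) \lesssim \underline{B}_{P^{0}}(\delta)^{2(1+p)\gamma}N^{-\gamma}$, where $\underline{B}_{P^{0}}(\delta) := 4(1+\delta)^{2}\bigl(1+\underline{\xi}_{0}+C_{\sigma,\delta}\,\underline{\eta}_{0}\|\underline{w}_{0}\|_{\infty}\bigr)$. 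Since $\xi_{0},\eta_{0},\|w_{0}\|_{\infty}$ and their underlined analogues are finite constants (bounded via \Cref{cond:nuisances} in each setting), both $B_{P^{0}}(\delta)$ and $\underline{B}_{P^{0}}(\delta)$ are $O(1)$, so with $N(\delta):=\max\{N_{1}(\delta),N_{2}(\delta)\}$ both bounds are of order $N^{-\gamma}$, which is the first assertion.

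For the ratio, I would observe that the universal constant hidden in $\lesssim$ is the same in the two displays (the proofs are identical) and the common prefactor $4(1+\delta)^{2}$ cancels, so
\[
  \frac{\textnormal{excess-risk bound under data fusion}}{\textnormal{excess-risk bound without data fusion}}
  = \left(\frac{B_{P^{0}}(\delta)}{\underline{B}_{P^{0}}(\delta)}\right)^{\!2(1+p)\gamma}
  = \left(\frac{1+\xi_{0}+C_{\sigma,\delta}\,\eta_{0}\|w_{0}\|_{\infty}}{1+\underline{\xi}_{0}+C_{\sigma,\delta}\,\underline{\eta}_{0}\|\underline{w}_{0}\|_{\infty}}\right)^{\!\frac{2(1+p)(1-2\alpha)}{p+(1-2\alpha)}},
\]
which is the stated identity. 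Because the exponent is strictly positive and $C_{\sigma,\delta}>0$, the bound $\le 1$ reduces to two comparisons: $\xi_{0}\le\underline{\xi}_{0}$ and $\eta_{0}\|w_{0}\|_{\infty}\le\underline{\eta}_{0}\|\underline{w}_{0}\|_{\infty}$. The first is immediate from $\mathcal{S}_{X}\cap\mathcal{S}_{Y}\subseteq\mathcal{S}_{X}$, which gives $P^{0}(S\in\mathcal{S}_{X}\cap\mathcal{S}_{Y})\le P^{0}(S\in\mathcal{S}_{X})$ and hence $\xi_{0}=1/P^{0}(S\in\mathcal{S}_{X})\le 1/P^{0}(S\in\mathcal{S}_{X}\cap\mathcal{S}_{Y})=\underline{\xi}_{0}$.

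The second comparison is the substantive one, because $w_{0}$ carries an extra ratio of $X$-marginals that $\underline{w}_{0}$ does not, so the two cannot be compared termwise. The fix is to absorb that factor via \Cref{cond:common}, which gives $P^{0}_{X}(\cdot\mid S\in\mathcal{S}_{X})=P^{0}_{X}(\cdot\mid S\in\mathcal{S}_{X}\cap\mathcal{S}_{Y})=Q^{0}_{X}$. Letting $q^{0}$ be the $Q^{0}_{X}$-density and $f_{Y},f_{XY}$ the densities (with respect to the dominating measure on $\mathcal{X}$ times $\mu$) of the sub-probability measures $(B,C)\mapsto P^{0}(X\in B,A\in C,S\in\mathcal{S}_{Y})$ and $(B,C)\mapsto P^{0}(X\in B,A\in C,S\in\mathcal{S}_{X}\cap\mathcal{S}_{Y})$, unwinding the definitions of the nuisances yields $\eta_{0}w_{0}=q^{0}/f_{Y}$ and $\underline{\eta}_{0}\underline{w}_{0}=q^{0}/f_{XY}$ on $\{q^{0}>0\}$; that is, these weighted density ratios are exactly the importance weights transporting $P^{0}_{X,A}(\cdot\mid S\in\mathcal{S}_{Y})$ and $P^{0}_{X,A}(\cdot\mid S\in\mathcal{S}_{X}\cap\mathcal{S}_{Y})$ to $Q^{0}_{X}\times\mu$. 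Since $\mathcal{S}_{X}\cap\mathcal{S}_{Y}\subseteq\mathcal{S}_{Y}$ forces $f_{XY}\le f_{Y}$ pointwise a.e., we get $\eta_{0}w_{0}\le\underline{\eta}_{0}\underline{w}_{0}$ a.e., and because \Cref{cond:overlap} (for both fusion structures) together with \Cref{cond:dmu-dqax-bdd} makes $Q^{0}_{X}\times\mu$ mutually absolutely continuous with each of $P^{0}_{X,A}(\cdot\mid S\in\mathcal{S}_{Y})$ and $P^{0}_{X,A}(\cdot\mid S\in\mathcal{S}_{X}\cap\mathcal{S}_{Y})$, taking essential suprema with respect to $Q^{0}_{X}\times\mu$ preserves the inequality, giving $\eta_{0}\|w_{0}\|_{\infty}\le\underline{\eta}_{0}\|\underline{w}_{0}\|_{\infty}$. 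For strictness I would split on the hypothesis, which forces one of its two summands positive: if $P^{0}(S\in\mathcal{S}_{X}\setminus\mathcal{S}_{Y})>0$ then $\xi_{0}<\underline{\xi}_{0}$ strictly; otherwise $\epsilon:=\essinf_{(X,A,Y)\sim P^{0}}P^{0}(S\in\mathcal{S}_{Y}\setminus\mathcal{S}_{X}\mid A,X,S\in\mathcal{S}_{Y})\in(0,1)$, and since this conditional probability equals $f_{Y\setminus X}/f_{Y}$ (a function of $(X,A)$ only, by within-$\mathcal{S}_{Y}$ exchangeability of $Y$), we obtain $f_{XY}\le(1-\epsilon)f_{Y}$ a.e., hence $\underline{\eta}_{0}\underline{w}_{0}\ge(1-\epsilon)^{-1}\eta_{0}w_{0}$ and, using $\eta_{0}\|w_{0}\|_{\infty}\ge M_{w}^{-1}>0$, the strict inequality $\eta_{0}\|w_{0}\|_{\infty}<\underline{\eta}_{0}\|\underline{w}_{0}\|_{\infty}$. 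In either case the numerator is strictly below the denominator, so the ratio is $<1$.

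The bookkeeping around \Cref{thm:rkhs_excess} and the exponent algebra are routine; the step I expect to be the main obstacle is the density-ratio comparison $\eta_{0}\|w_{0}\|_{\infty}\le\underline{\eta}_{0}\|\underline{w}_{0}\|_{\infty}$ and its strict refinement, because it requires (i) re-identifying $\eta_{0}w_{0}$ and $\underline{\eta}_{0}\underline{w}_{0}$ as the sub-density importance weights $q^{0}/f_{Y}$ and $q^{0}/f_{XY}$ so that the event nesting $\mathcal{S}_{X}\cap\mathcal{S}_{Y}\subseteq\mathcal{S}_{Y}$ can be exploited, and (ii) transferring a.e.\ pointwise inequalities — and the lower bound $\epsilon$, which the hypothesis states with respect to the full $P^{0}$-law of $(X,A,Y)$ — to essential statements with respect to $Q^{0}_{X}\times\mu$, which follows from the chain $Q^{0}_{X}\times\mu\ll Q^{0}_{X,A}\ll P^{0}_{X,A}(\cdot\mid S\in\mathcal{S}_{Y})\ll P^{0}_{X,A}$ furnished by \Cref{cond:overlap,cond:dmu-dqax-bdd}.
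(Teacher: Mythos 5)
Your proposal is correct and follows essentially the same route as the paper: both bounds come from two applications of \Cref{thm:rkhs_excess}, the ratio reduces to the ratio of Lipschitz constants, and the substantive step is the comparison $\eta_{0}\|w_{0}\|_{\infty}\le\underline{\eta}_{0}\|\underline{w}_{0}\|_{\infty}$, which the paper isolates in \Cref{lem:lipschitz_comparison}. Your identification of $\eta_{0}w_{0}$ and $\underline{\eta}_{0}\underline{w}_{0}$ as the importance weights $q^{0}/f_{Y}$ and $q^{0}/f_{XY}$, with $f_{XY}\le(1-\epsilon)f_{Y}$ from the event nesting, is just a slightly cleaner packaging of the paper's rectangle-integration argument and yields the same strictness criterion.
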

The condition for the strict inequality is slightly stronger than merely requiring the existence of some partially aligned sources; that is, $P^{0}\!\big(S \in \mathcal{S}_{X} \triangle \mathcal{S}_{Y}\big) = P^{0}\!\big(S \in \mathcal{S}_{X} \setminus \mathcal{S}_{Y}\big) + P^{0}\!\big(S \in \mathcal{S}_{Y} \setminus \mathcal{S}_{X}\big) > 0$. The key to establishing the above result is showing that using data fusion necessarily reduces the loss’s Lipschitz constant. We have shown that the excess-risk bound decreases under data fusion by applying \Cref{thm:rkhs_excess}. 

Beyond excess risk upper bounds, we show that using data fusion necessarily improves performance in a minimax sense, under appropriate conditions. Concretely, we show a high-probability worst-case upper bound of the risk under data fusion is smaller than a worst-case lower bound of the risk without data fusion, provided there are sufficiently more data that are only partially aligned with the target distribution than are perfectly aligned. 
To formalize this result, we impose an additional assumption on the eigenvalue decay of the kernel integral operator, requiring it to be bounded both above and below by a polynomial rate.

\begin{condition}[Eigenvalue Decay at Exactly Polynomial Rate] \label{cond:minimax_conditions}
There exists $p \in (0,1)$ and constants $G_{1}, G_{2} > 0$ such that, for all $j \ge 1$,
\[
  G_{1} j^{-1/p} \le \sigma_{j} \le G_{2} j^{-1/p}.
\]
\end{condition}

In the following lemma, we provide minimax lower bounds for an estimator $\underline{\theta}$ that does not use data fusion. Formally, any such $\underline{\theta}$ takes as input a sample $\{(X_{i}, A_{i}, Y_{i}, S_{i})\}_{i=1}^{n}$ and is measurable with respect to the $\sigma$-algebra generated by $\{(X_{i}, A_{i}, Y_{i})\}_{i : S_{i} \in \mathcal{S}_{X} \cap \mathcal{S}_{Y}}$.

\begin{lemma}[Minimax Lower Bound without Data Fusion]\label{lem:minimax_bound}
Assume that \Cref{cond:identifiability,cond:target_model,cond:nuisances,cond:rkhs,cond:src,cond:minimax_conditions} hold. Suppose $1-2\alpha \ge  p$. Fix an estimator $\underline{\theta}$ that does not use data fusion and $\delta\in (0,1)$. Let $N_\delta<\infty$ be as defined in \eqref{eq:Ndelta} in the appendix. There exists a constant $c_{\mathcal{Q}} > 0$ depending on $\mathcal{Q}$ only such that, for all $(n,n_{XY})$ such that $n\ge n_{XY}\ge N_\delta$, there exists $P\in\mathcal{P}$ such that $P(S \in \mathcal{S}_X \cap \mathcal{S}_Y)=n_{XY}/n$ and, with probability at least $1 - 2\delta$,
\begin{align}
    L_P(\underline{\theta}, g_0) \geq c_{\mathcal{Q}} \delta 
    \left(1+\sqrt{3\log(1/\delta)/n_{XY}}\right)^{-\frac{1 - 2\alpha}{p + (1 - 2\alpha)}} 
    n_{XY}^{-\frac{1 - 2\alpha}{p + (1 - 2\alpha)}}. \label{eq:minimaxLB}
    \end{align}
\end{lemma}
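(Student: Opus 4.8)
\emph{Proof strategy.} Because $\underline{\theta}$ depends on the data only through the target subsample $\{(X_i,A_i,Y_i):S_i\in\mathcal{S}_X\cap\mathcal{S}_Y\}$, whose size $N_{XY}$ is $\mathrm{Binomial}(n,n_{XY}/n)$, the plan is to reduce the problem to a classical minimax lower bound for nonparametric regression over an RKHS ball satisfying the source condition, posed on that subsample, and then to control the randomness of $N_{XY}$ with a Chernoff bound. I would exhibit a finite family $\{P^{(v)}\}_{v\in\mathcal V}\subseteq\mathcal P$, all with $P^{(v)}(S\in\mathcal{S}_X\cap\mathcal{S}_Y)=n_{XY}/n$ and with an \emph{identical} observed-data law except for the CDRF, so that the only obstruction to small risk is estimating $\theta_0$, and then invoke Fano's inequality together with a Varshamov--Gilbert packing.

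\emph{The hard instances.} Let all sources share the same law of $(X,A,Y)$: $X$ carries no information (say, degenerate), $A\mid X\sim\mu$ (so $w_0\equiv 1$), and $Y=\theta_0(A)+\sigma_0\varepsilon$ with $\varepsilon\sim N(0,1)$ and $\sigma_0\asymp\sigma$ the largest noise level compatible with \Cref{cond:subexp-y}; then $m_0(x,a)=\tau_0(a)=\theta_{Q^0}(a)=\theta_0(a)$, $L_{P^{0}}(\theta,g_0)=\|\theta-\theta_0\|_{L^2(\mu)}^2$, and every alignment in \Cref{cond:identifiability} and moment bound in \Cref{cond:target_model} holds with $c_1=c_\mu=1$; sources in $\mathcal{S}_X\setminus\mathcal{S}_Y$ and $\mathcal{S}_Y\setminus\mathcal{S}_X$ carry the complementary mass so that \Cref{cond:source_probabilities} and the boundedness in \Cref{cond:nuisances} are met. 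For the CDRF, fix a block size $D$ and, by Varshamov--Gilbert, take $\mathcal V\subseteq\{-1,1\}^D$ with $\log|\mathcal V|\gtrsim D$ and pairwise Hamming distance $\gtrsim D$, and set $\theta^{(v)}:=\gamma\sum_{\ell=1}^D v_\ell\, e_{D+\ell}$, so $\|\theta^{(v)}-\theta^{(v')}\|_{L^2(\mu)}^2\asymp\gamma^2 D$ for $v\ne v'$. By the polynomial lower bound on eigenvalues in \Cref{cond:minimax_conditions}, $\sum_{\ell=D+1}^{2D}\sigma_\ell^{2\alpha-1}\lesssim D^{1+(1-2\alpha)/p}$, so choosing $\gamma^2\asymp R^2 D^{-1-(1-2\alpha)/p}$ makes $\|T_{\mathcal K}^\alpha\theta^{(v)}\|_{\mathcal H}\le R$, i.e.\ \Cref{cond:src} holds. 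The sup-norm caps $\|\theta^{(v)}\|_{L^\infty(\mu)}\le 1$ and $|m_0|\le1$ reduce, via the bounded eigenfunctions of \Cref{cond:rkhs}, to $\gamma D M_e\lesssim 1$, i.e.\ $D^{1/2-(1-2\alpha)/(2p)}\lesssim1$; this is precisely where $1-2\alpha\ge p$ is used, as it makes the exponent nonpositive, so the source condition---not the sup-norm cap---governs $\gamma$ and the packing attains separation $\rho^2\asymp\gamma^2 D\asymp R^2 D^{-(1-2\alpha)/p}$ (shrink $\gamma$ by a fixed constant if $RM_e>1$).

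\emph{Fano with a random sample size.} Conditionally on $N_{XY}=m'$ the target subsample is $m'$ i.i.d.\ draws from the Gaussian model above, so $\mathrm{KL}\bigl(P^{(v)}\,\|\,P^{(v')}\bigr)=\tfrac{m'}{2\sigma_0^2}\|\theta^{(v)}-\theta^{(v')}\|_{L^2(\mu)}^2\lesssim m'\rho^2/\sigma_0^2$, and hence the mutual information between a uniform $v$ and the subsample is at most of this order. Let $E=\{N_{XY}\le m\}$ with $m=\lceil n_{XY}(1+\epsilon_\delta)\rceil$ and $\epsilon_\delta=\sqrt{3\log(1/\delta)/n_{XY}}$; the multiplicative Chernoff bound gives $P(E)\ge1-\delta$ once $n_{XY}\ge3\log(1/\delta)$. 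Restricting each error probability to $E$, averaging Fano's inequality over $N_{XY}$, and using $N_{XY}\le m$ on $E$ yields
\[
  \frac1{|\mathcal V|}\sum_{v\in\mathcal V}P^{(v)}\!\Bigl(\|\underline{\theta}-\theta^{(v)}\|_{L^2(\mu)}\ge\rho\Bigr)
  \;\ge\; P(E)\Bigl(1-\frac{C\,m\rho^2/\sigma_0^2+\log 2}{\log|\mathcal V|}\Bigr).
\]
Choosing $D\asymp(m/\delta)^{p/(p+(1-2\alpha))}$---large enough that the residual $\log2$ and the earlier ``$D$ large'' requirements hold, which is what $n_{XY}\ge N_\delta$ from \eqref{eq:Ndelta} ensures---forces the parenthesized Fano term above $1-\delta$, so the average, and hence some $P^{(v^\star)}$, has error probability $\ge(1-\delta)^2\ge1-2\delta$. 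Since the hypotheses are $2\rho$-separated in $L^2(\mu)$, on that event $L_{P^{(v^\star)}}(\underline{\theta},g_0)=\|\underline{\theta}-\theta^{(v^\star)}\|_{L^2(\mu)}^2\ge\rho^2$. Finally $\rho^2\asymp R^2 D^{-(1-2\alpha)/p}\asymp(m/\delta)^{-\frac{1-2\alpha}{p+(1-2\alpha)}}$, and substituting $m\le2 n_{XY}(1+\epsilon_\delta)$ and using $\delta^{\frac{1-2\alpha}{p+(1-2\alpha)}}\ge\delta$ (as $0<\tfrac{1-2\alpha}{p+(1-2\alpha)}<1$) gives exactly the claimed bound with $c_{\mathcal Q}$ depending only on the model constants. (One could even drop the $(1+\epsilon_\delta)$ factor by bounding the mutual information directly through $\E[N_{XY}]=n_{XY}$, but the Chernoff route already suffices.)

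\emph{Main obstacle.} The Fano/Varshamov--Gilbert computation and the Chernoff bookkeeping are routine; the delicate step is the construction---producing a sub-family inside $\mathcal P$ that simultaneously realizes the prescribed source probability $n_{XY}/n$, keeps every alignment and boundedness constant of \Cref{cond:identifiability,cond:target_model,cond:nuisances} intact, and still forces the minimax rate. The crux there is verifying that the source-condition ball of \Cref{cond:src}, rather than the sup-norm caps of \Cref{cond:target_model}, controls the packing radius, which is exactly what $1-2\alpha\ge p$ buys, and ensuring that the estimator's (harmless) knowledge of $N_{XY}$ is handled by conditioning on it before applying Fano.
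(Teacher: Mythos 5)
Your proposal is correct and follows essentially the same route as the paper's proof: a Varshamov--Gilbert packing of CDRFs built from eigenfunctions with amplitude calibrated to the source condition (with $1-2\alpha \ge p$ ensuring the sup-norm cap is not binding), Gaussian noise for the KL computation, Fano's inequality, and a multiplicative Chernoff bound capping the random number of perfectly aligned observations at roughly $n_{XY}\bigl(1+\sqrt{3\log(1/\delta)/n_{XY}}\bigr)$. The only cosmetic differences are that the paper reduces to a fixed-sample-size worst-case Bayes risk via a maximin/estimator-sequence argument instead of applying Fano conditionally on $N_{XY}$ and averaging, and it places the $\delta$-dependence in the signal amplitude rather than in the packing dimension.
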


In the above lemma, $n_{XY}$ denotes the expected number of perfectly aligned samples among $n$ i.i.d.\ observations drawn from $P$. The proof proceeds by lower bounding a minimax risk over $\mathcal{P}$ by a worst-case Bayes risk over $\mathcal{Q}$ via a maximin argument. We then construct a finite family $\{Q_0,\ldots,Q_M\}\subset\mathcal{Q}$ whose CDRFs are well separated, while the corresponding product measures are close in Kullback--Leibler divergence; applying Fano’s inequality to this packing yields the desired bound.

We combine the minimax lower bound for the no–fusion estimator from \Cref{lem:minimax_bound} with the uniform excess risk upper bound for the data-fusion estimator from \Cref{thm:rkhs_excess}. The next theorem formalizes the cases when data fusion strictly improves the worst-case risk. 

Let $(\widehat{\theta},\widehat{g})$ be the estimators in \Cref{alg:rkhs_est} and $\underline{\theta}$ be any estimator that does not use data fusion.
In the following result, we assume the nuisance estimator $\widehat{g}$ satisfies the following uniform condition (stronger than \Cref{thm:rkhs_excess}): 
for any $\varepsilon>0$, there exists $M_g<\infty$ such that
\begin{align}
\limsup_{n\to\infty}\ \sup_{P\in\mathcal P}\ 
P^{n}\!\left(
  \|\widehat{g}-g_0\|_{L^{2}(Q_X^0\times \mu;\,\ell^{2})} 
  > M_g\,(n\log(n)^{-2})^{-1/4}
\right) \le \varepsilon. \label{eq:ghatUniform}
\end{align}
We further suppose the regularization parameter is chosen as
\begin{align}
c \propto (n \log(n)^{-2})^{\frac{\alpha}{p + (1 - 2\alpha)}}. \label{eq:regParam}
\end{align}

\begin{theorem}[Sufficient Conditions for Data Fusion to Improve Worst-Case Performance] \label{thm:minimax_bound}
Suppose $(\widehat{\theta},\widehat{g})$ satisfy the conditions above, $c$ satisfies \eqref{eq:regParam}, \Cref{cond:identifiability,cond:target_model,cond:nuisances,cond:rkhs,cond:src,cond:minimax_conditions} hold, and $1-2\alpha \ge  p$. Fix $\delta \in (0,1/2]$ and $h>0$. Suppose the \textbf{\color{CB5red}non-data fusion sample size} is sufficiently large and the \textbf{\color{CB5blue}data fusion sample size} is sufficiently larger, in that, for $N_\delta$ as defined in \eqref{eq:large_n_minimax},
\[
N_\delta\le {\color{CB5red}\bm{n_{XY}}}\le \tfrac{n}{(\log n)^{2+h}}\le {\color{CB5blue}\bm{n}}.
\]
Then, for $\mathcal{P}_{XY}:=\{P\in\mathcal{P} : P\!\left(S \in \mathcal{S}_{X} \cap \mathcal{S}_{Y}\right)=n_{XY}/n\}$ and $r(\delta,n):= c_{\mathcal{Q}} \delta n^{-\frac{1-2\alpha}{p+(1-2\alpha)}}/4$,
\begin{align}
    &\sup_{P\in\mathcal{P}_{XY}} P^n\!\Bigl\{ L_P(\underline{\theta}, g_0) \ge r(\delta,n)\Bigr\}- \sup_{P\in\mathcal{P}_{XY}} P^n\!\Bigl\{ L_P(\widehat{\theta}, g_0) > r(\delta,n)/\log(n)^\frac{h(1-2\alpha)}{2[p+(1-2\alpha)]} \Bigr\}
    \;\ge\; 1 - 2\delta. \label{eq:minimaxDominance}
\end{align}
\end{theorem}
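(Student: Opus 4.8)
The plan is to read \eqref{eq:minimaxDominance} as the assertion that $r(\delta,n)$ separates the two estimators: under some worst-case law in $\mathcal P_{XY}$ the no-fusion estimator's risk exceeds $r(\delta,n)$ with high probability, while under \emph{every} law in $\mathcal P_{XY}$ the data-fusion estimator's risk stays below the deflated level $r(\delta,n)/(\log n)^{h\gamma/2}$ with high probability, where $\gamma:=(1-2\alpha)/(p+(1-2\alpha))\in[1/2,1)$ under $1-2\alpha\ge p$. The first fact is supplied by \Cref{lem:minimax_bound}, the second by a uniform version of \Cref{thm:rkhs_excess}; subtracting the two controls then yields \eqref{eq:minimaxDominance}, and the arithmetic separating the two levels is precisely the polylogarithmic gap $n_{XY}\le n/(\log n)^{2+h}$ with $h>0$.

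\emph{Lower-bounding the first supremum.} I would apply \Cref{lem:minimax_bound} to $\underline\theta$ at confidence $\delta/2$ and the admissible pair $(n,n_{XY})$ (admissible since $n\ge n_{XY}\ge N_\delta$). This produces $P^\dagger\in\mathcal P$ with $P^\dagger(S\in\mathcal S_X\cap\mathcal S_Y)=n_{XY}/n$, hence $P^\dagger\in\mathcal P_{XY}$, such that with $P^{\dagger n}$-probability at least $1-\delta$ one has $L_{P^\dagger}(\underline\theta,g_0)\ge c_{\mathcal Q}(\delta/2)\,(1+\sqrt{3\log(2/\delta)/n_{XY}})^{-\gamma}n_{XY}^{-\gamma}$. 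Since $n_{XY}\ge N_\delta$ the correction factor exceeds a fixed constant, so this bound dominates $r(\delta,n)$ (using also $n\ge n_{XY}$), and monotonicity of the event in its threshold gives $\sup_{P\in\mathcal P_{XY}}P^n\{L_P(\underline\theta,g_0)\ge r(\delta,n)\}\ge 1-\delta$.

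\emph{Upper-bounding the second supremum.} This is the substantive step. First I would upgrade \Cref{thm:rkhs_excess} to a bound uniform over $\mathcal P$: the tuning \eqref{eq:regParam} is the oracle tuning of \Cref{thm:rkhs_excess} with $B_{P^0}(\delta)$ replaced by a constant, and by \Cref{cond:nuisances} ($\xi_0\le M_\xi$, $\eta_0\le M_\eta$, $\|w_0\|_\infty\le M_w$) with $\delta$ fixed, $B_{P^0}(\delta)$ is uniformly bounded over $\mathcal P$, so this costs only a constant factor; the uniform nuisance-rate hypothesis \eqref{eq:ghatUniform} turns the $o_p$-condition of \Cref{thm:rkhs_excess} into a uniform high-probability event, and the oracle property of cross-validation (already invoked for \Cref{thm:rkhs_excess}) lets one replace the data-driven ball radius by the oracle one. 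Intersecting the relevant events and splitting the confidence budget, I would obtain, for all large $n$, $\sup_{P\in\mathcal P}P^n\{L_P(\widehat\theta,g_0)>C_\delta\,(n(\log n)^{-2})^{-\gamma}\}\le\delta$ with $C_\delta$ depending only on $\delta$ and the model constants. It then remains to check $C_\delta(n(\log n)^{-2})^{-\gamma}\le r(\delta,n)/(\log n)^{h\gamma/2}$: writing $(n(\log n)^{-2})^{-\gamma}=n^{-\gamma}(\log n)^{2\gamma}$ and using $n_{XY}\le n/(\log n)^{2+h}$, i.e.\ $n^{-\gamma}\le n_{XY}^{-\gamma}(\log n)^{-(2+h)\gamma}$, the data-fusion bound is at most $C_\delta\,n_{XY}^{-\gamma}(\log n)^{-h\gamma}$, a full factor $(\log n)^{h\gamma}$ below the $n_{XY}^{-\gamma}$-scale of $r(\delta,n)$; since $(\log n)^{h\gamma/2}\to\infty$, for $n$ large this still leaves a gap of at least $(\log n)^{h\gamma/2}$ over the data-fusion bound after the threshold's own deflation, more than enough to absorb $C_\delta$, so the inequality holds. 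Hence $\sup_{P\in\mathcal P_{XY}}P^n\{L_P(\widehat\theta,g_0)>r(\delta,n)/(\log n)^{h\gamma/2}\}\le\delta$.

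\emph{Combining.} Subtracting the second control from the first gives the left side of \eqref{eq:minimaxDominance} at least $(1-\delta)-\delta=1-2\delta$. One takes $N_\delta$ in \eqref{eq:large_n_minimax} to be the maximum of the thresholds used above: the $N(\delta)$ of \Cref{thm:rkhs_excess}, the $N_\delta$ of \Cref{lem:minimax_bound}, the index past which \eqref{eq:ghatUniform} is in force, and the index past which the two calibration inequalities (correction factor bounded below; $C_\delta(\log n)^{-h\gamma/2}$ small) hold. I expect the uniform excess-risk step to be the main obstacle: one must simultaneously (i) pass from the pointwise guarantee of \Cref{thm:rkhs_excess} to one valid over all of $\mathcal P_{XY}$—hence the role of the uniform rate \eqref{eq:ghatUniform} and the $P$-free tuning \eqref{eq:regParam}—and (ii) show that the resulting bound, with its unavoidable $(\log n)^{2\gamma}$ kernel-ridge overhead and an $n$-free constant $C_\delta$ that blows up as $\delta\to0$, still fits strictly under $r(\delta,n)/(\log n)^{h\gamma/2}$, which is exactly why the hypothesis must grant a $(\log n)^{2+h}$, not merely a $(\log n)^{2}$, separation between $n_{XY}$ and $n$.
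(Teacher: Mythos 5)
Your proposal follows essentially the same route as the paper's proof: lower-bound the first supremum by applying \Cref{lem:minimax_bound} at confidence level $\delta/2$ (the paper's Step 1), upper-bound the second by making \Cref{thm:rkhs_excess} uniform over $\mathcal{P}_{XY}$ through the $P$-free Lipschitz constant built from $M_{\xi}, M_{\eta}, M_{w}$, the uniform nuisance rate \eqref{eq:ghatUniform}, and the tuning \eqref{eq:regParam} (the paper's Step 2), then calibrate the two thresholds via $n_{XY}\le n/(\log n)^{2+h}$ and subtract. The only caveat is that your calibration, exactly like the paper's own final display, implicitly reads $r(\delta,n)$ as $\tfrac{c_{\mathcal{Q}}}{4}\,\delta\, n_{XY}^{-\frac{1-2\alpha}{p+(1-2\alpha)}}$ (i.e., with $n_{XY}$ in place of the $n$ appearing in the theorem statement), which is a notational inconsistency in the paper rather than a flaw in your argument.
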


To our knowledge, this is the first work to formally demonstrate that data fusion improves performance in a statistical learning problem. Notably, this improvement holds regardless of the estimation algorithm used in the non-fusion setting.  
To establish this result, we use that the first term on the left-hand side is approximately one by \Cref{lem:minimax_bound},  
and that, by taking the supremum of the excess-risk bound in \Cref{thm:rkhs_excess}, the second term is approximately zero.

The suprema in the theorem may be attained at different distributions, which may appear to complicate interpretation.  
However, this also makes it possible to exhibit a particular distribution under which using data fusion improves performance.  
Concretely, letting $P_{\star}$ denote a (near) maximizer of the first supremum in \eqref{eq:minimaxDominance}, it follows that
\[
  P_{\star}^{n}\!\left\{
    \frac{L_{P_{\star}}(\widehat{\theta}, g_{0})}
         {L_{P_{\star}}(\underline{\theta}, g_{0})}
    \le \log(n)^{-\frac{h(1 - 2\alpha)}{2[p + (1 - 2\alpha)]}}
  \right\}
  \ge 1 - 3\delta.
\]
In other words, for sufficiently large $n$, there exists a distribution under which using data fusion substantially outperforms not using data fusion.

\section{Numerical Experiments}\label{sec:simulations}
We evaluate the empirical performance of our proposed data fusion kernel ridge CDRF estimator. Our goal is to assess whether incorporating additional partially aligning datasets via data fusion improves predictive performance, even when the reference measure is different than the data-generating measure or true CDRF does not belong to the chosen RKHS. Throughout we focus on the RKHS with a Laplace kernel and bandwidth chosen according to the median heuristic \citep{garreau_large_2018}.

We conducted Monte Carlo simulations with 500 runs under various configurations. The data are generated as follows: $S_X \sim \text{Bernoulli}(0.5)$, $S_Y \sim \text{Bernoulli}(0.5)$, $S_X \perp\!\!\!\perp S_Y$, $X \mid S_X = 1 \sim \mathcal{N}( \mu_0 = \tfrac{1}{3}(1,1,1)^\top, \; \Sigma_0 = 0.09 \cdot I_3 )$,
\begin{align*}
    &X \mid S_X = 0 \sim \mathcal{N}\left( \mu_1 = \tfrac{1}{6}(1,1,1)^\top, \;
    \Sigma_1 = 
    \begin{bmatrix}
    0.25 & 0.1 & 0.1 \\
    0.1 & 0.25 & 0.1 \\
    0.1 & 0.1 & 0.25
    \end{bmatrix}
    \right), 
\end{align*}
$A \mid X \sim \text{Beta}\left(1 + 1/[1 + \exp(\langle X, 1 \rangle)], \;1 + 1/[1 + \exp(\langle X, 1 \rangle)] \right)$, $Y \mid X, A, S_Y = 1 \sim \mathcal{N}( \theta_0(A) \cdot \langle X, 1 \rangle, \; 0.1^2 )$, and $Y \mid X, A, S_Y = 0 \sim \mathcal{N}( \widetilde{\theta}_0(A) \cdot \langle X, 1 \rangle, \; 0.1^2 )$, where $\langle X, 1 \rangle$ denotes the row sum of covariates $X$, and $\theta_0(a)$ is the true CDRF.

We consider three distinct functional forms for $\theta_0(a)$ across simulation sets, along with corresponding misspecified counterparts $\tilde{\theta}_0(a)$ used to generate conditional outcomes. The first setting has a Gaussian-shaped CDRFs, with $\theta_0(a) = \phi(a; 0.5, 0.25^2) - 1$ and $\widetilde{\theta}_0(a) = 0.5 \left( \phi(a; 1, 0.25^2) - 1 \right)$ for $\phi(a; \mu, \sigma^2)$ the Gaussian density with mean $\mu$ and variance $\sigma^2$. The second has CDRFs defined by trigonometric functions, with $\theta_0(a) = 5 \sin(3a) + 3 \cos(10a)$ and $\widetilde{\theta}_0(a) = 0.5 [\cos(3a) + \sin(10a)]$. The third has discontinuous CDRFs defined as
\begin{align*}
\theta_0(a) &=
\begin{cases}
\left( a^{0.5} + 0.1 \right), & \text{if } a < 0.5, \\
0.5 \left( a^4 + 1 \right), & \text{if } a \geq 0.5,
\end{cases}\hspace{4em} 
\widetilde{\theta}_0(a) =
\begin{cases}
\left( \log(1 + a) \right)^{0.5}, & \text{if } a < 0.3, \\
0.1 \left( \cos(3a)+1 \right), & \text{if } a \geq 0.3.
\end{cases}
\end{align*}
The discontinuous CDRFs are of interest since---unlike the other two CDRFs considered \citep[][Corollary 10.48]{wendland_scattered_2004}---they are not contained in the RKHS induced by the Laplacian kernel.

For all three true and misspecified CDRFs, we evaluated performance under three different reference measures for risk assessment. The first is the uniform distribution over $[0,1]$, which assigns equal importance to all parts of the dose-response curve. The second is $\text{Beta}(5,5)$, a bell-shaped distribution emphasizing the center of the range, thereby making estimation accuracy near $a = 0.5$ more important. The third is $\text{Beta}(0.5, 0.5)$, a U-shaped distribution that places more weight on boundary regions, which may be useful when treatment effects at the boundaries are of particular interest.

For each scenario described above and a range of sample sizes $n$, we computed the median empirical risk of the estimated CDRFs with and without data fusion. Following the procedure in \Cref{alg:rkhs_est}, we randomly split the dataset into two halves. The first half was used to estimate nuisance parameters, while the second half was used for target estimation and risk evaluation.

Specifically, we began by estimating the source probabilities $P^{0}(S \in \mathcal{S}_{X})$ and $P^{0}(S \in \mathcal{S}_{Y})$ using empirical proportions. We then estimated the density ratio $(x, a) \mapsto w_{0}(x, a)$ via the `densratio` package in R, implementing the unconstrained Least-Squares Importance Fitting (uLSIF) method \citep{kanamori_least-squares_2009}. For the conditional outcome regression $(x, a) \mapsto m_{0}(x, a)$, we used a SuperLearner package \citep{van_der_laan_super_2007}, combining mean regression, random forests, XGBoost, and elastic net. To choose the regularization parameter $\lambda$ for kernel ridge regression, we performed cross-validation over a grid $\{0.0001, 0.0051, \ldots, 0.0301\}$, selecting the value minimizing estimated risk, as detailed in \Cref{alg:cross_val}.

Given the estimated nuisance components and the selected $\lambda$, we computed the closed-form estimator of the CDRF as in \Cref{alg:rkhs_est}. We further evaluated performance by computing the empirical risk, defined as the mean squared error between the estimated and true CDRFs, where the evaluation points $a$ were sampled from the designated reference measure in each setting. We also report the percentage reduction in median risk achieved through data fusion. 

\begin{table}[tb]\singlespacing\small
\centering
\caption{Median risks and percent reduction from using data fusion across sample sizes.}\label{tab:median-risk}
\begin{subtable}{\textwidth}\centering
\caption{Gaussian CDRF, with risks scaled by \( \times 10^{-3} \)}\label{tab:median-risk-gaussian}
\begin{tabular}{c ccc | ccc | ccc}
\toprule
\multirow{2}{*}{\makecell[c]{Sample \\ Size}} 
  & \multicolumn{3}{c|}{Uniform(0, 1)} 
  & \multicolumn{3}{c|}{Beta(5, 5)} 
  & \multicolumn{3}{c}{Beta(0.5, 0.5)} \\
\cmidrule(lr){2-4} \cmidrule(lr){5-7} \cmidrule(lr){8-10}
  & Fusion & No Fusion & (\%) 
  & Fusion & No Fusion & (\%) 
  & Fusion & No Fusion & (\%) \\
\midrule
100  & 90.2 & 145.3 & 38 & 259.8 & 265.3 & 2  & 169.4 & 251.7 & 33 \\
200  & 48.9 &  82.8 & 41 & 128.9 & 189.4 & 32 &  98.2 & 158.0 & 38 \\
400  & 16.9 &  37.6 & 55 &  23.9 &  81.3 & 71 &  30.1 &  82.4 & 64 \\
800  &  6.7 &  10.3 & 35 &   7.4 &  15.4 & 52 &  10.1 &  17.9 & 44 \\
1600 &  3.1 &   4.4 & 30 &   3.6 &   5.3 & 31 &   4.7 &   6.6 & 30 \\
3200 &  1.8 &   2.3 & 19 &   2.4 &   3.2 & 25 &   2.7 &   3.4 & 21 \\
\bottomrule
\end{tabular}\vspace{1em}
\end{subtable}
\begin{subtable}{\textwidth}\centering
    \caption{Trigonometric CDRF, with risks scaled by by \( \times 10^{-2} \)}\label{tab:median-risk-trigonometric}
    \begin{tabular}{c ccc | ccc | ccc}
    \toprule
    \multirow{2}{*}{\makecell[c]{Sample \\ Size}} 
      & \multicolumn{3}{c|}{Uniform(0, 1)} 
      & \multicolumn{3}{c|}{Beta(5, 5)} 
      & \multicolumn{3}{c}{Beta(0.5, 0.5)} \\
    \cmidrule(lr){2-4} \cmidrule(lr){5-7} \cmidrule(lr){8-10}
      & Fusion & No Fusion & (\%) 
      & Fusion & No Fusion & (\%) 
      & Fusion & No Fusion & (\%) \\
    \midrule
    100  & 367.3 & 549.4 & 33 & 804.1 & 907.6 & 11 & 519.6 & 855.2 & 39 \\
    200  & 213.6 & 306.2 & 30 & 439.1 & 689.0 & 36 & 334.7 & 499.2 & 33 \\
    400  &  78.8 & 149.2 & 47 & 164.5 & 350.5 & 53 & 124.7 & 294.4 & 58 \\
    800  &  36.4 &  51.2 & 29 &  51.1 &  61.7 & 17 &  39.5 &  70.7 & 44 \\
    1600 &  15.8 &  23.1 & 32 &  21.5 &  29.0 & 26 &  15.3 &  26.6 & 43 \\
    3200 &   8.3 &  12.6 & 34 &  11.4 &  17.4 & 34 &   7.5 &  11.6 & 35 \\
    \bottomrule
    \end{tabular}
\end{subtable}\vspace{1em}
\begin{subtable}{\textwidth}\centering
\caption{Discontinuous CDRF, with risks scaled by \( \times 10^{-3} \)}\label{tab:median-risk-discontinuous}
\begin{tabular}{c ccc | ccc | ccc}
\toprule
\multirow{2}{*}{\makecell[c]{Sample \\ Size}} 
  & \multicolumn{3}{c|}{Uniform(0, 1)} 
  & \multicolumn{3}{c|}{Beta(5, 5)} 
  & \multicolumn{3}{c}{Beta(0.5, 0.5)} \\
\cmidrule(lr){2-4} \cmidrule(lr){5-7} \cmidrule(lr){8-10}
  & Fusion & No Fusion & (\%) 
  & Fusion & No Fusion & (\%) 
  & Fusion & No Fusion & (\%) \\
\midrule
100  & 143.7 & 198.8 & 28 & 190.7 & 214.4 & 11 & 208.4 & 275.2 & 24 \\
200  &  63.4 & 111.9 & 43 &  80.3 & 136.2 & 41 &  95.5 & 169.2 & 44 \\
400  &  31.6 &  45.8 & 31 &  32.5 &  56.8 & 43 &  49.6 &  67.8 & 27 \\
800  &  19.5 &  26.1 & 26 &  19.7 &  25.1 & 21 &  29.8 &  32.4 &  8 \\
1600 &  10.8 &  16.1 & 33 &  13.3 &  15.8 & 16 &  17.5 &  19.2 &  8 \\
3200 &   6.7 &   9.2 & 27 &   8.5 &  10.0 & 15 &   9.9 &  12.6 & 21 \\
\bottomrule
\end{tabular}
\end{subtable}
\end{table}

The results are summarized in \Cref{tab:median-risk}.  Across all settings, incorporating data fusion improves performance. This includes the discontinuous CDRF case in \Cref{tab:median-risk-discontinuous}, where the true CDRF does not lie in the RKHS of the estimator. Nonetheless, the proposed method still achieved good performance even at moderate sample sizes, demonstrating robustness to model misspecification. The tables also reveal robustness to the choice of reference measure. Importantly, our method benefits from data fusion even in cases where the reference measure differs from the data-generating distribution.

\section{Discussion}

Our data fusion framework can also be applied to the estimation of non-pathwise differentiable function-valued parameters such as CDRFs. As an extension, it can be used to estimate any smooth summary functional of such parameters—even when the associated risk is not of the mean integrated squared error type. This broadens the applicability of data fusion beyond traditional risk measures.

By employing a stochastic approximation, we successfully construct a Neyman-orthogonal loss function for such risks that avoids taking expectations over the parameter’s indexing variable. The resulting orthogonal loss is almost same as the loss derived directly from the nonparametric efficient influence function, up to an $O_p(1)$ term. While similar constructions may be possible for other parameters, a general theoretical discussion remains to be developed.

Although CDRF estimation under data fusion shows both strong theoretical guarantees and favorable empirical performance, several caveats remain. First, the estimation algorithms assume correct knowledge of which part of each data source aligns with the target distribution. Second, the methods are tailored for prediction, not for statistical inference. Finally, while the regularization parameter $\lambda$ is data-dependent, the radius $c$ of the RKHS ball is fixed. In practice, kernel ridge regression behaves similarly to RKHS-ball-constrained optimization when the sample size is sufficiently large.

There are several promising directions for future work. 
One is to address federated learning settings where individual-level data from each clinical trial are private and cannot be shared, but summary statistics can be exchanged \citep{han_federated_2025}. 
Another is to extend statistical inference to non-pathwise differentiable parameters under data fusion \citep{luedtke_one-step_2024}. 
Lastly, it would be interesting to explore the use of neural networks or other function classes for estimating CDRFs, as in the orthogonal statistical learning framework \citep{foster_orthogonal_2023}.

\section*{Acknowledgements}
This work was supported by the National Science Foundation and National Institutes of Health under award numbers DMS-2210216 and DP2-LM013340.

\section*{Supplementary Materials}
R code for the CDRF estimation function \Cref{alg:rkhs_est} and simulation codes for \Cref{sec:simulations} are available at
\url{https://github.com/imjaewon07/CDRF_data_fusion}.

\printbibliography

@article{kluve_evaluating_2012,
  author  = {Kluve, Jochen and Schneider, Hilmar and Uhlendorff, Arne and Zhao, Zhong},
  year    = {2012},
  title   = {Evaluating Continuous Training Programmes by Using the Generalized Propensity Score},
  journal = {Journal of the Royal Statistical Society Series A: Statistics in Society},
  volume  = {175},
  number  = {2},
  pages   = {587--617},
}

@article{robinson_dose-response_2020,
  author  = {Robinson, Louisa and Delgadillo, Jaime and Kellett, Stephen},
  year    = {2020},
  title   = {The Dose-Response Effect in Routinely Delivered Psychological Therapies: A Systematic Review},
  journal = {Psychotherapy Research},
  volume  = {30},
  number  = {1},
  pages   = {79--96},
}

@article{dominici_air_2002,
  author  = {Dominici, Francesca and Daniels, Michael and Zeger, Scott L. and Samet, Jonathan M.},
  year    = {2002},
  title   = {Air Pollution and Mortality: Estimating Regional and National Dose–Response Relationships},
  journal = {Journal of the American Statistical Association},
  volume  = {97},
  number  = {457},
  pages   = {100--111},
}

@article{diaz_targeted_2013,
  author  = {Díaz, Iván and van der Laan, Mark J.},
  year    = {2013},
  title   = {Targeted Data Adaptive Estimation of the Causal Dose–Response Curve},
  journal = {Journal of Causal Inference},
  volume  = {1},
  number  = {2},
  pages   = {171--192},
}

@article{bonvini_fast_2022,
  author  = {Bonvini, Matteo and Kennedy, Edward H.},
  year    = {2022},
  title   = {Fast Convergence Rates for Dose–Response Estimation},
  journal = {arXiv preprint arXiv:2207.11825},
}

@article{galvao_uniformly_2015,
  author  = {Galvao, Antonio F. and Wang, Liang},
  year    = {2015},
  title   = {Uniformly Semiparametric Efficient Estimation of Treatment Effects With a Continuous Treatment},
  journal = {Journal of the American Statistical Association},
  volume  = {110},
  number  = {512},
  pages   = {1528--1542},
}

@article{blundell_endogeneity_2004,
  author  = {Blundell, Richard W. and Powell, James L.},
  year    = {2004},
  title   = {Endogeneity in Semiparametric Binary Response Models},
  journal = {The Review of Economic Studies},
  volume  = {71},
  number  = {3},
  pages   = {655--679},
}

@inproceedings{bahadori_end--end_2022,
  author    = {Bahadori, Taha and Tchetgen, Eric Tchetgen and Heckerman, David},
  year      = {2022},
  title     = {End-to-End Balancing for Causal Continuous Treatment–Effect Estimation},
  booktitle = {Proceedings of the 39th International Conference on Machine Learning},
  publisher = {PMLR},
  volume    = {162},
  pages     = {1313--1326},
}

@article{robins_marginal_2000,
  author  = {Robins, James M. and Hern{\'a}n, Miguel {\'A}ngel and Brumback, Babette},
  year    = {2000},
  title   = {Marginal Structural Models and Causal Inference in Epidemiology},
  journal = {Epidemiology},
  volume  = {11},
  number  = {5},
  pages   = {550--560},
}

@article{neugebauer_nonparametric_2007,
  author  = {Neugebauer, Romain and van der Laan, Mark J.},
  year    = {2007},
  title   = {Nonparametric Causal Effects Based on Marginal Structural Models},
  journal = {Journal of Statistical Planning and Inference},
  volume  = {137},
  number  = {2},
  pages   = {419--434},
}

@article{ai_unified_2021,
  author  = {Ai, Chunrong and Linton, Oliver and Motegi, Kaiji and Zhang, Zheng},
  year    = {2021},
  title   = {A Unified Framework for Efficient Estimation of General Treatment Models},
  journal = {Quantitative Economics},
  volume  = {12},
  number  = {3},
  pages   = {779--816},
}

@article{yiu_covariate_2018,
  author  = {Yiu, Sean and Su, Li},
  year    = {2018},
  title   = {Covariate Association Eliminating Weights: A Unified Weighting Framework for Causal Effect Estimation},
  journal = {Biometrika},
  volume  = {105},
  number  = {3},
  pages   = {709--722},
}

@incollection{hirano_propensity_2004,
  author    = {Hirano, Keisuke and Imbens, Guido W.},
  year      = {2004},
  title     = {The Propensity Score with Continuous Treatments},
  booktitle = {Applied Bayesian Modeling and Causal Inference from Incomplete-Data Perspectives},
  publisher = {John Wiley \& Sons, Ltd},
  pages     = {73--84},
}

@article{imai_causal_2004,
  author  = {Imai, Kosuke and van Dyk, David A.},
  year    = {2004},
  title   = {Causal Inference With General Treatment Regimes},
  journal = {Journal of the American Statistical Association},
  volume  = {99},
  number  = {467},
  pages   = {854--866},
}

@article{christian_fong_covariate_2018,
  author  = {Fong, Christian and Hazlett, Chad and Imai, Kosuke},
  year    = {2018},
  title   = {Covariate Balancing Propensity Score for a Continuous Treatment: Application to the Efficacy of Political Advertisements},
  journal = {The Annals of Applied Statistics},
  volume  = {12},
  number  = {1},
  pages   = {156--177},
}

@article{foster_orthogonal_2023,
  author  = {Foster, Dylan J. and Syrgkanis, Vasilis},
  year    = {2023},
  title   = {Orthogonal Statistical Learning},
  journal = {The Annals of Statistics},
  volume  = {51},
  number  = {3},
  pages   = {879--908},
}

@article{singh_kernel_2023,
  author  = {Singh, R. and Xu, L. and Gretton, A.},
  year    = {2023},
  title   = {Kernel Methods for Causal Functions: Dose, Heterogeneous and Incremental Response Curves},
  journal = {Biometrika},
  volume  = {111},
  number  = {2},
  pages   = {497--516},
}

@article{kennedy_non-parametric_2016,
  author  = {Kennedy, Edward H. and Ma, Zongming and McHugh, Matthew D. and Small, Dylan S.},
  year    = {2016},
  title   = {Non-parametric Methods for Doubly Robust Estimation of Continuous Treatment Effects},
  journal = {Journal of the Royal Statistical Society Series B: Statistical Methodology},
  volume  = {79},
  number  = {4},
  pages   = {1229--1245},
}

@article{westling_causal_2020,
  author  = {Westling, Ted and Gilbert, Peter and Carone, Marco},
  year    = {2020},
  title   = {Causal Isotonic Regression},
  journal = {Journal of the Royal Statistical Society Series B: Statistical Methodology},
  volume  = {82},
  number  = {3},
  pages   = {719--747},
}

@techreport{van_der_laan_unified_2003,
  author      = {van der Laan, Mark J. and Dudoit, Sandrine},
  year        = {2003},
  title       = {Unified Cross-Validation Methodology for Selection Among Estimators and a General Cross-Validated Adaptive Epsilon-Net Estimator: Finite Sample Oracle Inequalities and Examples},
  institution = {U.C. Berkeley Division of Biostatistics Working Paper Series},
  number      = {130},
}

@article{huang_nonparametric_2023,
  author  = {Huang, Wei and Zhang, Zheng},
  year    = {2023},
  title   = {Nonparametric Estimation of the Continuous Treatment Effect With Measurement Error},
  journal = {Journal of the Royal Statistical Society Series B: Statistical Methodology},
  volume  = {85},
  number  = {2},
  pages   = {474--496},
}

@inproceedings{zeng_continuous_2024,
  author    = {Zeng, Zhenghao and Arbour, David and Feller, Avi and Addanki, Raghavendra and Rossi, Ryan and Sinha, Ritwik and Kennedy, Edward H.},
  year      = {2024},
  title     = {Continuous Treatment Effects With Surrogate Outcomes},
  booktitle = {Proceedings of the 41st International Conference on Machine Learning},
  publisher = {JMLR.org},
}

@article{bareinboim_causal_2016,
  author  = {Bareinboim, Elias and Pearl, Judea},
  year    = {2016},
  title   = {Causal Inference and the Data-Fusion Problem},
  journal = {Proceedings of the National Academy of Sciences},
  volume  = {113},
  number  = {27},
  pages   = {7345--7352},
}

@article{hernan_compound_2011,
  author  = {Hernán, Miguel A. and VanderWeele, Tyler J.},
  year    = {2011},
  title   = {Compound Treatments and Transportability of Causal Inference},
  journal = {Epidemiology},
  volume  = {22},
  number  = {3},
  pages   = {368--377},
}

@inproceedings{pearl_transportability_2011,
  author    = {Pearl, Judea and Bareinboim, Elias},
  year      = {2011},
  title     = {Transportability of Causal and Statistical Relations: A Formal Approach},
  booktitle = {2011 IEEE 11th International Conference on Data Mining Workshops},
  pages     = {540--547},
}

@inproceedings{bareinboim_transportability_2014,
  author    = {Bareinboim, Elias and Pearl, Judea},
  year      = {2014},
  title     = {Transportability from Multiple Environments with Limited Experiments: Completeness Results},
  booktitle = {Advances in Neural Information Processing Systems},
  volume    = {27},
  editor    = {Ghahramani, Z. and Welling, M. and Cortes, C. and Lawrence, N. and Weinberger, K. Q.},
}

@article{stuart_assessing_2015,
  author  = {Stuart, Elizabeth A. and Bradshaw, Catherine P. and Leaf, Philip J.},
  year    = {2015},
  title   = {Assessing the Generalizability of Randomized Trial Results to Target Populations},
  journal = {Prevention Science},
  volume  = {16},
  number  = {3},
  pages   = {475--485},
}

@article{rudolph_robust_2016,
  author  = {Rudolph, Kara E. and van der Laan, Mark J.},
  year    = {2016},
  title   = {Robust Estimation of Encouragement Design Intervention Effects Transported Across Sites},
  journal = {Journal of the Royal Statistical Society Series B: Statistical Methodology},
  volume  = {79},
  number  = {5},
  pages   = {1509--1525},
}

@article{dahabreh_extending_2019,
  author  = {Dahabreh, Issa J. and Hernán, Miguel A.},
  year    = {2019},
  title   = {Extending Inferences from a Randomized Trial to a Target Population},
  journal = {European Journal of Epidemiology},
  volume  = {34},
  number  = {8},
  pages   = {719--722},
}

@article{dahabreh_efficient_2022,
  author  = {Dahabreh, Issa J. and Robertson, Sarah E. and Petito, Lucia C. and Hernán, Miguel A. and Steingrimsson, Jon A.},
  year    = {2022},
  title   = {Efficient and Robust Methods for Causally Interpretable Meta-Analysis: Transporting Inferences from Multiple Randomized Trials to a Target Population},
  journal = {Biometrics},
  volume  = {79},
  number  = {2},
  pages   = {1057--1072},
}

@article{qiu_efficient_2024,
  author  = {Qiu, Hongxiang and Tchetgen, Eric Tchetgen and Dobriban, Edgar},
  year    = {2024},
  title   = {Efficient and Multiply Robust Risk Estimation Under General Forms of Dataset Shift},
  journal = {The Annals of Statistics},
  volume  = {52},
  number  = {4},
  pages   = {1796--1824},
}

@article{sun_semiparametric_2022,
  author  = {Sun, BaoLuo and Miao, Wang},
  year    = {2022},
  title   = {On Semiparametric Instrumental Variable Estimation of Average Treatment Effects Through Data Fusion},
  journal = {Statistica Sinica},
  volume  = {32},
  pages   = {569--590},
}

@article{li_efficient_2023,
  author  = {Li, Sijia and Luedtke, Alex},
  year    = {2023},
  title   = {Efficient Estimation Under Data Fusion},
  journal = {Biometrika},
  volume  = {110},
  number  = {4},
  pages   = {1041--1054},
}

@article{li_data_2025,
  author  = {Li, Sijia and Gilbert, Peter B. and Duan, Rui and Luedtke, Alex},
  year    = {2025},
  title   = {Data Fusion Using Weakly Aligned Sources},
  journal = {Journal of the American Statistical Association},
  pages   = {1--11},
}

@article{graham_towards_2025,
  author  = {Graham, Ellen and Carone, Marco and Rotnitzky, Andrea},
  year    = {2025},
  title   = {Towards a Unified Theory for Semiparametric Data Fusion With Individual-Level Data},
  journal = {arXiv preprint arXiv:2409.09973},
}

@article{curth_estimating_2021,
  author  = {Curth, Alicia and Alaa, Ahmed M. and van der Schaar, Mihaela},
  year    = {2021},
  title   = {Estimating Structural Target Functions Using Machine Learning and Influence Functions},
  journal = {arXiv preprint arXiv:2008.06461},
}

@book{sugiyama_density_2012,
  author    = {Sugiyama, Masashi and Suzuki, Taiji and Kanamori, Takafumi},
  year      = {2012},
  title     = {Density Ratio Estimation in Machine Learning},
  publisher = {Cambridge University Press},
}

@article{bickel_adaptive_1982,
  author  = {Bickel, P. J.},
  year    = {1982},
  title   = {On Adaptive Estimation},
  journal = {The Annals of Statistics},
  volume  = {10},
  number  = {3},
  pages   = {647--671},
}

@book{vershynin_high-dimensional_2018,
  author    = {Vershynin, Roman},
  year      = {2018},
  title     = {High-Dimensional Probability: An Introduction with Applications in Data Science},
  series    = {Cambridge Series in Statistical and Probabilistic Mathematics},
  publisher = {Cambridge University Press},
}

@article{chernozhukov_doubledebiased_2018,
  author  = {Chernozhukov, Victor and Chetverikov, Denis and Demirer, Mert and Duflo, Esther and Hansen, Christian and Newey, Whitney and Robins, James},
  year    = {2018},
  title   = {Double/Debiased Machine Learning for Treatment and Structural Parameters},
  journal = {The Econometrics Journal},
  volume  = {21},
  number  = {1},
  pages   = {C1--C68},
}

@article{nie_quasi-oracle_2021,
  author  = {Nie, X. and Wager, S.},
  year    = {2021},
  title   = {Quasi-Oracle Estimation of Heterogeneous Treatment Effects},
  journal = {Biometrika},
  volume  = {108},
  number  = {2},
  pages   = {299--319},
}

@inproceedings{zhang_optimality_2023,
  author    = {Zhang, Haobo and Li, Yicheng and Lu, Weihao and Lin, Qian},
  year      = {2023},
  title     = {On the Optimality of Misspecified Kernel Ridge Regression},
  booktitle = {Proceedings of the 40th International Conference on Machine Learning},
  publisher = {PMLR},
  volume    = {202},
  pages     = {41331--41353},
  editor    = {Krause, Andreas and Brunskill, Emma and Cho, Kyunghyun and Engelhardt, Barbara and Sabato, Sivan and Scarlett, Jonathan},
}

@article{steinwart_mercers_2012,
  author  = {Steinwart, Ingo and Scovel, Clint},
  year    = {2012},
  title   = {Mercer’s Theorem on General Domains: On the Interaction Between Measures, Kernels, and RKHSs},
  journal = {Constructive Approximation},
  volume  = {35},
  number  = {3},
  pages   = {363--417},
}

@article{cucker_mathematical_2001,
  author  = {Cucker, Felipe and Smale, Stephen},
  year    = {2001},
  title   = {On the Mathematical Foundations of Learning},
  journal = {Bulletin of the American Mathematical Society},
  volume  = {39},
  pages   = {1--49},
}

@book{rasmussen_gaussian_2005,
  author    = {Rasmussen, Carl Edward and Williams, Christopher K. I.},
  year      = {2005},
  title     = {Gaussian Processes for Machine Learning},
  publisher = {The MIT Press},
}

@article{fischer_sobolev_2020,
  author  = {Fischer, Simon and Steinwart, Ingo},
  year    = {2020},
  title   = {Sobolev Norm Learning Rates for Regularized Least-Squares Algorithms},
  journal = {Journal of Machine Learning Research},
  volume  = {21},
  number  = {205},
  pages   = {1--38},
}

@article{garreau_large_2018,
  author  = {Garreau, Damien and Jitkrittum, Wittawat and Kanagawa, Motonobu},
  year    = {2018},
  title   = {Large Sample Analysis of the Median Heuristic},
  journal = {arXiv preprint arXiv:1707.07269},
}

@book{wendland_scattered_2004,
  author    = {Wendland, Holger},
  year      = {2004},
  title     = {Scattered Data Approximation},
  series    = {Cambridge Monographs on Applied and Computational Mathematics},
  publisher = {Cambridge University Press},
}

@article{kanamori_least-squares_2009,
  author  = {Kanamori, Takafumi and Hido, Shohei and Sugiyama, Masashi},
  year    = {2009},
  title   = {A Least-Squares Approach to Direct Importance Estimation},
  journal = {Journal of Machine Learning Research},
  volume  = {10},
  number  = {48},
  pages   = {1391--1445},
}

@article{han_federated_2025,
  author  = {Han, Larry and Hou, Jue and Cho, Kelly and Duan, Rui and Cai, Tianxi},
  year    = {2025},
  title   = {Federated Adaptive Causal Estimation (FACE) of Target Treatment Effects},
  journal = {Journal of the American Statistical Association},
  pages   = {1--14},
}

@article{luedtke_one-step_2024,
  author  = {Luedtke, Alex and Chung, Incheoul},
  year    = {2024},
  title   = {One-Step Estimation of Differentiable Hilbert-Valued Parameters},
  journal = {The Annals of Statistics},
  volume  = {52},
  number  = {4},
  pages   = {1534--1563},
}

@article{kennedy_semiparametric_2023,
  author  = {Kennedy, Edward H.},
  year    = {2023},
  title   = {Semiparametric Doubly Robust Targeted Double Machine Learning: A Review},
  journal = {arXiv preprint arXiv:2203.06469},
}

@article{luedtke_simplifying_2025,
  author  = {Luedtke, Alex},
  year    = {2025},
  title   = {Simplifying Debiased Inference via Automatic Differentiation and Probabilistic Programming},
  journal = {Journal of the Royal Statistical Society Series B: Statistical Methodology},
  pages   = {qkaf052},
}

@article{smale_estimating_2003,
  author  = {Smale, Steve and Zhou, Ding-Xuan},
  year    = {2003},
  title   = {Estimating the Approximation Error in Learning Theory},
  journal = {Analysis and Applications},
  volume  = {1},
  number  = {1},
  pages   = {17--41},
}

@article{mendelson_regularization_2010,
  author  = {Mendelson, Shahar and Neeman, Joseph},
  year    = {2010},
  title   = {Regularization in Kernel Learning},
  journal = {The Annals of Statistics},
  volume  = {38},
  number  = {1},
  pages   = {526--565},
}

@article{wald_sequential_1945,
  author  = {Wald, A.},
  year    = {1945},
  title   = {Sequential Tests of Statistical Hypotheses},
  journal = {The Annals of Mathematical Statistics},
  volume  = {16},
  number  = {2},
  pages   = {117--186},
}

@book{tsybakov_introduction_2009,
  author    = {Tsybakov, Alexandre B.},
  year      = {2009},
  title     = {Introduction to Nonparametric Estimation},
  series    = {Springer Series in Statistics},
  edition   = {1},
  publisher = {Springer},
  address   = {New York, NY},
}

@techreport{flores_estimation_2007,
  author      = {Flores, Carlos A.},
  year        = {2007},
  title       = {Estimation of Dose–Response Functions and Optimal Doses with a Continuous Treatment},
  institution = {University of Miami, Department of Economics},
  number      = {0707},
}

@article{van_der_laan_super_2007,
  author  = {van der Laan, Mark J. and Polley, Eric C. and Hubbard, Alan E.},
  year    = {2007},
  title   = {Super Learner},
  journal = {Statistical Applications in Genetics and Molecular Biology},
  volume  = {6},
  number  = {1},
}

@article{mitchell_general_2009,
  author  = {Mitchell, Charles and van de Geer, Sara},
  year    = {2009},
  title   = {General Oracle Inequalities for Model Selection},
  journal = {Electronic Journal of Statistics},
  volume  = {3},
  pages   = {176--204},
}

\appendix

\setcounter{equation}{0}
\renewcommand{\theequation}{S\arabic{equation}}
\setcounter{theorem}{0}
\setcounter{figure}{0}
\setcounter{table}{0}
\setcounter{lemma}{0}
\setcounter{corollary}{0}
\renewcommand{\thetheorem}{S\arabic{theorem}}
\renewcommand{\thecorollary}{S\arabic{corollary}}
\renewcommand{\thelemma}{S\arabic{lemma}}
\renewcommand{\thefigure}{S\arabic{figure}}
\renewcommand{\thetable}{S\arabic{table}}
\renewcommand{\thealgorithm}{S\arabic{algorithm}}

\makeatletter
\renewcommand{\theHequation}{S\arabic{equation}}
\renewcommand{\theHtheorem}{S\arabic{theorem}}
\renewcommand{\theHlemma}{S\arabic{lemma}}
\renewcommand{\theHcorollary}{S\arabic{corollary}}
\renewcommand{\theHfigure}{S\arabic{figure}}
\renewcommand{\theHtable}{S\arabic{table}}
\makeatother

\section*{\LARGE Appendices}

\DoToC

\section{Cross Validation Algorithm in Kernel Ridge Regression}
\begin{algorithm}[tb]
\caption{Cross validation in kernel ridge regression}
\label{alg:cross_val}
\begin{algorithmic}[1]
    \State \textbf{Input}: The sample $z_{n}$, number of folds $K$, values of $\lambda$.
    \State \textbf{Cross validation}: 
    \begin{enumerate}
        \item Partition the sample $z^{n}$ into training set $z_{k}^{n}$ and test set $z_{-k}^{n}$ for $k = 1, \ldots, K$.
        \item For each value of $\lambda$, do the following steps.
        \item For each $k$, construct nuisance estimators only using $z_{k}^{n}$ as in \Cref{alg:gen_est}.
        \item With the same $k$, estimate the following parameters by plugging in test set $(z_{-k}^{n})$ into the nuisance parameters in the previous step.
        \begin{align*}
        &u_{i} = \mathbf{1}(s_{i} \in \mathcal{S}_{Y}) \; \widehat{\eta} \; \widehat{w}(a_{i}, x_{i}) \left( \widehat{m}(a_{i}, x_{i}) - y_{i} \right),\;  \\
        &v_{i} = \mathbf{1}(s_{i} \in \mathcal{S}_{X}) \; \widehat{\xi} \left(\E_{X}\left[ \widehat{m}(b_{i}, x_{i}) \right] - \widehat{m}(b_{i}, x_{i})\right) - \E_{X}\left[\widehat{m}(b_{i}, x_{i})\right] \\
        &\widehat{\beta} = -\frac{\mathbf{u}}{\lambda_{n}\sqrt{|z_{k}^{n}|}}, \; \widehat{\gamma} = -\left( \mathbf{K}_{22} + \lambda_{n} \mathbf{I}_{|z_{k}^{n}|} \right)^{-1} \left( \frac{\mathbf{v}}{\sqrt{|z_{k}^{n}|}} + \mathbf{K}_{21} \widehat{\beta} \right) \\
        &\widehat{\theta}(\cdot) = \frac{1}{\sqrt{|z_{k}^{n}|}} \sum_{j=1}^{|z_{k}^{n}|} \left[ \widehat{\beta}_{j}\mathcal{K}(\cdot, a_{j}) + \widehat{\gamma}_{j} \mathcal{K}(\cdot, b_{j}) \right].
        \end{align*}
        \item Calculate the risk for the specific $k$ as follows.
        \begin{align*}
        \widehat{R}_{k}(\widehat{\theta}) = \frac{1}{|z_{-k}^{n}|}\sum_{i=1}^{|z_{-k}^{n}|} \left[ \widehat{\theta}(b_{i})^{2} + 2v_{i}\widehat{\theta}(b_{i}) + 2u_{i}\widehat{\theta}(a_{i}) \right] + \lambda_{n} \| \widehat{\theta} \|_{\mathcal{H}}.
        \end{align*}
        \item Calculate the cross-validated risk by taking average over $\widehat{R}_{k}$ over $k = 1, \ldots, K$.
        \item Going back to step 2, choose the $\lambda_{n}$ showing the smallest cross-validated risk.     
    \end{enumerate}
\end{algorithmic}
\end{algorithm}

\section{Proofs}
\subsection{Derivation of Neyman-Orthogonal Loss and a Stochastic Approximation Thereof (\Cref*{lem:onestep_loss})}
We now derive a Neyman-orthogonal loss for CDRFs. We proceed in several steps. First, we provide the form of the nonparametric canonical gradient of the risk at $Q^{0}$ with respect to the tangent space of the target distributions. Next, we project it onto the corresponding tangent space of $P^{0}$. Finally, we construct a one-step estimator of the risk, along with an approximate version that facilitates its minimization. We provide the nonparametric canonical gradients and construction of one-step estimators under two scenarios: (i) when data fusion is fully utilized, and (ii) when data fusion is not used and estimation is based solely on samples in the intersection $S \in \mathcal{S}_{X} \cap \mathcal{S}_{Y}$.

If $\mathcal{Q}$ is not locally nonparametric, standard calculations or a chain rule \citep{kennedy_semiparametric_2023,luedtke_simplifying_2025} show $Q\mapsto \mathcal{L}_{Q}(\theta)$ at $Q^0$ is pathwise differentiable with canonical gradient
\begin{align*}
D^{*}_{Q^{0}}(\theta): (a,x,y) &\mapsto 2 \int \left( \theta(a) - \theta_{0}(a) \right) \left( \theta_{0}(a) - m_{0}(x, a) \right) d\mu(a) \\
&\quad + 2 \frac{d\mu_{A}}{dQ^{0}_{A \mid X}}(a \mid x) \left( \theta(a) - \theta_{0}(a) \right) \bigl( m_{0}(x, a) - y \bigr).
\end{align*}
By Corollary~1 in \citet{li_efficient_2023}, the canonical gradient of $P\mapsto L_P(\theta)$ at $P^0$ under the data fusion setting defined by fusion sets $(\mathcal{S}_{X},\mathcal{S}_{Y})$ is
\begin{align*}
D_{P^{0}}^{*}(\theta): (a,x,y,s) &\mapsto \mathbf{1}(s \in \mathcal{S}_{X}) \cdot 2 \xi_{0} 
\int \left( \theta(a) - \theta_{0}(a) \right) 
\left( \theta_{0}(a) - m_{0}(x, a) \right) d\mu(a) \\[8pt]
&\quad + \mathbf{1}(s \in \mathcal{S}_{Y}) \cdot 2 \eta_{0}w_{0}(x, a) \left( \theta(a) - \theta_{0}(a) \right)  \left( m_{0}(x, a) - y \right),
\end{align*}
where 
\begin{align*}
&\xi_{0} := \frac{1}{P^{0}(S \in \mathcal{S}_{X})}, \quad \eta_{0} := \frac{1}{P^{0}(S \in \mathcal{S}_{Y})}, \quad w_{0}(x, a) := \frac{p^{0}(x \mid S \in \mathcal{S}_{X})}{p^{0}(x \mid S \in \mathcal{S}_{Y})} \frac{1}{p^{0}(a \mid x, S \in \mathcal{S}_{Y})} \\
&m_{0}(x, a) := \E_{P^0} \left[ Y \mid x, a, S \in \mathcal{S}_{Y} \right]
\end{align*}
are the true nuisance parameters.
Similarly, the canonical gradient of $\mathcal{L}_{P}(\theta)$ at $P^{0}$ without data fusion (i.e., using only the intersection $\mathcal{S}_{X}\cap\mathcal{S}_{Y}$) is given by
\begin{align*}
\underline{D}_{P^{0}}^{*}(\theta): (a,x,y,s) &\mapsto \mathbf{1}(s \in \mathcal{S}_{X} \cap \mathcal{S}_{Y}) \cdot 2 \underline{\xi}_{0} 
\int \left( \theta(a) - \underline{\tau}_{0}(a) \right) 
\left( \underline{\tau}_{0}(a) - \underline{m}_{0}(x, a) \right) d\mu(a) \\[8pt]
&\quad + \mathbf{1}(s \in \mathcal{S}_{X} \cap \mathcal{S}_{Y}) \cdot 2 \underline{\eta}_{0}\underline{w}_{0}(x, a) \left( \theta(a) - \underline{\tau}_{0}(a) \right)  \left( \underline{m}_{0}(x, a) - y \right),
\end{align*}
where 
\begin{align*}
&\underline{\xi}_{0} = \underline{\eta}_{0} := \frac{1}{P^{0}(S \in \mathcal{S}_{X} \cap \mathcal{S}_{Y})}, \quad \underline{w}_{0}(x, a) := \frac{1}{p^{0}(a \mid x, S \in \mathcal{S}_{X} \cap \mathcal{S}_{Y})} \\
&\underline{m}_{0}(x, a) := \E_{P^0} \left[ Y \mid x, a, S \in \mathcal{S}_{X} \cap \mathcal{S}_{Y} \right].
\end{align*}
If $\mathcal{Q}$ is not locally nonparametric, then the above quantities are still gradients that can be used to construct one-step estimators, but they may not be canonical gradients.

\begin{lemma}[Stochastically approximated loss] \label{lem:onestep_loss}
The (oracle) pointwise loss function associated with the (nonparametric) one-step estimator of \( L_{P^0}(\theta) \) under data fusion is defined as
\[
\ell_{P^0}^{\text{OS}}(\theta, g_0; x, a, y, s)
:= \E_{B \sim \mu}\big[\ell_{P^0}(\theta, g_0; x, a, y, s, B)\big],
\]
where the inner loss function \( \ell_{P^0}(\theta, g_0; x, a, y, s, b) \) given by
\begin{align}
\ell_{P^0}(\theta, g_0; x, a, y, s, b) 
&:= \left(\theta(b) - \theta_{0}(b)\right)^2 \nonumber \\
&\quad + 2\xi_0 \mathbf{1}(s \in \mathcal{S}_X)
    \left(\theta(b) - \theta_{0}(b)\right)
    \left(\theta_{0}(b) - m_0(b, x)\right) \nonumber \\
&\quad + 2\eta_0 w_0(x, a) \mathbf{1}(s \in \mathcal{S}_Y)
    \left(\theta(a) - \theta_{0}(a)\right)
    \left(m_0(x, a) - y\right).
\end{align}
The empirical mean of this loss, \( \mathbb{P}_n\big[\ell_{P^0}(\theta, g_0; \cdot\,) \big] \), is a stochastic approximation of the oracle one-step estimator of \( L_{P^0}(\theta) \). Similarly, the (oracle) loss function without data fusion is defined as
\[
\uline{\ell}_{P^0}^{\text{OS}}(\theta, g_0; x, a, y, s)
:= \E_{B \sim \mu}\big[\uline{\ell}_{P^0}(\theta, g_0; x, a, y, s, B)\big],
\]
where
\begin{align}
\uline{\ell}_{P^0}(\theta, g_0; x, a, y, s, b)
&:= \left(\theta(b) - \theta_{0}(b)\right)^2 \nonumber \\
&\quad + 2\underline{\xi}_0 \mathbf{1}(s \in \mathcal{S}_X \cap \mathcal{S}_Y)
    \left(\theta(b) - \underline{\tau}_0(b)\right)
    \left(\underline{\tau}_0(b) - \underline{m}_0(b, x)\right) \nonumber \\
&\quad + 2\underline{\eta}_0 \underline{w}_0(x, a) \mathbf{1}(s \in \mathcal{S}_X \cap \mathcal{S}_Y)
    \left(\theta(a) - \underline{\tau}_0(a)\right)
    \left(\underline{m}_0(x, a) - y\right).
\label{eq:loss2}
\end{align}
\end{lemma}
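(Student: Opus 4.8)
The plan is to verify directly that, at the true nuisance $g_0$, the pointwise map $(x,a,y,s)\mapsto \E_{B\sim\mu}[\ell_{P^0}(\theta,g_0;x,a,y,s,B)]$ equals $L_{P^0}(\theta)+D^*_{P^0}(\theta)(x,a,y,s)$, where $D^*_{P^0}(\theta)$ is the canonical gradient of $P\mapsto L_P(\theta)$ recorded above (obtained via Corollary~1 of \citet{li_efficient_2023}). Averaging this identity over the sample then produces exactly the one-step estimator $L_{\widehat P_n}(\theta)+\mathbb P_n D^*_{\widehat P_n}(\theta)$ evaluated at $\widehat P_n = P^0$, and the stochastic-approximation statement follows from the independence of the auxiliary draws $B_i\sim\mu$ from the data.

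First I would record the identification $L_{P^0}(\theta)=\int(\theta(b)-\theta_0(b))^2\,\mu(db)$, which is immediate from the earlier identity $\E_{P^0}[\E_{P^0}[\theta(A)-Y\mid X,A=a,S\in\mathcal S_Y]\mid S\in\mathcal S_X]=\theta(a)-\theta_0(a)$; consequently $L_{P^0}(\theta)=\E_{B\sim\mu}[(\theta(B)-\theta_0(B))^2]$ is exactly the plug-in term of the one-step estimator. Next I would match the two correction terms: the first summand of $D^*_{P^0}(\theta)$, namely $\mathbf 1(s\in\mathcal S_X)\,2\xi_0\int(\theta(a')-\theta_0(a'))(\theta_0(a')-m_0(x,a'))\,d\mu(a')$, is precisely $\E_{B\sim\mu}\big[2\xi_0\,\mathbf 1(s\in\mathcal S_X)(\theta(B)-\theta_0(B))(\theta_0(B)-m_0(x,B))\big]$, while the second summand $\mathbf 1(s\in\mathcal S_Y)\,2\eta_0 w_0(x,a)(\theta(a)-\theta_0(a))(m_0(x,a)-y)$ does not involve the dummy $B$ and hence equals its own expectation under $B\sim\mu$. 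Summing the three pieces reproduces $\E_{B\sim\mu}[\ell_{P^0}(\theta,g_0;x,a,y,s,B)]$ term by term with $\ell_{P^0}$ as in \eqref{eq:loss}, so $\ell^{\mathrm{OS}}_{P^0}(\theta,g_0;x,a,y,s)=L_{P^0}(\theta)+D^*_{P^0}(\theta)(x,a,y,s)$; the interchange of the $\mu$-integral with the conditional expectations is justified by Fubini's theorem, using the boundedness of $\theta$, $\theta_0$, $m_0$, $w_0$, $\xi_0$, $\eta_0$ and the sub-exponential tail bound on $Y$ from \Cref{cond:identifiability,cond:target_model,cond:nuisances}. Averaging over the sample gives $\mathbb P_n[\ell^{\mathrm{OS}}_{P^0}(\theta,g_0;\cdot)]=L_{P^0}(\theta)+\mathbb P_n D^*_{P^0}(\theta)$, the asserted one-step estimator; and since $\mathbb P_n[\ell_{P^0}(\theta,g_0;\cdot)]=\frac1n\sum_{i=1}^n\ell_{P^0}(\theta,g_0;X_i,A_i,Y_i,S_i,B_i)$ with $B_1,\dots,B_n$ i.i.d.\ from $\mu$ and independent of the data, it is conditionally unbiased for $\mathbb P_n[\ell^{\mathrm{OS}}_{P^0}(\theta,g_0;\cdot)]$ given the data — i.e., a stochastic approximation of the oracle one-step estimator.

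For the no-fusion case I would repeat this bookkeeping verbatim, now starting from $\underline D^*_{P^0}(\theta)$ and the nuisances $(\underline\xi_0,\underline\eta_0,\underline w_0,\underline m_0,\underline\tau_0)$: the plug-in term is $\int(\theta(b)-\underline\tau_0(b))^2\,\mu(db)$ (via the analogous identification of $L_{P^0}(\theta)$ through $\mathcal S_X\cap\mathcal S_Y$ alone), the first correction integrates against $\mu$ in the dummy $B$, and the second correction is $B$-free; collecting the terms yields $\underline\ell^{\mathrm{OS}}_{P^0}(\theta,g_0;x,a,y,s)=\E_{B\sim\mu}[\underline\ell_{P^0}(\theta,g_0;x,a,y,s,B)]$ with $\underline\ell_{P^0}$ exactly as in \eqref{eq:loss2}, and the stochastic-approximation conclusion transfers unchanged. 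I do not expect a substantive obstacle: the argument is a term-by-term matching of the one-step correction against an $\E_{B\sim\mu}$-representation. The only points needing a little care are (i) recognizing that the lone quadratic term $\int(\theta(b)-\theta_0(b))^2\mu(db)$ is the \emph{entire} plug-in risk $L_{P^0}(\theta)$ rather than part of the gradient correction; (ii) justifying the Fubini interchange via the boundedness and tail conditions; and (iii) stating "stochastic approximation" precisely as conditional unbiasedness given the sample, since the $B_i$ are auxiliary randomization independent of the data.
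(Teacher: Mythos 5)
Your proposal is correct and follows essentially the same route as the paper: both arguments write the oracle one-step estimator as $L_{P^0}(\theta)+\mathbb{P}_n D^{*}_{P^0}(\theta)$ and match it term by term with the $\mu$-expectation of $\ell_{P^0}(\theta,g_0;\cdot)$, the auxiliary draws $B_i\sim\mu$ supplying the Monte Carlo replacement of the $\mu$-integrals. Your only additions—stating the approximation precisely as conditional unbiasedness given the data and noting the Fubini justification—are harmless refinements of the paper's ``$\approx$'' step, not a different argument.
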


\begin{proof}[Proof of \Cref{lem:onestep_loss}]
We construct a one-step estimator of $L_{P^0}(\theta)$ using $D_{P^{0}}^{*}(\theta)$. Given an estimator $P$ of $P^0$, the one-step expansion of $L_{P^0}$ under data fusion is
\begin{align*}
&L_{P^0} + \mathbb{P}_{n}D_{P^0}^{*}(\theta) \\
&= \int \left( \theta(a) - \theta_{0}(a) \right)^2 d\mu(a)  + \frac{2\xi_0}{n} \sum_{i=1}^{n} \mathbf{1}(S_i \in \mathcal{S}_X) \int \left( \theta(a) - \theta_{0}(a) \right)\left( \theta_{0}(a) - m_0(X_{i}, a) \right) d\mu(a) \\
&\quad + \frac{2\eta_0}{n} \sum_{i=1}^{n} \mathbf{1}(S_i \in \mathcal{S}_Y) w_0(X_i, A_i) \left( \theta(A_i) - \theta_{0}(A_i) \right) \left( m_0(X_i, A_i) - Y_i \right).
\end{align*}

Since the first two terms involve an expectation with respect to the known measure \( \mu \), we can approximate them using an artificial sample \( B_i \sim \mu \) and the corresponding empirical average:
\begin{align*}
&\approx \frac{1}{n} \sum_{i=1}^{n} \left( \theta(B_i) - \theta_{0}(B_{i}) \right)^2 + \frac{2\xi_0}{n} \sum_{i=1}^{n} \mathbf{1}(S_i \in \mathcal{S}_X) \left( \theta(B_i) - \theta_{0}(B_i) \right)\left( \theta_{0}(B_i) - m_0(X_i, B_i) \right) \\
&\quad + \frac{2\eta_0}{n} \sum_{i=1}^{n} \mathbf{1}(S_i \in \mathcal{S}_Y) w_0(A_i, X_i) \left( \theta(A_i) - \theta_{0}(A_i) \right) \left( m_0(X_i, A_i) - Y_i \right) \\
&= \mathbb{P}_n\big[\ell_{P^0}(\theta, g_0; \cdot) \big],
\end{align*}
where $Z_{i} = (X_{i}, A_{i}, Y_{i}, S_{i}, B_{i})$. Hence, the approximated loss under data fusion is defined as
\begin{align*}
\ell_{P^0}(\theta, g_0; (x, a, y, s, b)) := 
&\left(\theta(b) - \theta_{0}(b)\right)^2 \nonumber \\
&\quad + 2\xi_0 \mathbf{1}(s \in \mathcal{S}_X)\left(\theta(b) - \theta_{0}(b)\right)\left(\theta_{0}(b) - m_0(x, b)\right) \nonumber \\
&\quad + 2\eta_0 w_0(x, a) \mathbf{1}(s \in \mathcal{S}_Y)\left(\theta(a) - \theta_{0}(a)\right)\left(m_0(x, a) - y\right),
\end{align*}
with nuisance parameters defined as
\begin{align*}
&\xi_0 := \frac{1}{P^0(S \in \mathcal{S}_X)}, \quad \eta_0 := \frac{1}{P^0(S \in \mathcal{S}_Y)}, \quad w_0(x, a) := \frac{p^0(x \mid S \in \mathcal{S}_X)}{p^0(x \mid S \in \mathcal{S}_Y)} \cdot \frac{1}{p^0(a \mid x, S \in \mathcal{S}_Y)} \\
&m_0(x, a) := \E_{P^0}[Y \mid x, a, S \in \mathcal{S}_Y].
\end{align*}

A similar procedure can be applied to define the one-step loss function without data fusion.
\end{proof}

\subsection{Closed-Form Solution of Kernel Ridge Regression (\Cref*{lem:rkhs_alg})}
\begin{lemma}[Closed-form solution of kernel ridge regression]\label{lem:rkhs_alg}
The closed-form estimator \( \widehat{\theta} \) obtained in \Cref{alg:rkhs_est} corresponds to the solution of the kernel ridge regression problem in the RKHS \( \mathcal{H} \) induced by the kernel \( \mathcal{K} \), with the objective function given by
\[
\mathbb{P}_n\big[\ell_{P^0}(\theta, \widehat{g}; \cdot) \big] + \lambda_{n} \|\theta\|^{2}_{\mathcal{H}},
\]
where \( \{u_i, v_i\}_{i=1}^{n} \) are defined as in \Cref{alg:rkhs_est}. The resulting estimator admits the following closed-form expression:
\[
\widehat{\theta}(\cdot) = \frac{1}{\sqrt{n}} \sum_{j=1}^{n} \left[ \widehat{\beta}_{j} \mathcal{K}(\cdot, a_{j}) + \widehat{\gamma}_{j} \mathcal{K}(\cdot, b_{j}) \right],
\]
where the weights \( \widehat{\beta} \) and \( \widehat{\gamma} \) are computed as described in \Cref{alg:rkhs_est}.
\end{lemma}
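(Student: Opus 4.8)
The plan is to set up the kernel ridge regression problem over $\mathcal{H}$ explicitly, invoke the representer theorem to reduce the infinite-dimensional optimization to a finite one over the evaluation points $\{a_j\}$ and $\{b_j\}$, and then solve the resulting finite-dimensional quadratic program in closed form. First I would write the objective $\mathbb{P}_n[\ell_{P^0}(\theta,\widehat g;\cdot)] + \lambda_n\|\theta\|^2_{\mathcal H}$ explicitly using the definition of $\ell_{P^0}$ in \eqref{eq:loss}, grouping terms so that it becomes $\frac{1}{n}\sum_{i=1}^n\bigl[\theta(b_i)^2 + 2v_i\theta(b_i) + 2u_i\theta(a_i)\bigr] + \lambda_n\|\theta\|^2_{\mathcal H}$, where $u_i,v_i$ are exactly the quantities defined in \Cref{alg:rkhs_est}. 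Checking that the definitions of $u_i,v_i$ in the algorithm correctly collect the coefficients — in particular the $-\widehat\tau(b_i)$ term in $v_i$ arising from completing the square on $(\theta(b_i)-\widehat\tau(b_i))^2$, which contributes a $\theta(b_i)^2$ term (kept) and a cross term — is the bookkeeping step that must be done carefully.

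Next I would apply the representer theorem: since the objective depends on $\theta$ only through the evaluations $\theta(a_1),\dots,\theta(a_n),\theta(b_1),\dots,\theta(b_n)$ and the RKHS norm $\|\theta\|_{\mathcal H}$, any minimizer lies in the span of $\{\mathcal K(\cdot,a_j),\mathcal K(\cdot,b_j)\}_{j=1}^n$. Writing $\theta(\cdot) = \frac{1}{\sqrt n}\sum_{j=1}^n[\beta_j\mathcal K(\cdot,a_j) + \gamma_j\mathcal K(\cdot,b_j)]$ (the $1/\sqrt n$ scaling is a convenient normalization matching the algorithm), the reproducing property gives $\theta(a_i)$ and $\theta(b_i)$ as linear functions of $(\beta,\gamma)$ via the blocks of the Gram matrix, and $\|\theta\|^2_{\mathcal H} = \frac1n(\beta,\gamma)^\T \mathbf K (\beta,\gamma)$ with $\mathbf K$ built from $\mathcal K(a'_i,a'_j)/n$ on the stacked vector $a' = (a_i)\cup(b_i)$ as in the algorithm. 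Substituting, the objective becomes a quadratic in $(\beta,\gamma)\in\mathbb R^{2n}$.

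Then I would take the gradient of this quadratic in $(\beta,\gamma)$, set it to zero, and solve the resulting linear system block by block. The $\beta$-block of the stationarity condition involves only the $u_i$ terms and the $\|\theta\|_{\mathcal H}^2$ penalty, decoupling to give $\widehat\beta = -\mathbf u/(\lambda_n\sqrt n)$; substituting back, the $\gamma$-block yields $(\mathbf K_{22} + \lambda_n\mathbf I)\widehat\gamma = -(\mathbf v/\sqrt n + \mathbf K_{21}\widehat\beta)$, invertible by \Cref{cond:kernel} since $\mathbf K_{22}$ is positive semidefinite and $\lambda_n>0$. This reproduces exactly the formulas in \Cref{alg:rkhs_est}. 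I do not expect a serious obstacle here — the main risk is an arithmetic slip in tracking the constant factors ($1/n$ from $\mathbb P_n$, $1/\sqrt n$ from the expansion, and the $2$'s in the cross terms), so the step requiring the most care is verifying that these bookkeeping constants line up so that the stationarity equations collapse to precisely the stated closed form; strict convexity of the objective (from $\lambda_n>0$) guarantees the stationary point is the unique global minimizer.
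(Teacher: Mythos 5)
Your proposal follows essentially the same route as the paper's proof: expand the empirical loss to identify $u_i$, $v_i$ and the quadratic term, restrict $\theta$ to the span of $\{\mathcal{K}(\cdot,a_j),\mathcal{K}(\cdot,b_j)\}$, substitute via the Gram-matrix blocks, and solve the stationarity conditions blockwise using invertibility from \Cref{cond:kernel}. Your explicit appeal to the representer theorem and to strict convexity of the penalized objective makes the span restriction and uniqueness slightly more rigorous than the paper's presentation, but the argument is the same.
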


\begin{proof}[Proof of \Cref{lem:rkhs_alg}]
The empirical objective function for kernel ridge regression is given by
\begin{align*}
\widehat{R}(\theta) 
&:= \mathbb{P}_n\left[\ell_{P^0}(\theta, \widehat{g}; \cdot)\right] + \lambda_n \|\theta\|_{\mathcal{H}}^2 \\
&= \frac{1}{n}\sum_{i=1}^n \left( \theta(b_i) - \widehat{\tau}(b_i) \right)^2 
+ \frac{2}{n}\sum_{i=1}^n \widehat{\xi} \cdot \mathbf{1}(s_i \in \mathcal{S}_X) \left( \theta(b_i) - \widehat{\tau}(b_i) \right) \left( \widehat{\tau}(b_i) - \widehat{m}(x_i, b_i) \right) \\
&\quad + \frac{2}{n}\sum_{i=1}^n \widehat{\eta} \cdot \widehat{w}(x_i, a_i) \cdot \mathbf{1}(s_i \in \mathcal{S}_Y) \left( \theta(a_i) - \widehat{\tau}(a_i) \right) \left( \widehat{m}(x_i, a_i) - y_i \right) + \lambda_n \|\theta\|_{\mathcal{H}}^2.
\end{align*}
We reorganize the expression by expanding the quadratic term and regrouping linear and constant components:
\begin{align*}
\widehat{R}(\theta) 
&= \frac{1}{n}\sum_{i=1}^n \theta(b_i)^2 
+ \frac{2}{n}\sum_{i=1}^n v_i \cdot \theta(b_i) 
+ \frac{2}{n}\sum_{i=1}^n u_i \cdot \theta(a_i) 
+ \frac{1}{n}\sum_{i=1}^n C_i 
+ \lambda_n \|\theta\|_{\mathcal{H}}^2,
\end{align*}
where
\begin{align*}
u_i &:= \widehat{\eta} \cdot \widehat{w}(a_i, x_i) \cdot \mathbf{1}(s_i \in \mathcal{S}_Y) \cdot (\widehat{m}(x_i, a_i) - y_i), \\
v_i &:= \widehat{\xi} \cdot \mathbf{1}(s_i \in \mathcal{S}_X) \cdot \left( \widehat{\tau}(b_i) - \widehat{m}(x_i, b_i) \right) - \widehat{\tau}(b_i), \\
C_i &:= \widehat{\tau}(b_i)^2 
- 2\widehat{\xi} \cdot \mathbf{1}(s_i \in \mathcal{S}_X) \cdot \left( \widehat{\tau}(b_i) - \widehat{m}(x_i, b_i) \right) \cdot \widehat{\tau}(b_i) \\
&\quad - 2\widehat{\eta} \cdot \widehat{w}(x_i, a_i) \cdot \mathbf{1}(s_i \in \mathcal{S}_Y) \cdot (\widehat{m}(x_i, a_i) - y_i) \cdot \widehat{\tau}(a_i).
\end{align*}

Denote by \( \theta \in \mathcal{H} \) of the form
\[
\theta(\cdot) = \frac{1}{n} \sum_{j=1}^n [\beta_j \mathcal{K}(\cdot, a_j) + \gamma_j \mathcal{K}(\cdot, b_j)].
\]
Then the empirical objective becomes
\begin{align*}
\widehat{R}(\theta) 
&= \frac{1}{n} \sum_{i=1}^n \left( \sum_{j=1}^n \frac{1}{n} \beta_j \mathcal{K}(b_i, a_j) + \gamma_j \mathcal{K}(b_i, b_j) \right)^2 + \frac{2}{n} \sum_{i=1}^n v_i \left( \sum_{j=1}^n \frac{1}{n} \beta_j \mathcal{K}(b_i, a_j) + \gamma_j \mathcal{K}(b_i, b_j) \right) \\
&\quad + \frac{2}{n} \sum_{i=1}^n u_i \left( \sum_{j=1}^n \left( \frac{1}{n} \beta_j \mathcal{K}(a_i, a_j) + \gamma_j \mathcal{K}(a_i, b_j) \right) \right) + \lambda_n \langle \theta, \theta \rangle_{\mathcal{H}} + \sum_{i=1}^n C_i .
\end{align*}
Let \( u := (u_1, \dots, u_n)^\top \), \( v := (v_1, \dots, v_n)^\top \), and define the block Gram matrix
\[
\mathbf{K} = 
\begin{pmatrix}
\mathbf{K}_{11} & \mathbf{K}_{12} \\
\mathbf{K}_{21} & \mathbf{K}_{22}
\end{pmatrix}
:= \frac{1}{n} \begin{pmatrix}
\mathcal{K}(a_i, a_j) & \mathcal{K}(a_i, b_j) \\
\mathcal{K}(b_i, a_j) & \mathcal{K}(b_i, b_j)
\end{pmatrix}.
\]

Then the objective simplifies to:
\begin{align*}
\widehat{R}(\theta) 
&= \frac{1}{n} \left( 
\beta^\top \mathbf{K}_{12} \mathbf{K}_{21} \beta 
+ \beta^\top \mathbf{K}_{12} \mathbf{K}_{22} \gamma 
+ \gamma^\top \mathbf{K}_{22} \mathbf{K}_{21} \beta 
+ \gamma^\top \mathbf{K}_{22} \mathbf{K}_{22} \gamma \right) \\
&\quad + \frac{2}{n} \left(
\beta^\top \mathbf{K}_{12} v + \gamma^\top \mathbf{K}_{22} v + \beta^\top \mathbf{K}_{11} u + \gamma^\top \mathbf{K}_{21} u \right) \\
&\quad + \frac{\lambda_n}{n} \left(
\beta^\top \mathbf{K}_{11} \beta + 2\beta^\top \mathbf{K}_{12} \gamma + \gamma^\top \mathbf{K}_{22} \gamma \right) + \sum_{i=1}^{n}C_{i}.
\end{align*}
We compute the gradients of the empirical risk with respect to \( \beta \) and \( \gamma \) as follows:
\begin{align*}
    \begin{bmatrix}
      \frac{\partial}{\partial \beta} \widehat{R}(\theta) \\[0.5em] \frac{\partial}{\partial \gamma} \widehat{R}(\theta)
    \end{bmatrix}
    &= \frac{2}{n}
    \begin{pmatrix}
    \mathbf{K}_{11} & \mathbf{K}_{12} \\
    \mathbf{K}_{21} & \mathbf{K}_{22}
    \end{pmatrix}
    \begin{pmatrix}
    u + \lambda_n \beta \\
    \mathbf{K}_{21} \beta + \mathbf{K}_{22} \gamma + v + \lambda_n \gamma
    \end{pmatrix}.
\end{align*}

Setting the gradient to zero and using the fact that, by \Cref{cond:kernel}, 
the block matrix $\mathbf{K}$ is invertible, we obtain that the minimizer 
$(\widehat{\beta}, \widehat{\gamma})$ satisfies
\begin{align*}
    \left(\widehat{\beta},\widehat{\gamma}\right) 
    &= \left(
        -\frac{u}{\lambda_n},\;
        -\left( \mathbf{K}_{22} + \lambda_n I_n \right)^{-1} 
          \left( \mathbf{K}_{21} \widehat{\beta} + v \right)
      \right),
\end{align*}
where $I_n$ denotes the identity matrix $n \times n$.
\end{proof}

\paragraph{Notation for functional derivatives.}
Let $L(\theta,g)$ be a functional defined on 
$\Theta \times \mathcal{G}$, where $\Theta \subset L^2(\mu)$ 
and $\mathcal{G} \subset L^2(\mu\times Q_X^0; \ell^2)$.

\begin{itemize}
\item The \emph{(Gâteaux) derivative} of $L$ with respect to $\theta$ at 
$(\theta^*, g_0)$ in the direction $h \in L^2(\mu)$ is
\[
\mathcal{D}_{\theta} L(\theta^*, g_0)[h]
:= \left.\frac{d}{dt} L(\theta^* + t h, g_0)\right|_{t=0}.
\]

\item The derivative of $L$ with respect to $g$ at $(\theta^*, g_0)$ 
in the direction $k \in L^2(\mu\times Q_X^0; \ell^2)$ is
\[
\mathcal{D}_{g} L(\theta^*, g_0)[k]
:= \left.\frac{d}{dt} L(\theta^*, g_0 + t k)\right|_{t=0}.
\]

\item The \emph{second derivative} with respect to $\theta$ in directions 
$h_1, h_2 \in L^2(\mu)$ is
\[
\mathcal{D}^2_{\theta} L(\theta^*, g_0)[h_1,h_2]
:= \left.\frac{\partial^2}{\partial s\,\partial t}
L(\theta^* + t h_1 + s h_2, g_0)\right|_{t=0,\,s=0}.
\]

\item The \emph{mixed derivative} with respect to $\theta$ and $g$ in directions 
$h \in L^2(\mu)$ and $k \in L^2(\mu\times Q_X^0; \ell^2)$ is
\[
\mathcal{D}_{\theta} \mathcal{D}_{g} L(\theta^*, g_0)[h,k]
:= \left.\frac{\partial^2}{\partial t\,\partial s}
L(\theta^* + t h,\, g_0 + s k)\right|_{t=0,\,s=0}.
\]

\end{itemize}

\subsection{Properties of the Population Loss (\Cref*{lem:neyman})}
\begin{lemma}[Properties of the population loss]\label{lem:neyman}
The loss function in \Cref{eq:loss} satisfies the following properties. Assume that \( \Theta \) is convex, and recall that \( \theta^* \) denotes the minimizer of the oracle risk over \( \Theta \).

\begin{enumcond}\label{cond:loss_properties}
    \item \textbf{(First-order optimality)}\label{cond:first_order}
    The minimizer \( \theta^* \) satisfies the first-order optimality condition:
    \begin{align*}
        \mathcal{D}_{\theta} L_{P^0}(\theta^*, g_0)[\theta - \theta^*] 
        &= 2 \, \E_{\mu} \left[ \big( \theta(A) - \theta^*(A) \big) \big( \theta^*(A) - \theta_{0}(A) \big) \right] \geq 0,
    \end{align*}
    for all \( \theta \in L^2(\mu) \).

    \item \textbf{(Neyman orthogonality)}\label{cond:neyman}
    The population risk \( L_{P^0} \) is Neyman orthogonal:
    \begin{align*}
        \mathcal{D}_g \mathcal{D}_{\theta} L_{P^0}(\theta^*, g_0)[\theta - \theta^*, g - g_0] = 0,
    \end{align*}
    for all \( \theta \in L^2(\mu) \) and \( g \in \mathcal{G} \).

    \item \textbf{(Second-order smoothness and strong convexity)}\label{cond:second_order}
    The population risk is second-order smooth and strongly convex in \( \theta \):
    \begin{align*}
        \mathcal{D}^2_{\theta} L_{P}(\overline{\theta}, g)[\theta - \theta^*, \theta - \theta^*] 
        &= 2 \, \E_{\mu} \left[ \big( \theta(A) - \theta^*(A) \big)^2 \right] 
        = 2 \| \theta - \theta^* \|_{L^2(\mu)}^2,
    \end{align*}
    for all \( \theta \in \Theta \), \( \overline{\theta} \in L^2(\mu) \), and \( g \in \mathcal{G} \).

    \item \textbf{(Higher-order smoothness)}\label{cond:higher_order}
    The population risk is higher-order smooth in the nuisance parameter:
    \begin{align*}
        \left| \mathcal{D}^2_{g} \mathcal{D}_{\theta} L_{P}(\theta^*, \overline{g})[\theta - \theta^*, g - g_0, g - g_0] \right| 
        \leq 2 M_{\lambda} \| \theta - \theta^* \|_{L^2(\mu)} \| g - g_0 \|_{L^4(Q^{0}_{X} \times \mu; \ell^2)}^2,
    \end{align*}
    for all \( \theta \in L^2(\mu) \), \( g \in \mathcal{G} \), and \( \overline{g} \in \operatorname{star}(\mathcal{G}, g_0) \).
\end{enumcond}
\end{lemma}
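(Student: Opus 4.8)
The plan is to write the population loss as $L_{P^{0}}(\theta,g)=\E_{\widetilde P^{0}}[\ell_{P^{0}}(\theta,g;Z)]$ and to obtain each property by differentiating the pointwise loss \eqref{eq:loss} under the expectation and simplifying with the data-fusion identities of \Cref{cond:identifiability}. Differentiation under $\E_{\widetilde P^{0}}$ is legitimate because $\ell_{P^{0}}$ is a polynomial of degree two in $\theta$ and degree three in $g$ whose $Y$-dependence is linear, with $\E_{\widetilde P^{0}}|Y|<\infty$ by \Cref{cond:subexp-y} and all nuisance factors uniformly bounded by \Cref{cond:target_model,cond:nuisances} (and, by convexity, the same bounds hold on $\operatorname{star}(\mathcal{G},g_{0})$). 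We shall repeatedly use: (i) $P^{0}_{X}(\cdot\mid S\in\mathcal{S}_{X})=Q^{0}_{X}$, so $\E_{P^{0}}[\mathbf{1}(S\in\mathcal{S}_{X})F(X)]=P^{0}(S\in\mathcal{S}_{X})\E_{Q^{0}_{X}}[F]$ and, in particular, $\E_{Q^{0}_{X}}[m_{0}(X,\cdot)]=\theta_{0}$; (ii) $\E_{P^{0}}[Y\mid X,A,S\in\mathcal{S}_{Y}]=m_{0}(X,A)$ by \Cref{cond:common}, so under $\mathbf{1}(S\in\mathcal{S}_{Y})$ a factor $m(X,A)-Y$ may be replaced, after conditioning, by $m(X,A)-m_{0}(X,A)$; and (iii) $w_{0}=d(Q^{0}_{X}\times\mu)/dP^{0}_{X,A}(\cdot\mid S\in\mathcal{S}_{Y})\in[M_{w}^{-1},M_{w}]$, so $\E_{P^{0}}[\mathbf{1}(S\in\mathcal{S}_{Y})\,u(X,A)F(X,A)]=P^{0}(S\in\mathcal{S}_{Y})\E_{Q^{0}_{X}\times\mu}[(u/w_{0})F]$. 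Since $B\sim\mu$ is independent of $(X,A,Y,S)$, any summand of $\ell_{P^{0}}$ with argument $b$ contributes an $\E_{B\sim\mu}$, i.e.\ an $L^{2}(\mu)$ pairing.

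\textbf{Parts (a) and (c).} Differentiating $\ell_{P^{0}}$ in $\theta$ along $h$ at $g=g_{0}$ and taking $\E_{\widetilde P^{0}}$: the squared summand gives $2\langle\theta-\theta_{0},h\rangle_{L^{2}(\mu)}$; the $\mathcal{S}_{X}$ cross term pairs $h$ with $\E_{Q^{0}_{X}}[\theta_{0}(\cdot)-m_{0}(X,\cdot)]=0$ by (i); and the $\mathcal{S}_{Y}$ summand gives, via (ii), the factor $m_{0}-m_{0}=0$. Hence $\mathcal{D}_{\theta}L_{P^{0}}(\theta,g_{0})[h]=2\langle\theta-\theta_{0},h\rangle_{L^{2}(\mu)}$; evaluating at the point $\theta^{*}$ in the direction $\theta-\theta^{*}$ gives $2\langle\theta^{*}-\theta_{0},\theta-\theta^{*}\rangle_{L^{2}(\mu)}$, the displayed identity of (a), and this is nonnegative because $\theta^{*}$ minimizes the convex $L_{P^{0}}(\cdot,g_{0})=\|\cdot-\theta_{0}\|_{L^{2}(\mu)}^{2}$ over the convex set $\Theta$, so $t\mapsto L_{P^{0}}(\theta^{*}+t(\theta-\theta^{*}),g_{0})$ has nonnegative right derivative at $0$ for every $\theta\in\Theta$. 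For (c), only the summand $(\theta(b)-\tau(b))^{2}$ is quadratic in $\theta$ — the other two depend on $\theta$ solely through the affine factor $\theta(\cdot)-\tau(\cdot)$ — so $\mathcal{D}^{2}_{\theta}\ell_{P^{0}}(\overline{\theta},g;z)[h,h]=2h(b)^{2}$ pointwise for every $\overline{\theta}$ and $g$, and $\E_{\widetilde P^{0}}=\E_{B\sim\mu}$ over the free variable $b$ yields $2\|h\|_{L^{2}(\mu)}^{2}$; nothing here depends on the $(X,A,Y,S)$-marginal, so the identity holds for every $P$.

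\textbf{Part (b).} View $\mathcal{D}_{\theta}L_{P^{0}}(\theta^{*},g)[h]$ as a function of $g$ and differentiate at $g_{0}$ along $g-g_{0}=:(k_{\xi},k_{\eta},k_{w},k_{m},k_{\tau})$. The $k_{\xi}$, $k_{\eta}$, $k_{w}$ directions each multiply a factor that is zero at $g_{0}$ — $\theta_{0}-m_{0}$ after averaging $X\mid S\in\mathcal{S}_{X}$ via (i), or $m_{0}-Y$ after conditioning via (ii) — and so contribute nothing. What remains is $-2\langle k_{\tau},h\rangle_{L^{2}(\mu)}$ from the squared summand and $+2\langle k_{\tau},h\rangle_{L^{2}(\mu)}$ from the $\mathcal{S}_{X}$ cross term, which cancel, together with $-2\langle\E_{Q^{0}_{X}}[k_{m}(X,\cdot)],h\rangle_{L^{2}(\mu)}$ from the $m$-direction of the $\mathcal{S}_{X}$ cross term and $+2\langle\E_{Q^{0}_{X}}[k_{m}(X,\cdot)],h\rangle_{L^{2}(\mu)}$ from the $m$-direction of the $\mathcal{S}_{Y}$ summand (where the weight at $g_{0}$ is $w_{0}/w_{0}=1$ by (iii)), which also cancel. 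Hence $\mathcal{D}_{g}\mathcal{D}_{\theta}L_{P^{0}}(\theta^{*},g_{0})[\theta-\theta^{*},g-g_{0}]=0$.

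\textbf{Part (d), the main work.} Write $h:=\theta-\theta^{*}$, $k:=g-g_{0}$, and compute the pointwise third-order derivative $\mathcal{D}^{2}_{g}\mathcal{D}_{\theta}\ell_{P^{0}}(\theta^{*},\overline{g};z)[h,k,k]$. After the first $\theta$-derivative at $\theta^{*}$: the squared summand is affine in $g$ (through $\tau$), so contributes $0$; the $\mathcal{S}_{X}$ cross term, being quadratic in $g$, contributes $4\,\mathbf{1}(s\in\mathcal{S}_{X})\,h(b)\,k_{\xi}\bigl(k_{\tau}(b)-k_{m}(x,b)\bigr)$, independently of $\overline{g}$; and the $\mathcal{S}_{Y}$ summand, a product of the three affine-in-$g$ factors $\eta$, $w(x,a)$, $m(x,a)-y$, contributes $4\,\mathbf{1}(s\in\mathcal{S}_{Y})\,h(a)\bigl\{k_{\eta}k_{w}(x,a)(\overline{m}(x,a)-y)+k_{\eta}k_{m}(x,a)\overline{w}(x,a)+k_{w}(x,a)k_{m}(x,a)\overline{\eta}\bigr\}$. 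Now take $\E_{\widetilde P^{0}}$: in the $\mathcal{S}_{Y}$ part replace $\overline{m}(X,A)-Y$ by $\overline{m}(X,A)-m_{0}(X,A)$ via (ii), which is bounded by $2$ by \Cref{cond:bounded-m,cond:bounded-w-m}; pass the $\mathbf{1}(S\in\mathcal{S}_{X})$-average to $Q^{0}_{X}$ via (i) and the $\mathbf{1}(S\in\mathcal{S}_{Y})$-average to $Q^{0}_{X}\times\mu$ via (iii), producing factors $w_{0}^{-1}\le M_{w}$ and $P^{0}(S\in\mathcal{S}_{X}),P^{0}(S\in\mathcal{S}_{Y})\le 1$; and use $|\overline{\eta}|\le M_{\eta}$, $|\overline{w}|\le M_{w}$. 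Each resulting integral is bounded by Cauchy--Schwarz — in $L^{2}(\mu)$ for the $h$-factor (with $\|h\|_{L^{2}(Q^{0}_{X}\times\mu)}=\|h\|_{L^{2}(\mu)}$) and in $L^{2}(Q^{0}_{X}\times\mu)$ for the nuisance factors — together with the pairing bound $ab\le\tfrac12(a^{2}+b^{2})$, e.g.\ $|k_{\eta}|\,\|k_{w}\|_{L^{2}(Q^{0}_{X}\times\mu)}\le\tfrac12\|g-g_{0}\|_{L^{4}(Q^{0}_{X}\times\mu;\ell^{2})}^{2}$ since $\|g-g_{0}\|_{L^{4}(Q^{0}_{X}\times\mu;\ell^{2})}^{4}\ge 4\,\E_{Q^{0}_{X}\times\mu}[k_{\eta}^{2}k_{w}(X,A)^{2}]$, and similarly for the pairs $(k_{\xi},k_{\tau})$, $(k_{\xi},k_{m})$, $(k_{\eta},k_{m})$, $(k_{w},k_{m})$. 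Summing the four contributions yields a bound of the form $2M_{\lambda}\,\|\theta-\theta^{*}\|_{L^{2}(\mu)}\,\|g-g_{0}\|_{L^{4}(Q^{0}_{X}\times\mu;\ell^{2})}^{2}$, uniformly over $P^{0}\in\mathcal{P}$ (with $Q^{0}$ the associated target). The main obstacle is precisely this last step: correctly extracting the third-order mixed derivative of a cubic in $g$, and then threading the two changes of measure together with the Cauchy--Schwarz/AM--GM estimates so that the accumulated multiplicative constant is exactly the stated $M_{\lambda}=M_{w}^{-1}(M_{\eta}+M_{w}+2)$ rather than a larger multiple.
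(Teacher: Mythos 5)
Parts (a)--(c) of your proposal are correct and essentially reproduce the paper's own argument: you differentiate the loss (pointwise, then integrate; the paper differentiates along paths, which amounts to the same computation), use $\E_{Q^{0}_{X}}[m_{0}(X,\cdot)]=\theta_{0}$, the conditional mean-zero property of $m_{0}-Y$ given $(X,A,S\in\mathcal{S}_{Y})$, and the change of measure through $w_{0}$ to kill or cancel terms, and you prove nonnegativity in (a) by the same convex-projection inequality the paper uses. Your cancellation bookkeeping in (b) and the one-line computation in (c) match the paper exactly.

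The gap is in (d), and you flag it yourself. Your pointwise third derivative is right, and after the change of measure it coincides with the paper's expression $4\E_{Q^{0}_{X}\times\mu}\big[h\,\tfrac{\xi-\xi_{0}}{\xi_{0}}\big((\tau-\theta_{0})-(m-m_{0})\big)\big]+4\E_{Q^{0}_{X}\times\mu}\big[h\big(\tfrac{\eta-\eta_{0}}{\eta_{0}}\tfrac{w-w_{0}}{w_{0}}(\overline m-m_{0})+\tfrac{\overline\eta}{\eta_{0}}\tfrac{w-w_{0}}{w_{0}}(m-m_{0})+\tfrac{\eta-\eta_{0}}{\eta_{0}}\tfrac{\overline w}{w_{0}}(m-m_{0})\big)\big]$. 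But the bounding strategy you describe --- pairwise Cauchy--Schwarz plus $\E[k_{i}^{2}k_{j}^{2}]^{1/2}\le\tfrac12\|g-g_{0}\|_{L^{4}(Q^{0}_{X}\times\mu;\ell^{2})}^{2}$ applied to each cross term and then summed --- does not deliver the stated constant: summing replaces a maximum by a sum and the factor $1/(\eta_{0}w_{0})\le M_{w}$ multiplies every $\mathcal{S}_{Y}$ term, so you accumulate a constant of order $4+2M_{w}(M_{\eta}+M_{w}+2)$ rather than $2M_{\lambda}$. The paper closes this step by a different device: it writes the integrand as $h\,(g-g_{0})^{\top}A(g-g_{0})$ for an explicit symmetric $5\times5$ matrix $A$ (nonzero entries $1/\xi_{0}$, $(\overline m-m_{0})/(\eta_{0}w_{0})$, $\overline w/(\eta_{0}w_{0})$, $\overline\eta/(\eta_{0}w_{0})$), computes the characteristic polynomial to get eigenvalues $\{0,\pm 1/\xi_{0},\pm\sqrt{\overline\eta^{2}+\overline w^{2}+(\overline m-m_{0})^{2}}/(\eta_{0}w_{0})\}$, bounds the operator norm $\|A\|$ by $M_{\lambda}$, and only then applies a single Cauchy--Schwarz, yielding $2\|h\|_{L^{2}(\mu)}\,\E[\|A\|^{2}\|g-g_{0}\|_{2}^{4}]^{1/2}$. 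That operator-norm (quadratic-form) step is the missing idea in your sketch; without it, or something equivalent, your route only gives the inequality with a strictly larger constant. A further caution: the exact target you set yourself, $M_{\lambda}=M_{w}^{-1}(M_{\eta}+M_{w}+2)$ from the main text, is not what even the paper's own appendix argument produces (it bounds $\|A\|$ by $M_{\eta}+M_{w}+2$, and that step itself tacitly uses $\eta_{0}w_{0}\ge 1$), so trying to recover that precise value by pairwise AM--GM estimates cannot succeed; the quadratic-form/eigenvalue bound is the right tool, with whatever constant it yields.
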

\begin{proof}[Proof of \Cref{lem:neyman}]
We will use the fact that $\Theta$ is convex to prove first-order optimality. Since $\Theta$ is convex, $t\theta - (1-t)\theta^{*} \in \Theta$ whenever $\theta, \theta^{*} \in \Theta$ and $t \in [0, 1]$. By the definition of $\theta^{*}$,
\begin{align*}
&\|\theta_{0} - \theta^{*} \|_{L^{2}(\mu)} = \inf_{\theta \in \Theta} \|\theta_{0} - \theta \|_{L^{2}(\mu)} \leq \|\theta_{0} - (t\theta + (1-t)\theta^{*}) \|_{L^{2}(\mu)}
\end{align*}
implies
\begin{align*}
\|\theta_{0} - \theta^{*} \|_{L^{2}(\mu)}^{2} &\leq \|\theta_{0} - (t\theta + (1-t)\theta^{*}) \|_{L^{2}(\mu)}^{2} \\
&= \|\theta_{0} - \theta^{*} - t(\theta - \theta^{*}) \|_{L^{2}(\mu)}^{2} \\
&= \|\theta_{0} - \theta^{*} \|_{L^{2}(\mu)}^{2} - 2t \langle \theta_{0} - \theta^{*}, \theta - \theta^{*} \rangle_{L^{2}(\mu)} + t^{2}\|\theta - \theta^{*} \|_{L^{2}(\mu)}^{2}
\end{align*}
Canceling out the first term and for any $t \in (0, 1]$, we can rearrange terms and divide both sides by $t$ to find
\begin{align*}
2 \langle \theta_{0} - \theta^{*}, \theta - \theta^{*} \rangle_{L^{2}(\mu)} \leq t\|\theta - \theta^{*} \|_{L^{2}(\mu)}^{2} .
\end{align*}
Since above inequality holds for any $t \in [0, 1)$, we can take right limits as $t \rightarrow 0+$. Then the right handside goes to 0. This implies that
\begin{align*}
2 \langle \theta_{0} - \theta^{*}, \theta - \theta^{*} \rangle_{L^{2}(\mu_{A})} \leq \lim_{t \rightarrow 0+} t\|\theta - \theta^{*} \|_{L^{2}(\mu_{A})}^{2} = 0.
\end{align*}
Then for any $\theta \in \Theta$, the first order derivative of the population risk with respect to $\theta$,
\begin{align*}
&\mathcal{D}_{\theta} L_{\mathcal{P}^{0}}(\theta^{*}, g_{0})\left[ \theta - \theta^{*} \right] \\
&= \left. \frac{d}{dt} \, \E_{\mu} \left[ \left( \theta^{*}(A) + t(\theta(A) - \theta^{*}(A)) - \theta_{0}(A) \right)^{2} \right] \right|_{t=0} \\
&\quad + \left. \frac{d}{dt} \, \E_{P^{0}} \left[ \mathbf{1}(S \in \mathcal{S}_{X}) \cdot 2\xi_{0} \cdot \E_{\mu} \left[ \left( \theta^{*}(A) + t(\theta(A) - \theta^{*}(A)) \right) \left( \theta_{0}(A) - m_{0}(X, A) \right) \right] \right] \right|_{t=0} \\
&\quad + \left. \frac{d}{dt} \, \E_{P^{0}} \left[ \mathbf{1}(S \in \mathcal{S}_{Y}) \cdot 2\eta_{0} w_{0}(X, A) \cdot \left( \theta^{*}(A) + t(\theta(A) - \theta^{*}(A)) \right) \left( m_{0}(X, A) - Y \right) \right] \right|_{t=0} \\
&= 2 \E_{\mu} \left[ \left( \theta(A) - \theta^{*}(A) \right) \left( \theta^{*}(A) - \theta_{0}(A) \right) \right] \\
&\quad + 2 \E_{Q_{X}^{0}} \left[ \E_{\mu} \left[ \left( \theta(A) - \theta^{*}(A) \right) \left( \theta_{0}(A) - m_{0}(X, A) \right) \right] \right] \\
&\quad + 2 \E_{Q_{X}^{0}} \left[ \E_{\mu} \left[ \E_{Q_{Y \mid X, A}^{0}} \left[ \left( \theta(A) - \theta^{*}(A) \right) \left( m_{0}(X, A) - Y \right) \right] \right] \right] \\
&= 2 \E_{\mu} \left[ \left( \theta(A) - \theta^{*}(A) \right) \left( \theta^{*}(A) - \theta_{0}(A) \right) \right] \geq 0,
\end{align*}
where last two terms in the derivative disappear using the change of order of integration and the last inequality comes from the argument above.

Next, we prove Neyman orthogonality with respect to the nuisance parameters.  
For any \( \theta \in L^{2}(\mu) \) and nuisance parameter \( g \in \mathcal{G} \),
\begin{align*}
& \mathcal{D}_{g} \mathcal{D}_{\theta} L_{P^{0}}(\theta^{*}, g_{0})[\theta - \theta^{*}, g - g_{0}] \\
&= \E_{\mu} \left[ 
    \left. \frac{d^{2}}{dt_{1} dt_{2}} 
    \left( \theta^{*} + t_{2}(\theta - \theta^{*}) - \theta_{0} - t_{1}(\tau - \theta_{0}) \right)^{2} 
    \right|_{(t_{1}, t_{2}) = (0, 0)} 
\right] \\
&\quad + 2 \, \E_{Q^{0}_{X} \times \mu} \left[ 
    \left. \frac{d^{2}}{dt_{1} dt_{2}} 
    \left( \frac{\xi_{0} + t_{1}(\xi - \xi_{0})}{\xi_{0}} \right)
    \left( \theta^{*} + t_{2}(\theta - \theta^{*}) - \theta_{0} - t_{1}(\tau - \theta_{0}) \right) \right. \right. \\
&\quad \left. \left. \times 
    \left( \theta_{0} + t_{1}(\tau - \theta_{0}) - (m + t_{1}(m - m_{0})) \right)
    \right|_{(t_{1}, t_{2}) = (0, 0)} 
\right] \\
&\quad + 2 \, \E_{Q^{0}_{X} \times \mu \times Q_{Y \mid X, A}^{0}} \left[
    \left. \frac{d^{2}}{dt_{1} dt_{2}} 
    \left( \frac{\eta_{0} + t_{1}(\eta - \eta_{0})}{\eta_{0}} \right)
    \left( \frac{w_{0} + t_{1}(w - w_{0})}{w_{0}} \right) \right. \right. \\
&\quad \left. \left. \times
    \left( \theta^{*} + t_{2}(\theta - \theta^{*}) - \theta_{0} - t_{1}(\tau - \theta_{0}) \right)
    \left( m_{0} + t_{1}(m - m_{0}) - Y \right)
    \right|_{(t_{1}, t_{2}) = (0, 0)} 
\right] \\
&= -2 \, \E_{\mu} \left[ (\theta - \theta^{*})(\tau - \theta_{0}) \right] \\
&\quad + 2 \, \E_{Q^{0}_{X} \times \mu} \left[ 
    (\theta - \theta^{*}) \left( 
        \frac{\xi - \xi_{0}}{\xi_{0}} (\theta_{0} - m_{0}) 
        + (\tau - \theta_{0}) - (m - m_{0}) 
    \right) 
\right] \\
&\quad + 2 \, \E_{Q^{0}_{X} \times \mu \times Q_{Y \mid X, A}^{0}} \left[ 
    (\theta - \theta^{*}) \left( 
        \frac{\eta - \eta_{0}}{\eta_{0}} (m_{0} - Y)
        + \frac{w - w_{0}}{w_{0}} (m_{0} - Y)
        + (m - m_{0})
    \right) 
\right] \\
&= 0.
\end{align*}

Next, we derive the second-order derivative with respect to the target parameter and prove second-order smoothness and strong convexity.
\begin{align*}
\mathcal{D}_{\theta}^{2} L_{P^{0}}(\overline{\theta}, g_{0})[\theta - \theta^{*}, \theta - \theta^{*}]
&= \E_{\mu} \left[
    \left. \frac{d^2}{dt_{1} dt_{2}} 
    \left( \overline{\theta} + t_{1}(\theta - \theta^{*}) + t_{2}(\theta - \theta^{*}) - \theta_{0} \right)^{2}
    \right|_{(t_{1}, t_{2}) = (0, 0)}
\right] \\
&= \E_{\mu} \left[
    \left. \frac{d}{dt_{1}} 
    \left( 2 (\theta - \theta^{*}) \left( \overline{\theta} + t_{1}(\theta - \theta^{*}) - \theta_{0} \right) \right)
    \right|_{t_{1} = 0}
\right] \\
&= 2 \, \E_{\mu} \left[ (\theta(A) - \theta^{*}(A))^{2} \right].
\end{align*}

Lastly, we prove the higher-order smoothness of the risk:
\begin{align*}
&\mathcal{D}_{g}^{2}\mathcal{D}_{\theta}L_{P^{0}}(\theta^{*}, \overline{g})[\theta - \theta^{*}, g - g_{0}, g - g_{0}] \\
&= \E_{\mu}\!\left[ 
    \frac{d^{3}}{dt_{1}dt_{2}dt_{3}}
    \big( \theta^{*} + t_{3}(\theta - \theta^{*}) - \overline{\tau} - t_{1}(\tau - \theta_{0}) - t_{2}(\tau - \theta_{0}) \big)^{2}
\right]_{\!(0,0,0)} \\
&\quad + 2\E_{Q^{0}_{X} \times \mu}\!\Biggl[
    \frac{d^{3}}{dt_{1}dt_{2}dt_{3}}
    \left( \frac{\overline{\xi} + t_{1}(\xi - \xi_{0}) + t_{2}(\xi - \xi_{0})}{\xi_{0}} \right) \\
&\qquad \times \bigl(\theta^{*} + t_{3}(\theta - \theta^{*}) - \overline{\tau}
    - t_{1}(\tau - \theta_{0}) - t_{2}(\tau - \theta_{0})\bigr) \\
&\qquad \times \bigl(\overline{\tau} + t_{1}(\tau - \theta_{0}) + t_{2}(\tau - \theta_{0})
    - (\overline{m} + t_{1}(m - m_{0}) + t_{2}(m - m_{0}))\bigr)
\Biggr]_{\!(0,0,0)} \\
&\quad+ 2\E_{Q^{0}_{X} \times \mu \times Q^{0}_{|A,X}}\!\left[
    \frac{d^{3}}{dt_{1}dt_{2}dt_{3}}
    \left( \frac{\overline{\eta} + t_{1}(\eta - \eta_{0}) + t_{2}(\eta - \eta_{0})}{\eta_{0}} \right)
    \left( \frac{\overline{w} + t_{1}(w - w_{0}) + t_{2}(w - w_{0})}{w_{0}} \right) \right.\\
&\qquad \left. \times \big( \theta^{*} + t_{3}(\theta - \theta^{*}) - \overline{\tau} - t_{1}(\tau - \theta_{0}) - t_{2}(\tau - \theta_{0}) \big)
       \big( \overline{m} + t_{1}(m - m_{0}) + t_{2}(m - m_{0}) - Y \big)
\right]_{\!(0,0,0)} \\
&= 4\E_{Q^{0}_{X} \times \mu}\!\left[
    (\theta - \theta^{*})\left( \frac{\xi - \xi_{0}}{\xi_{0}} \right)
    \big( (\tau - \theta_{0}) - (m - m_{0}) \big)
\right] \\
&\quad+ 4\E_{Q^{0}_{X} \times \mu }\!\left[
    (\theta - \theta^{*}) \left(
        \frac{\eta - \eta_{0}}{\eta_{0}}\frac{w - w_{0}}{w_{0}}(\overline{m} - m_{0})
        + \frac{\overline{\eta}}{\eta_{0}} \frac{w - w_{0}}{w_{0}} (m - m_{0})
        + \frac{\eta - \eta_{0}}{\eta_{0}} \frac{\overline{w}}{w_{0}} (m - m_{0})
    \right)
\right] \\
&= 2\E_{Q^{0}_{X} \times \mu}\!\left[
    (\theta - \theta^{*})
    \begin{bmatrix}
        \xi - \xi_{0} \\[0.3em]
        \eta - \eta_{0} \\[0.3em]
        w - w_{0} \\[0.3em]
        m - m_{0} \\[0.3em]
        \tau - \theta_{0}
    \end{bmatrix}^{\!T}
    \underbrace{
    \begin{bmatrix}
        0 & 0 & 0 & 0 & \frac{1}{\xi_{0}} \\
        0 & 0 & \frac{\overline{m} - m_{0}}{\eta_{0}w_{0}} & \frac{\overline{w}}{\eta_{0}w_{0}} & 0 \\
        0 & \frac{\overline{m} - m_{0}}{\eta_{0}w_{0}} & 0 & \frac{\overline{\eta}}{\eta_{0}w_{0}} & 0 \\
        0 & \frac{\overline{w}}{\eta_{0}w_{0}} & \frac{\overline{\eta}}{\eta_{0}w_{0}} & 0 & 0 \\
        \frac{1}{\xi_{0}} & 0 & 0 & 0 & 0
    \end{bmatrix}
    }_{\text{:= A}}
    \begin{bmatrix}
        \xi - \xi_{0} \\[0.3em]
        \eta - \eta_{0} \\[0.3em]
        w - w_{0} \\[0.3em]
        m - m_{0} \\[0.3em]
        \tau - \theta_{0}
    \end{bmatrix}
\right].
\end{align*}

Here, we analyze the characteristic polynomial of \( A \) to identify its maximum eigenvalue:
\begin{align*}
\det(\lambda I - A) 
&= \left(\lambda^{2} - \frac{1}{\xi_{0}^{2}}\right) 
\cdot
\begin{vmatrix} 
\lambda & -\frac{\overline{m} - m_{0}}{\eta_{0}w_{0}} & -\frac{\overline{w}}{\eta_{0}w_{0}} \\ 
-\frac{\overline{m} - m_{0}}{\eta_{0}w_{0}} & \lambda & -\frac{\overline{\eta}}{\eta_{0}w_{0}} \\
-\frac{\overline{w}}{\eta_{0}w_{0}} & -\frac{\overline{\eta}}{\eta_{0}w_{0}} & \lambda \\ 
\end{vmatrix} \\
&= \lambda \left(\lambda^{2} - \frac{1}{\xi_{0}^{2}}\right) 
\left( \lambda^{2} - \frac{1}{\eta_{0}^{2}w_{0}^{2}} 
\left\{ \overline{\eta}^{2} + \overline{w}^{2} + (\overline{m} - m_{0})^{2} \right\} \right) = 0.
\end{align*}
This implies that the eigenvalues of \( A \) are given by
\[
\left\{ 0, \pm \frac{1}{\xi_0}, \pm \frac{\sqrt{\overline{\eta}^2 + \overline{w}^2 + (\overline{m} - m_0)^2}}{\eta_0 w_0} \right\}.
\]
We observe that the maximum eigenvalue is bounded above, since
\begin{align*}
\frac{1}{\xi_{0}} \leq 1, 
\quad \frac{\sqrt{\overline{\eta}^{2} + \overline{w}^{2} + (\overline{m} - m_{0})^{2}}}{\eta_{0}w_{0}} 
&\leq \overline{\eta} + \overline{w} + | \overline{m} - m_{0} | 
\leq M_{\eta} + M_{w} + 2 := M_{\lambda}.
\end{align*}
Since the operator norm \( \| A \| \) is equal to the maximum eigenvalue in magnitude, it follows that \( \| A \| \leq M_{\lambda} \). Therefore, we conclude that
\begin{align*}
\left| 
\mathcal{D}_{g}^{2} \mathcal{D}_{\theta} L_{P}(\theta^{*}, \overline{g})
\left[ \theta - \theta^{*}, g - g_{0}, g - g_{0} \right] 
\right| 
&\leq 2 \, \E_{\mu} \!\left[ (\theta - \theta^{*})^{2} \right]^{1/2} 
\, \E_{Q^{0}_{X} \times \mu} 
\!\left[ \big( (g-g_{0})^{\top} A (g - g_{0}) \big)^{2} \right]^{1/2} \\
&\leq 2 \, \E_{\mu} \!\left[ (\theta - \theta^{*})^{2} \right]^{1/2}
\, \E_{Q^{0}_{X} \times \mu} 
\!\left[ \| A \|^{2} \, \| g-g_{0} \|_{2}^{4} \right]^{1/2} \\
&\leq 2 M_{\lambda} \, 
\| \theta - \theta^{*} \|_{L^{2}(\mu)} 
\| g - g_{0} \|_{L^{4}(Q^{0}_{X} \times \mu; \ell^{2})}^{2}.
\end{align*}
\end{proof}

Based on the properties established in the previous lemma, we now derive a risk difference inequality at the true nuisance, showing that it is controlled by the target estimation error at $\widehat{g}$ together with the nuisance estimation error.

\subsection{Oracle Excess Risk Bound in Terms of Target and Nuisance Errors (Proof of \Cref*{lem:general})}
\begin{proof}
Our argument adapts the proof of Theorem~1 in \citet{foster_orthogonal_2023}. We first directly compute the difference in the risk function with respect to $\theta$ 
around $\theta^*$, holding the nuisance function fixed at $\hat{g}$:
\begin{equation}\label{eq:taylor_around_g0}
L_{P^{0}}(\hat{\theta}, \hat{g}) - L_{P^{0}}(\theta^{*}, \hat{g}) = \mathcal{D}_{\theta}L_{P^{0}}(\theta^{*}, \hat{g})[\hat{\theta} - \theta^{*}] 
+ \| \hat{\theta} - \theta^{*} \|_{L^{2}(\mu)}^{2}.
\end{equation}
We next expand the first-order term \(\mathcal{D}_{\theta}L_{P^{0}}(\theta^{*}, \hat{g})[\hat{\theta} - \theta^{*}]\) around \(g_0\) via Taylor expansion in \(g\):
\begin{equation*}
\begin{aligned}
\mathcal{D}_{\theta}L_{P^{0}}(\theta^{*}, \hat{g})[\hat{\theta} - \theta^{*}]
&= \mathcal{D}_{\theta}L_{P^{0}}(\theta^{*}, g_{0})[\hat{\theta} - \theta^{*}] + \mathcal{D}_{g}\mathcal{D}_{\theta}L_{P^{0}}(\theta^{*}, g_{0})[\hat{\theta} - \theta^{*}, \hat{g} - g_{0}] \\
&\quad + \tfrac{1}{2} \mathcal{D}^{2}_{g}\mathcal{D}_{\theta}L_{P^{0}}(\theta^{*}, \tilde{g})[\hat{\theta} - \theta^{*}, \hat{g} - g_{0}, \hat{g} - g_{0}]
\end{aligned}
\end{equation*}
for some \(\tilde{g}\) on the line segment between \(g_0\) and \(\hat{g}\). Substituting this into \eqref{eq:taylor_around_g0} yields:
\begin{align*}
L_{P^{0}}(\hat{\theta}, \hat{g}) - L_{P^{0}}(\theta^{*}, \hat{g}) 
&= \mathcal{D}_{\theta}L_{P^{0}}(\theta^{*}, g_{0})[\hat{\theta} - \theta^{*}]
+ \mathcal{D}_{g}\mathcal{D}_{\theta}L_{P^{0}}(\theta^{*}, g_{0})[\hat{\theta} - \theta^{*}, \hat{g} - g_{0}] \\
&\quad + \frac{1}{2} \mathcal{D}^{2}_{g}\mathcal{D}_{\theta}L_{P^{0}}(\theta^{*}, \tilde{g})[\hat{\theta} - \theta^{*}, \hat{g} - g_{0}, \hat{g} - g_{0}]
+ \| \hat{\theta} - \theta^{*} \|_{L^{2}(\mu)}^{2}.
\end{align*}
By the Neyman orthogonality property, the second term above vanishes. Then applying the smoothness bound on the third derivative and using Hölder and Young’s inequality, we obtain:
\begin{align*}
&L_{P^{0}}(\hat{\theta}, \hat{g}) - L_{P^{0}}(\theta^{*}, \hat{g}) \\
&\quad\geq \mathcal{D}_{\theta}L_{P^{0}}(\theta^{*}, g_{0})[\hat{\theta} - \theta^{*}]
- M_{\lambda} \| \hat{\theta} - \theta^{*} \|_{L^{2}(\mu)} \cdot \| \hat{g} - g_{0} \|_{L^{4}(Q^{0}_{X} \times \mu; \ell^{2})}^{2}
+ \| \hat{\theta} - \theta^{*} \|_{L^{2}(\mu)}^{2} \\
&\quad\geq \mathcal{D}_{\theta}L_{P^{0}}(\theta^{*}, g_{0})[\hat{\theta} - \theta^{*}]
- \frac{1}{2} \| \hat{\theta} - \theta^{*} \|_{L^{2}(\mu)}^{2}
- \frac{M_{\lambda}^{2}}{2} \| \hat{g} - g_{0} \|_{L^{4}(Q^{0}_{X} \times \mu; \ell^{2})}^{4}
+ \| \hat{\theta} - \theta^{*} \|_{L^{2}(\mu)}^{2},
\end{align*}
which implies
\begin{align}
\| \hat{\theta} - \theta^{*} \|_{L^{2}(\mu)}^{2} 
\leq 2\left( L_{P^{0}}(\hat{\theta}, \hat{g}) - L_{P^{0}}(\theta^{*}, \hat{g}) \right)
- 2 \mathcal{D}_{\theta}L_{P^{0}}(\theta^{*}, g_{0})[\hat{\theta} - \theta^{*}]
+ M_{\lambda}^{2} \| \hat{g} - g_{0} \|_{L^{4}(Q^{0}_{X} \times \mu; \ell^{2})}^{4}.
\label{eq:ub_of_theta_norm}
\end{align}
Finally, consider the risk difference with respect to $\theta$ around $\theta^{*}$ under $g_0$:
\begin{align*}
L_{P^{0}}(\hat{\theta}, g_{0}) - L_{P^{0}}(\theta^{*}, g_{0}) 
&= \mathcal{D}_{\theta}L_{P^{0}}(\theta^{*}, g_{0})[\hat{\theta} - \theta^{*}]
+ \| \hat{\theta} - \theta^{*} \|_{L^{2}(\mu)}^{2}.
\end{align*}
Plugging \eqref{eq:ub_of_theta_norm} into the above yields
\begin{align*}
&L_{P^{0}}(\hat{\theta}, g_{0}) - L_{P^{0}}(\theta^{*}, g_{0}) \\
&\quad\leq 2\left( L_{P^{0}}(\hat{\theta}, \hat{g}) - L_{P^{0}}(\theta^{*}, \hat{g}) \right)
- \mathcal{D}_{\theta}L_{P^{0}}(\theta^{*}, g_{0})[\hat{\theta} - \theta^{*}]
+ M_{\lambda}^{2} \| \hat{g} - g_{0} \|_{L^{4}(Q^{0}_{X} \times \mu; \ell^{2})}^{4} \\
&\quad\leq 2\left( L_{P^{0}}(\hat{\theta}, \hat{g}) - L_{P^{0}}(\theta^{*}, \hat{g}) \right)
+ M_{\lambda}^{2} \| \hat{g} - g_{0} \|_{L^{4}(Q^{0}_{X} \times \mu; \ell^{2})}^{4},
\end{align*}
where the last inequality follows from the first-order optimality of \(\theta^*\).
\end{proof}

Based on \Cref{lem:general}, we can derive the regret bound stated in \Cref{thm:excess_risk}. 
To prove \Cref{thm:excess_risk}, we employ the Talagrand concentration inequality for Lipschitz loss functions, 
allowing for the target parameter to be vector-valued.

\subsection{Excess Risk Bound for \Cref*{alg:gen_est} (Proof of \Cref*{thm:excess_risk})}
\begin{proof}[Proof of \Cref{thm:excess_risk}]
In our setting, the loss depends on the target function evaluated at two possibly distinct inputs $a$ and $b$.
To directly leverage the framework in \citet{foster_orthogonal_2023}, 
we reformulate the scalar-valued target $\theta: \mathbb{R} \to \mathbb{R}$ 
as a vector-valued function 
\[
\boldsymbol{\theta} : \mathbb{R}^2 \to \mathbb{R}^2, 
\quad \boldsymbol{\theta}(a,b) = \big(\theta(a), \theta(b)\big).
\]
This representation allows us to treat $\theta(a)$ and $\theta(b)$ jointly. Note that estimating $\boldsymbol{\theta}$ is equivalent to estimating $\theta$, 
but the vector-valued formulation streamlines the notation and 
aligns our setting with existing results on Lipschitz losses of vector-valued target parameters. We define the loss at $\boldsymbol{\theta}$ as
\begin{align*}
&\ell_{P^0}(\boldsymbol{\theta}, g_0; z) := \ell_{P^0}(\theta, g_0; z), \\
&L_{P^0}(\boldsymbol{\theta}, g_0; z) := L_{P^0}(\theta, g_0; z),
\end{align*}
and similarly denote $\boldsymbol{\theta}^*(a,b) = (\theta^*(a), \theta^*(b))$.

It is straightforward to verify that the multidimensional analogues of all conditions in Lemma \ref{lem:neyman} 
hold for $\boldsymbol{\theta}$ as well. In particular:
\begin{enumcond}\label{cond:vector_loss_properties}
    \item \textbf{First-order optimality}\label{cond:vector_first_order}  
    For all $\theta \in \Theta \subset L^2(\mu)$,
    \begin{align*}
    \mathcal{D}_{\boldsymbol{\theta}} L_{P^{0}}(\boldsymbol{\theta}^{*}, g_{0})\big[\boldsymbol{\theta} - \boldsymbol{\theta}^{*} \big] 
    &= 2 \E_{\mu} \big[ \big( \theta(A) - \theta^{*}(A) \big) \big( \theta^{*}(A) - \theta_{0}(A) \big) \big] \geq 0.
    \end{align*}

    \item \textbf{Neyman orthogonality}\label{cond:vector_neyman}  
    For all $\theta \in L^2(\mu)$ and $g \in \mathcal{G}$,
    \begin{align*}
    \mathcal{D}_{g}\mathcal{D}_{\boldsymbol{\theta}} L_{P^{0}}(\boldsymbol{\theta}^{*}, g_{0})\big[ \boldsymbol{\theta} - \boldsymbol{\theta}^{*}, g - g_{0} \big] = 0.
    \end{align*}

    \item \textbf{Second-order smoothness and strong convexity}\label{cond:vector_second_order}  
    For all $\theta \in \Theta$, $\overline{\theta} \in L^{2}(\mu)$, and $g \in \mathcal{G}$,
    \begin{align*}
    \mathcal{D}^{2}_{\boldsymbol{\theta}} L_{P}(\overline{\boldsymbol{\theta}}, g)\big[ \boldsymbol{\theta} - \boldsymbol{\theta}^{*}, \boldsymbol{\theta} - \boldsymbol{\theta}^{*} \big] 
    &= 2\E_{\mu}\big[(\theta(A) - \theta^{*}(A))^{2}\big] \\
    &= 2\E_{\mu \times \mu}\big[(\theta(A) - \theta^{*}(A)) (\theta(B) - \theta^{*}(B))\big] \\
    &= \| \boldsymbol{\theta} - \boldsymbol{\theta}^{*} \|_{L^{2}(\mu \times \mu; \ell^{2})}^{2}.
    \end{align*}

    \item \textbf{Higher-order smoothness}\label{cond:vector_higher_order}  
    For all $\theta \in L^{2}(\mu)$, $g \in \mathcal{G}$, and $\overline{g} \in \mathrm{star}(\mathcal{G}, g_{0})$,
    \begin{align*}
    \Big| \mathcal{D}_{g}^{2} \mathcal{D}_{\boldsymbol{\theta}} L_{P}(\boldsymbol{\theta}^{*}, \overline{g})\big[ \boldsymbol{\theta} - \boldsymbol{\theta}^{*}, g - g_{0}, g - g_{0} \big] \Big| 
    &\leq \sqrt{2} M_{\lambda} \| \boldsymbol{\theta} - \boldsymbol{\theta}^{*} \|_{L^{2}(\mu \times \mu; \ell^{2})} \| g - g_{0} \|_{L^{4}(Q^{0}_{X} \times \mu; \ell^2)}^{2}.
    \end{align*}
\end{enumcond}
We next establish that the loss under data fusion is Lipschitz continuous in its first argument, when evaluated at the nuisance estimators $\widehat{g}$ from \Cref{alg:gen_est}. 
To see this, note that
\begin{align*}
&\big| \ell_{P^0}(\boldsymbol{\theta}_{1}, \widehat{g}; z) - \ell_{P^0}(\boldsymbol{\theta}_{2}, \widehat{g}; z) \big| = \big| \ell_{P^0}(\theta_{1}, \widehat{g}; z) - \ell_{P^0}(\theta_{2}, \widehat{g}; z) \big| \\[0.5em]
&= \left|
\big( \theta_{1}(b) - \theta_{2}(b) \big)
\Big(
    \theta_{1}(b) + \theta_{2}(b) - 2\widehat{\tau}(b) 
    + 2\widehat{\xi}\,\mathbf{1}( s \in \mathcal{S}_{X} )
      \big( \widehat{\tau}(b) - \widehat{m}(b, x) \big)
\Big)
\right. \\
&\quad\left.
+ \big( \theta_{1}(a) - \theta_{2}(a) \big)
\Big(
    2\widehat{\eta}\,\widehat{w}(x, a)\,
    \mathbf{1}( s \in \mathcal{S}_{Y} )
    \big( \widehat{m}(x, a) - y \big)
\Big)
\right|
\end{align*}
This expression can be written as an inner product and bounded by
\[
\big\| \boldsymbol{\theta}_{1} - \boldsymbol{\theta}_{2} \big\|_{2} \cdot B(z),
\]
where $B(z)$ is defined as
\begin{equation*}
B(z) := \left\| 
\begin{pmatrix} 
\theta_{1}(b) + \theta_{2}(b) - 2\widehat{\tau}(b) 
    + 2\widehat{\xi}\,\mathbf{1}(s \in \mathcal{S}_{X})(\widehat{\tau}(b) - \widehat{m}(b, x)) \\[0.25em]
2\widehat{\eta}\,\widehat{w}(x, a)\mathbf{1}(s \in \mathcal{S}_{Y})(\widehat{m}(x, a) - y)
\end{pmatrix} \right\|_{2}.
\end{equation*}
We next derive a uniform bound on $B(z)$. First, observe that
\begin{align*}
B(z) 
&\leq \left| \theta_{1}(b) + \theta_{2}(b) - 2\widehat{\tau}(b) 
    + 2\widehat{\xi} \mathbf{1}(s \in \mathcal{S}_{X}) \big( \widehat{\tau}(b) - \widehat{m}(b, x) \big) \right| \\
&\quad + \left| 2\widehat{\eta}\, \widehat{w}(x, a) \mathbf{1}(s \in \mathcal{S}_{Y}) \big( \widehat{m}(x, a) - y \big) \right| \\
&\le 2\|\theta\|_{\infty} + 2|\widehat{\tau}(b)| 
    + 2\widehat{\xi} \cdot \big| \widehat{\tau}(b) - \widehat{m}(x, b) \big| 
    + 2\widehat{\eta} \cdot \widehat{w}(x, a) \mathbf{1}(s \in \mathcal{S}_{Y}) \big( |\widehat{m}(x, a)| + |y| \big) \\
&\le 4
    + 4\widehat{\xi} 
    + 2\widehat{\eta} \cdot \widehat{w}(x, a) \mathbf{1}(s \in \mathcal{S}_{Y}) \big( 1 + |y| \big),
\quad Q^{0}_{X} \times \mu\text{-a.e.},
\end{align*}
by $\|\theta\|_{\infty} \le 1$ and \Cref{cond:bounded-w-m}, and, under the consistency assumptions on $\widehat{\xi}, \widehat{\eta},$ and $\widehat{w}$, the following hold with high probability:
\begin{enumerate}
    \item Since $| \widehat{\xi} - \xi_{0} |^2 \xrightarrow{P^{0}} 0$, there exists $N(\delta)$ such that, for all $n \geq N(\delta)$,
    \[
    P^{0} \Big( | \widehat{\xi} - \xi_{0} | \leq \delta \xi_{0} \Big) \geq 1 - \delta/12,
    \quad \text{and thus} \quad
    P^{0} \Big( \widehat{\xi} \leq (1+\delta) \xi_{0} \Big) \geq 1 - \delta/12.
    \]
    \item Similarly, since $| \widehat{\eta} - \eta_{0} |^2 \xrightarrow{P^{0}} 0$, there exists $N(\delta)$ such that, for all $n \geq N(\delta)$,
    \[
    P^{0} \Big( | \widehat{\eta} - \eta_{0} | \leq \delta \eta_{0} \Big) \geq 1 - \delta/12,
    \quad \text{and thus} \quad
    P^{0} \Big( \widehat{\eta} \leq (1+\delta) \eta_{0} \Big) \geq 1 - \delta/12.
    \]

    \item Since $\| \widehat{w} - w_{0} \|_{L^{\infty}(Q^{0}_{X} \times \mu)} \xrightarrow{P^{0}} 0$, there exists $N(\delta)$ such that, for all $n \geq N(\delta)$,
    \[
    P^{0} \left( |\widehat{w}(x, a)| \leq (1 + \delta) \cdot \|w_{0}\|_{\infty} \text{ for } {Q^0_{X} \times \mu} \text{-almost every } (x, a) \right) \geq 1 - \delta/12.
    \]
\end{enumerate}

Combining these results, we conclude that
\[
B(z) \leq 4 + 4(1 + \delta)\xi_{0} + 2(1 + \delta)^{2}\eta_{0}\,\|w_{0}\|_{\infty}\,\mathbf{1}(s \in \mathcal{S}_{Y})(1 + |y|),
\]
$Q^{0}_{X} \times \mu$-a.e. with probability at least $1 - \delta/4$.

We now use that $Y$ is sub-exponential, as stated in \Cref{cond:subexp-y}. 
This moment condition yields the following sub-exponential tail bound 
(see, e.g., \citet{vershynin_high-dimensional_2018}): for any $t > 0$,
\begin{align*}
&P^{0}\!\left(\mathbf{1}(S \in \mathcal{S}_{Y}) \ c_{w} |Y - \theta_{0}(a)| \ge t \mid X=x, A=a \right) \\
&= P^{0}\!\left(S \in \mathcal{S}_{Y},\, c_{w}\lvert Y - \theta_{0}(a)\rvert \ge t \,\middle|\, X=x,A=a\right) \\
&= P^{0}\!\left(S \in \mathcal{S}_{Y} \mid X=x,A=a \right) \,
   P^{0}\!\left( \lvert Y - \theta_{0}(a)\rvert \ge \tfrac{t}{c_{w}} \,\middle|\, X=x,A=a,S \in \mathcal{S}_{Y}\right) \\
&\le 2\exp\!\left(
   -\tfrac{1}{2}\min\!\left\{
      \tfrac{(t/c_{w})^{2}}{\sigma^{2}},\;
      \tfrac{t/c_{w}}{L}
   \right\}
\right),
\end{align*}
where $c_{w}:=\eta_{0}\,\|w_{0}\|_{L^{\infty}(Q^{0}_{X} \times \mu)}$. \\
To derive a high-probability bound, set the right-hand side less than or equal to $\delta/4$, i.e.,
\[
2\exp\!\left(
   -\tfrac{1}{2}\min\!\left\{
      \tfrac{(t/c_{w})^{2}}{\sigma^{2}},\;
      \tfrac{t/c_{w}}{L}
   \right\}
\right) \;\le\; \frac{\delta}{4}.
\]
This inequality holds whenever
\[
t \;\ge\; c_{w}\cdot \max\!\left\{
   \sigma\sqrt{2\log\!\tfrac{8}{\delta}},\;\;
   2L\log\!\tfrac{8}{\delta}
\right\}.
\]
Consequently, with probability at least $1-\delta/4$,
\[
\mathbf{1}\{s \in \mathcal{S}_{Y}\}\,\eta_{0} \|w_{0}\|_{\infty}\,\lvert y - \theta_{0}(a)\rvert
\;\le\; 
\eta_{0}\|w_{0}\|_{\infty}\cdot \max\!\left\{
   \sigma\sqrt{2\log\!\tfrac{8}{\delta}},\;\;
   2L\log\!\tfrac{8}{\delta}
\right\}.
\]
In turn, this bound implies that, with probability at least $1 - \delta/4$,
\begin{align*}
    \mathbf{1}\{s \in \mathcal{S}_{Y}\}\,\eta_{0} \|w_{0}\|_{\infty} \,\lvert y \rvert 
    &\leq \mathbf{1}\{s \in \mathcal{S}_{Y}\}\,\eta_{0}  \|w_{0}\|_{\infty} \,\lvert \theta_{0}(a) \rvert  + \mathbf{1}\{s \in \mathcal{S}_{Y}\}\,\eta_{0}  \|w_{0}\|_{\infty} \,\lvert y - \theta_{0}(a) \rvert \\
    &\leq C_{\sigma, \delta} \ \eta_{0} \|w_{0}\|_{\infty},
\end{align*}
where $C_{\sigma, \delta}:= 1 + \max \left\{ \sigma \sqrt{2 \log(8/\delta)},\ 2L \log(8/\delta) \right\}$.
Putting everything together, we obtain
\begin{align*}
B(z) &\leq 4 + 4(1+\delta)\xi_{0} + 2(1+\delta)^{2} \eta_{0} \|w_{0}\|_{\infty}(1 + C_{\sigma, \delta})\\ 
&\le 4(1+\delta)^2 \big( 1 + \xi_{0} + C_{\sigma, \delta} \ \eta_{0} \|w_{0}\|_{\infty}\big)  =: B_{P^{0}}(\delta),
\end{align*}
$Q^{0}_{X} \times \mu$-a.e. with probability at least $1 - \delta/2$ for sufficiently large $n$. Therefore,
\begin{align*}
| \ell_{P^0}(\boldsymbol{\theta}_{1}, \widehat{g}; z) - \ell_{P^0}(\boldsymbol{\theta}_{2}, \widehat{g}; z) | 
&\leq B_{P^{0}}(\delta) \|\boldsymbol{\theta}_{1} - \boldsymbol{\theta}_{2}\|_{2},
\end{align*}
with probability at least $1-\delta/2$ for sufficiently large $n$.
This further entails
\begin{align*}
&\| \ell_{P^0}(\boldsymbol{\theta}_{1}, \widehat{g}; z) - \ell_{P^0}(\boldsymbol{\theta}_{2}, \widehat{g}; z) \|_{L^{\infty}(Q^{0}_{X} \times \mu)} 
\leq B_{P^{0}}(\delta) \|\boldsymbol{\theta}_{1} - \boldsymbol{\theta}_{2}\|_{L^{\infty}(\mu; \ell^{2})},\\
&\| \ell_{P^0}(\boldsymbol{\theta}_{1}, \widehat{g}; z) - \ell_{P^0}(\boldsymbol{\theta}_{2}, \widehat{g}; z) \|_{L^{2}(Q^{0}_{X} \times \mu)} 
\leq B_{P^{0}}(\delta) \|\boldsymbol{\theta}_{1} - \boldsymbol{\theta}_{2}\|_{L^{2}(\mu; \ell^{2})},
\end{align*}
again with probability at least $1-\delta/2$ for sufficiently large $n$.

Next, we show that the excess risk bound can be expressed as the sum of the squared critical radius multiplied by the Lipschitz constant, and the fourth power of the nuisance estimation error. 
Applying Theorem~3 of \citet{foster_orthogonal_2023} with the parameters 
\[
K_{2} = 2, \quad R = \sqrt{2}, \quad \beta_{1} = 2, \quad \beta_{2} = 2M_{\lambda}, \quad 
r = 0, \quad \lambda = 2, \quad \kappa = 0, \quad B_{1} = B_{P^{0}}(\delta)^{2}, \quad B_{2} = M_{\lambda}^{2},
\]
we obtain
\begin{align*}
    L_{P^{0}}(\widehat{\theta}, g_{0}) - L_{P^{0}}(\theta^{*}, g_{0})
    &= L_{P^{0}}(\widehat{\boldsymbol{\theta}}, g_{0}) - L_{P^{0}}(\boldsymbol{\theta}^{*}, g_{0}) \\
    &\;\;\leq C \Biggl( 
        B_{P^{0}}(\delta)^{2} 
        \left( \delta_{n}^{2} + \frac{\log(\delta^{-1})}{n} \right)
        + M_{\lambda}^{2} 
        \bigl\| \widehat{g} - g_{0} \bigr\|_{L^{4}(Q^{0}_{X} \times \mu; \ell^2)}^{2}
    \Biggr).
\end{align*}
for some universal constant $C>0$ (as in the proof of Theorem~3 in \citet{foster_orthogonal_2023}).
This bound holds for finite samples $n$ with probability at least $1 - \delta/2$. 
Since we have shown that the loss is Lipschitz in its first argument at $\widehat{g}$ with probability at least $1 - \delta/2$ for large $n$, we can combine both results and apply a union bound to conclude that the excess risk bound holds with probability at least $1 - \delta$ for large $n$.
\end{proof}

We have established the excess risk bound for the estimators in \Cref{alg:gen_est}. 
We now turn to \Cref{alg:rkhs_est}, for which the bound can be derived in a more direct manner by combining a local Rademacher complexity bound with the approximation error of kernel ridge regression. 
Finally, by appropriately selecting the cross-validation parameter $c$, we obtain a simplified bound.

\subsection{Excess Risk Bound for \Cref*{alg:rkhs_est} (Proof of \Cref*{thm:rkhs_excess})}
\begin{proof}
First, we know that the approximation error is bounded as
\begin{align*}
L_{P^{0}}(\theta^{*},g_{0}) - L_{P^{0}}(\theta_{0}, g_{0}) = \inf_{\theta \in \Theta} \| \theta_{0} - \theta \|_{L^{2}(\mu)}^{2} \leq c^{2 - \frac{1}{\alpha}} \| T_{\mathcal{K}}^{\alpha} \theta_{0} \|_{\mathcal{H}}^{1/\alpha},
\end{align*}
by Theorem 1.1 of \citet{smale_estimating_2003}. Next, we establish an excess risk bound for our estimator relative to $\theta^{*}$. Before applying \Cref{thm:bound_fusion}, we revisit the higher-order smoothness condition~(d) in the proof of \Cref{thm:excess_risk} to adjust it for the kernel ridge regression setting. By Lemma 5.1 of \citet{mendelson_regularization_2010}, for all $\theta \in \Theta$,
\begin{align}\label{eq:infty_norm_bound}
\| \theta - \theta^{*} \|_{L^{\infty}(\mu)} 
\leq C_{p} \|\theta - \theta^{*}\|_{\mathcal{H}}^p \|\theta - \theta^{*}\|_{L^{2}(\mu)}^{1-p},
\end{align}
where $C_{p}$ is a constant depending on $A, p, G$ in \Cref{cond:evd}.
By H{\"o}lder's inequality and \eqref{eq:infty_norm_bound},
\begin{align*}
    \Big| \mathcal{D}_{g}^{2} \mathcal{D}_{\boldsymbol{\theta}} L_{P}(\boldsymbol{\theta}^{*}, \overline{g})\big[ \boldsymbol{\theta} - \boldsymbol{\theta}^{*}, g - g_{0}, g - g_{0} \big] \Big| &\leq
    2\E_{Q^{0}_{X} \times \mu}\Big[ \big| (\theta - \theta^{*}) (g-g_{0})^{T} A (g - g_{0}) \big| \Big] \\
        &\leq 2 \| (\theta - \theta^{*}) \|_{L^{\infty}(\mu)} \E_{Q^{0}_{X} \times \mu} \Big[ \big| (g-g_{0})^{T} A (g-g_{0}) \big| \Big] \\
        &\leq 2M_{\lambda} \| (\theta - \theta^{*}) \|_{L^{\infty}(\mu)} \|g - g_{0}\|_{L^{2}(Q^{0}_{X} \times \mu, \ell^{2})}^{2} \\
        &\leq 2M_{\lambda}C_{p}\|\theta - \theta^{*}\|_{\mathcal{H}}^p \|\theta - \theta^{*}\|_{L^{2}(\mu)}^{1-p}\|g - g_{0}\|_{L^{2}(Q^{0}_{X} \times \mu, \ell^{2})}^{2} \\
        &\leq 2M_{\lambda}C_{p}c^{p} \|\theta - \theta^{*}\|_{L^{2}(\mu)}^{1-p}\|g - g_{0}\|_{L^{2}(Q^{0}_{X} \times \mu, \ell^{2})}^{2} \\
        &\leq 2M_{\lambda}C_{p}c^{p} \|\boldsymbol{\theta} - \boldsymbol{\theta}^{*}\|_{L^{2}(\mu)}^{1-p}\|g - g_{0}\|_{L^{2}(Q^{0}_{X} \times \mu, \ell^{2})}^{2}.
\end{align*}
Hence, we apply Theorem 3 of \citet{foster_orthogonal_2023} with the parameters
\[
K_{2} = 2, \quad R = \sqrt{2}, \quad \beta_{1} = 2,\quad r = p, \quad
\lambda = 2, \quad \kappa = 0, \quad B_{1} = B_{P^{0}}(\delta)^{2}, \quad B_{2} = (M_{\lambda}C_{p}c^{p})^{\frac{2}{1+p}},
\]
resulting in the following bound of sample error:
\begin{align*}
L_{P^{0}}(\widehat{\theta}, g_{0}) - L_{P^{0}}(\theta^{*}, g_{0}) 
&\leq C\Bigg( B_{P^{0}}(\delta)^{2} \left( \delta_{n}^{2} + \frac{\log(\delta^{-1})}{n} \right) + (M_{\lambda}C_{p}c^{p})^{\frac{2}{1+p}}\|\widehat{g} - g_{0}\|_{L^{2}(\ell^{2}, P^{0})}^{\frac{4}{1+p}} \Bigg)\\
&\lesssim B_{P^{0}}(\delta)^{2} \left(\delta_{n}^{2} + \frac{\log(\delta^{-1})}{n} \right) + c^{\frac{2p}{1+p}}\|\widehat{g} - g_{0}\|_{L^{2}(\ell^{2}, P^{0})}^{\frac{4}{1+p}},
\end{align*}
with $C$ is an universal constant and probability at least $1-\delta$ and $\delta_{n}$ is the critical radius. We will propose a particular solution of the equation defining the critical radius using Lemma 5 of \citet{nie_quasi-oracle_2021}. Given that $\mathcal{Z}_{i} = \epsilon_{i}, w = 1, h = \theta$, the local Rademacher complexity is bounded as
\begin{align*}
R_{n, \mathcal{Z}}(\Theta, \delta) \leq K \log(n) \frac{c^{p}\delta^{1-p}}{\sqrt{n}},
\end{align*}
where K is a constant.
Hence, $\delta_{n} = K c^{\frac{p}{1+p}} \log(n)^{\frac{1}{1+p}} n^{-\frac{1}{2(1+p)}}$ is a valid critical radius and if we plug this critical radius into the sample bound above,
\begin{align*}
L_{P^{0}}(\widehat{\theta}, g_{0}) - L_{P^{0}}(\theta^{*}, g_{0}) &\lesssim B_{P^{0}}( \delta)^{2} \left(\delta_{n}^{2} + \frac{\log(\delta^{-1})}{n} \right) + c^{\frac{2p}{1+p}}\|\widehat{g} - g_{0}\|_{L^{2}(Q^{0}_{X} \times \mu; \ell^{2})}^{\frac{4}{1+p}} \\
&\lesssim B_{P^{0}}(\delta)^{2} \left( c^{\frac{2p}{1+p}}(n\log(n)^{-2})^{-\frac{1}{1+p}} + \frac{\log(\delta^{-1})}{n} \right) + c^{\frac{2p}{1+p}}\|\widehat{g} - g_{0}\|_{L^{2}(Q^{0}_{X} \times \mu; \ell^{2})}^{\frac{4}{1+p}}
,
\end{align*}
which is the first result of Theorem. If $\| \widehat{g} - g_{0} \|_{L^{2}(\ell^{2}, P^{0})} = O_{p}(B_{P^{0}}(\delta)^{2/(1+p)}(n\log(n)^{-2})^{-1/4})$, the third term is absorbed into the first and the second is absorbed into the first for large $n$, giving us
\begin{align*}
L_{P^{0}}(\widehat{\theta}, g_{0}) - L_{P^{0}}(\theta^{*}, g_{0}) \lesssim B_{P^{0}}(\delta)^{2}c^{\frac{2p}{1+p}}(n\log(n)^{-2})^{-\frac{1}{1+p}}.
\end{align*}
Then if we look at the excess risk which is the sum of sample error and approximation error, it becomes
\begin{align*}
L_{P^{0}}(\widehat{\theta}, g_{0}) - L_{P^{0}}(\theta_{0}, g_{0}) &= L_{P^{0}}(\widehat{\theta}, g_{0}) - L_{P^{0}}(\theta^{*}, g_{0}) + L_{P^{0}}(\theta^{*}, g_{0}) - L_{P^{0}}(\theta_{0}, g_{0}) \\
&\lesssim B_{P^{0}}(\delta)^{2}c^{\frac{2p}{1+p}}(n\log(n)^{-2})^{-\frac{1}{1+p}} + c^{2 - \frac{1}{\alpha}}.
\end{align*}
If we choose $c = (B_{P^{0}}(\delta)^{-2(1+p)}n\log(n)^{-2})^{\frac{\alpha}{p + (1-2\alpha)}}$, then the error bound becomes
\begin{align*}
L_{P^{0}}(\widehat{\theta}, g_{0}) - L_{P^{0}}(\theta_{0}, g_{0}) \lesssim (B_{P^{0}}(\delta)^{-2(1+p)}n\log(n)^{-2})^{-\frac{1-2\alpha}{p+(1-2\alpha)}},
\end{align*}
which we desired.
\end{proof}

We are now ready to compare the upper bounds on the excess risk with and without data fusion. Since the difference between the two upper bounds is only up to a multiplicative factor of the Lipschitz constant, it suffices to compare the corresponding Lipschitz constants. We denote the Lipschitz constant without data fusion, in analogy to that with data fusion, as
\begin{align*}
    \underline{B}_{P^{0}}(\delta)
    := 4(1+\delta)^{2} \Bigl( 1 + \underline{\xi}_{0} 
    + C_{\sigma, \delta}\,\eta_{0}\,\|\underline{w}_{0}\|_{\infty} \Bigr).
\end{align*}

\subsection{Lipschitz Constant Decreases in Data Fusion (\Cref*{lem:lipschitz_comparison})}
\begin{lemma}[Lipschitz constant decreases in data fusion]\label{lem:lipschitz_comparison}
Suppose the conditions of \Cref{thm:bound_fusion} hold. 
Then there exists $N(\delta) \in \mathbb{N}$ such that, for all $n \geq N(\delta)$, with probability at least $1 - \delta/2$, the following inequalities hold for every realization $z := (x, a, b, y, s)$ of $Z \sim \widetilde{P}^{0}$ and all $\theta_{1}, \theta_{2} \in \Theta$:
\begin{align*}
&\bigl| \ell_{P^{0}}(\theta_{1}, \widehat{g}; z) 
      - \ell_{P^{0}}(\theta_{2}, \widehat{g}; z) \bigr|
   \;\leq\; B_{P^{0}}(\delta) 
      \cdot \left\| \bigl(\theta_{1}(a) - \theta_{2}(a),\, \theta_{1}(b) - \theta_{2}(b)\bigr) \right\|_{2}, \\
&\bigl| \underline{\ell}_{P^{0}}(\theta_{1}, \widehat{g}; z) 
      - \underline{\ell}_{P^{0}}(\theta_{2}, \widehat{g}; z) \bigr|
   \;\leq\; \underline{B}_{P^{0}}(\delta) 
      \cdot \left\| \bigl(\theta_{1}(a) - \theta_{2}(a),\, \theta_{1}(b) - \theta_{2}(b)\bigr) \right\|_{2}.
\end{align*}
Moreover, the constants satisfy $B_{P^{0}}(\delta) \leq \underline{B}_{P^{0}}(\delta)$,
with strict inequality whenever
\[
P^{0}\!\big(S \in \mathcal{S}_{X} \setminus \mathcal{S}_{Y}\big)
\;+\;
\essinf_{(X,A)\sim P^0} 
   P^{0}\!\big(S \in \mathcal{S}_{Y}\setminus \mathcal{S}_{X} \,\big|\, A,X,S\in\mathcal{S}_{Y}\big)
\;>\;0.
\]
\end{lemma}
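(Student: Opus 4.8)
The plan is to handle the two Lipschitz inequalities by reduction to the argument already carried out for \Cref{thm:excess_risk}, and the comparison of the constants by elementary monotonicity of the source probabilities together with a short density computation identifying $\eta_0 w_0$ and $\underline\eta_0\underline w_0$ with a common numerator over subprobability densities. The bound for $\ell_{P^{0}}$ is literally \eqref{eq:lipschitz_bound} of \Cref{thm:excess_risk}: its derivation uses only the scalar consistencies $|\widehat\xi-\xi_0|\xrightarrow{P^0}0$, $|\widehat\eta-\eta_0|\xrightarrow{P^0}0$ and $\|\widehat w-w_0\|_{L^\infty(Q_X^0\times\mu)}\xrightarrow{P^0}0$ together with the sub-exponential tail \Cref{cond:subexp-y}, all of which hold under the hypotheses of \Cref{thm:bound_fusion} (the $L^2$-consistency of $\widehat g$ supplies the scalar convergences, $\|\widehat w-w_0\|_{L^\infty}=o_p(1)$ is assumed, and \Cref{cond:subexp-y} is part of \Cref{cond:target_model}). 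The bound for $\underline\ell_{P^0}$ follows by repeating that same argument with $\mathcal S_X\cap\mathcal S_Y$ playing the role of both $\mathcal S_X$ and $\mathcal S_Y$, the no-fusion nuisances $\underline g$ (consistent for their true values by hypothesis) in place of $\widehat g$, and the data-fusion conditions that \Cref{thm:bound_fusion} assumes for $\underline{\mathcal S}_X=\underline{\mathcal S}_Y=\mathcal S_X\cap\mathcal S_Y$; the sub-exponential step yields the same factor $C_{\sigma,\delta}$ since \Cref{cond:common} gives $P^0_{Y\mid X,A}(\cdot\mid x,a,S\in\mathcal S_X\cap\mathcal S_Y)=Q^0_{Y\mid X,A}(\cdot\mid x,a)$, so $\underline m_0=m_0$, $\underline\tau_0=\theta_0$, and the same moment bound applies. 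A union bound puts both displayed inequalities on one event of probability at least $1-\delta/2$ for all large $n$.

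For $B_{P^0}(\delta)\le\underline B_{P^0}(\delta)$, since both constants share the prefactor $4(1+\delta)^2$ it suffices to prove (i) $\xi_0\le\underline\xi_0$ and (ii) $\eta_0\|w_0\|_\infty\le\underline\eta_0\|\underline w_0\|_\infty$ (essential suprema under $Q^0_X\times\mu$). Part (i) is immediate: $\mathcal S_X\cap\mathcal S_Y\subseteq\mathcal S_X$ gives $P^0(S\in\mathcal S_X\cap\mathcal S_Y)\le P^0(S\in\mathcal S_X)$. For (ii), invoking \Cref{cond:common} (so $P^0_X(\cdot\mid S\in\mathcal S_X)=P^0_X(\cdot\mid S\in\mathcal S_X\cap\mathcal S_Y)=Q^0_X$) I would write, with $q^0_X$ the density of $Q^0_X$ and $p^0(x,a,S\in\mathcal A)$ the Radon--Nikodym density (with respect to the product of the dominating measure on $\mathcal X$ and $\mu$) of the subprobability $(x,a)\mapsto P^0(S\in\mathcal A,X\in dx,A\in da)$,
\[
\eta_0 w_0(x,a)=\frac{q^0_X(x)}{p^0(x,a,S\in\mathcal S_Y)},\qquad
\underline\eta_0\underline w_0(x,a)=\frac{q^0_X(x)}{p^0(x,a,S\in\mathcal S_X\cap\mathcal S_Y)}.
\]
\Cref{cond:overlap} (for both $\mathcal S_Y$ and the intersection) guarantees these expressions are finite and strictly positive $Q^0_X\times\mu$-a.e.\ and that $M_w^{-1}\le w_0,\underline w_0\le M_w$. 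Since $\mathcal S_X\cap\mathcal S_Y\subseteq\mathcal S_Y$, the subprobability measure indexed by the intersection is dominated as a measure by the one indexed by $\mathcal S_Y$, so $p^0(x,a,S\in\mathcal S_X\cap\mathcal S_Y)\le p^0(x,a,S\in\mathcal S_Y)$ $Q^0_X\times\mu$-a.e.; dividing $q^0_X(x)$ by the smaller denominator gives $\eta_0 w_0\le\underline\eta_0\underline w_0$ a.e., and taking essential suprema yields (ii). Combining (i) and (ii) gives $B_{P^0}(\delta)\le\underline B_{P^0}(\delta)$.

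For strictness, the two displays show $B_{P^0}(\delta)<\underline B_{P^0}(\delta)$ whenever either $\xi_0<\underline\xi_0$ or $\eta_0\|w_0\|_\infty<\underline\eta_0\|\underline w_0\|_\infty$. The first holds iff $P^0(S\in\mathcal S_X\cap\mathcal S_Y)<P^0(S\in\mathcal S_X)$, i.e.\ $P^0(S\in\mathcal S_X\setminus\mathcal S_Y)>0$. For the second, the pointwise ratio in (ii) equals
\[
\frac{\eta_0 w_0(x,a)}{\underline\eta_0\underline w_0(x,a)}
=\frac{p^0(x,a,S\in\mathcal S_X\cap\mathcal S_Y)}{p^0(x,a,S\in\mathcal S_Y)}
=P^0\!\big(S\in\mathcal S_X\cap\mathcal S_Y\,\big|\,X=x,A=a,S\in\mathcal S_Y\big)
=1-P^0\!\big(S\in\mathcal S_Y\setminus\mathcal S_X\,\big|\,X=x,A=a,S\in\mathcal S_Y\big),
\]
so if $c:=\essinf_{(X,A)\sim P^0}P^0(S\in\mathcal S_Y\setminus\mathcal S_X\mid A,X,S\in\mathcal S_Y)>0$ then $\eta_0 w_0\le(1-c)\underline\eta_0\underline w_0$ a.e., whence $\eta_0\|w_0\|_\infty\le(1-c)\underline\eta_0\|\underline w_0\|_\infty<\underline\eta_0\|\underline w_0\|_\infty$ (strict because $\underline\eta_0\|\underline w_0\|_\infty\ge M_w^{-1}>0$). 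Since both summands in the stated hypothesis are nonnegative, at least one of the two cases must occur, and the strict inequality follows.

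The main obstacle is the density bookkeeping underlying (ii): correctly identifying $\eta_0 w_0$ and $\underline\eta_0\underline w_0$ with $q^0_X$ over the appropriate subprobability densities — using \Cref{cond:common} for the covariate-marginal alignment and \Cref{cond:overlap} for finiteness and positivity — and then making the monotonicity and essential-supremum steps rigorous relative to $Q^0_X\times\mu$ with only the mutual absolute continuity furnished by the overlap conditions. The remaining pieces (the two Lipschitz bounds and the monotonicity of the source probabilities) are routine or imported directly from the proof of \Cref{thm:excess_risk}.
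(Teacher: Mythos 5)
Your proposal is correct and follows essentially the same route as the paper: the two Lipschitz bounds are imported from the argument in the proof of Theorem 2, the $\xi_0$ comparison is the same monotonicity of source probabilities, and your identification of the ratio $\eta_0 w_0(x,a)/\underline{\eta}_0\underline{w}_0(x,a)$ with $P^0(S\in\mathcal{S}_X\cap\mathcal{S}_Y\mid X=x,A=a,S\in\mathcal{S}_Y)=1-P^0(S\in\mathcal{S}_Y\setminus\mathcal{S}_X\mid X=x,A=a,S\in\mathcal{S}_Y)$ is exactly the mechanism behind the paper's inequality $(1-\varepsilon_{P^0})\,\underline{\eta}_0\underline{w}_0\ge\eta_0 w_0$. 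The only (cosmetic) difference is that you work directly with pointwise joint subprobability densities, whereas the paper derives the same pointwise bound by first establishing the integral identity over all measurable rectangles $E\times F$.
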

\begin{proof}[Proof of \Cref{lem:lipschitz_comparison}]
In the proof of \Cref{thm:excess_risk}, we showed that there exists $N(\delta)$ such that, for all $n \geq N(\delta)$, with probability at least $1-\delta/2$,
\begin{align*}
\bigl| \ell_{P^{0}}(\theta_{1}, \widehat{g}; z) 
     - \ell_{P^{0}}(\theta_{2}, \widehat{g}; z) \bigr|
&\;\leq\; B_{P^{0}}(\delta) \cdot 
   \left\| \bigl( \theta_{1}(a) - \theta_{2}(a),\, \theta_{1}(b) - \theta_{2}(b) \bigr) \right\|_{2}.
\end{align*}

By a similar argument as in the proof of \Cref{thm:excess_risk}, for all $n \geq N(\delta)$,
\begin{align*}
\bigl| \underline{\ell}_{P^{0}}(\theta_{1}, \widehat{g}; z) 
     - \underline{\ell}_{P^{0}}(\theta_{2}, \widehat{g}; z) \bigr|
&\;\leq\; \underline{B}_{P^{0}}(\delta) \cdot 
   \left\| \bigl( \theta_{1}(a) - \theta_{2}(a),\, \theta_{1}(b) - \theta_{2}(b) \bigr) \right\|_{2},
\end{align*}
with probability at least $1-\delta/2$. It remains to show $B_{P^{0}}(\delta)\le \underline{B}_{P^{0}}(\delta)$ and to characterize when the inequality is strict.
\paragraph{Step 1. Comparison of $\xi_0$ terms.}
Since $S_X\cap S_Y\subset S_X$,
\[
\xi_0
=\frac{1}{P^{0}(S\in S_X)}
\;\le\;\frac{1}{P^{0}(S\in S_X\cap S_Y)}
=\underline{\xi}_0,
\]
with strict inequality if $P^{0}(S\in S_X\setminus S_Y)>0$.

\paragraph{Step 2. Comparison of $\eta_0\|w_0\|_\infty$ terms.}
Let $T_1:=\mathcal{S}_X\cap\mathcal{S}_Y$ and $T_2:=\mathcal{S}_Y\setminus\mathcal{S}_X$.
Fix a common dominating measure $\nu\times\mu$ for $(X,A)$ and write all densities with respect to it.
For any measurable $E\times F\subset\mathcal{A}\times\mathcal{X}$,
\begin{align*}
P^{0}(A\in E,X\in F,S\in\mathcal{S}_Y)
&=P^{0}(A\in E,X\in F,S\in T_1)+P^{0}(A\in E,X\in F,S\in T_2)\\
&=\frac{1}{\underline{\eta}_0}\!\int_{E\times F} p^{0}(x\mid S\in T_1)\,p^{0}(a\mid x,S\in T_1)\,d(\nu\times\mu)\\
&\quad+\frac{1}{\eta_0}\!\int_{E\times F} p^{0}(a,x\mid S\in\mathcal{S}_Y)\,
P^{0}\!\bigl(S\in T_2\mid a,x,S\in\mathcal{S}_Y\bigr)\,d(\nu\times\mu),
\end{align*}
where we used
\[
P^{0}(A\in E,X\in F,S\in T_2)
= \frac{1}{\eta_0}\!\int_{E\times F} p^{0}(a,x\mid S\in\mathcal{S}_Y)\,
P^{0}\!\bigl(S\in T_2\mid a,x,S\in\mathcal{S}_Y\bigr)\,d(\nu\times\mu).
\]

Define
\[
\varepsilon_{P^{0}}
:=\operatorname*{ess\,inf}_{(A,X)\sim P^{0}}
P^{0}\!\bigl(S\in T_2\,\bigm|\,A,X,S\in\mathcal{S}_Y\bigr)\in[0,1].
\]
Then for $\nu\times\mu$-a.e.\ $(a,x)$ with $S\in\mathcal{S}_Y$,
\[
P^{0}\!\bigl(S\in T_2\mid a,x,S\in\mathcal{S}_Y\bigr)\;\ge\;\varepsilon_{P^{0}}.
\]
Hence, for all measurable $E\times F$,
\begin{align*}
P^{0}(A\in E,X\in F,S\in\mathcal{S}_Y)
&\ge \frac{1}{\underline{\eta}_0}\!\int_{E\times F} p^{0}(x\mid S\in T_1)\,p^{0}(a\mid x,S\in T_1)\,d(\nu\times\mu)\\
&\quad+\frac{\varepsilon_{P^0}}{\eta_0}\!\int_{E\times F} p^{0}(a,x\mid S\in\mathcal{S}_Y)\,d(\nu\times\mu).
\end{align*}
But also
\[
P^{0}(A\in E,X\in F,S\in\mathcal{S}_Y)
=\frac{1}{\eta_0}\!\int_{E\times F} p^{0}(x\mid S\in\mathcal{S}_Y)\,p^{0}(a\mid x,S\in\mathcal{S}_Y)\,d(\nu\times\mu).
\]
Combining and rearranging,
\begin{align*}
&(1-\varepsilon_{P^0})\,\frac{1}{\eta_0}\!\int_{E\times F} p^{0}(x\mid S\in\mathcal{S}_Y)\,p^{0}(a\mid x,S\in\mathcal{S}_Y)\,d(\nu\times\mu)
\; \\& \ge\;
\frac{1}{\underline{\eta}_0}\!\int_{E\times F} p^{0}(x\mid S\in T_1)\,p^{0}(a\mid x,S\in T_1)\,d(\nu\times\mu).
\tag{$\ast$}
\end{align*}

Since \((\ast)\) holds for all measurable $E\times F$, we obtain the \emph{pointwise} a.e.\ bound
\[
(1-\varepsilon_{P^0})\,\frac{p^{0}(x\mid S\in\mathcal{S}_Y)\,p^{0}(a\mid x,S\in\mathcal{S}_Y)}{\eta_0}
\;\ge\;
\frac{p^{0}(x\mid S\in T_1)\,p^{0}(a\mid x,S\in T_1)}{\underline{\eta}_0}
\quad \nu\times\mu\text{-a.e. }(x,a).
\tag{$\ast\ast$}
\]

Note that $p^{0}(x \mid S \in T_{1}) = p^{0}(x \mid S \in \mathcal{S}_{X})$ by \Cref{cond:common}.

Taking inverse of \((\ast\ast)\) and multiply by $(1-\varepsilon_{P^{0}}) p^{0}(x \mid S \in T_{1})$ yields
\[
(1-\varepsilon_{P^0})\,\underline{\eta}_0\,\underline{w}_0(x,a)
\;\ge\;
\eta_0\,w_0(x,a)
\qquad Q_X^{0}\times\mu\text{-a.e. }(x,a),
\]
and taking essential suprema over $(x,a)$ gives
\[
(1-\varepsilon_{P^0})\,\underline{\eta}_0\,\|\underline{w}_0\|_\infty
\;\ge\;
\eta_0\,\|w_0\|_\infty.
\]

\paragraph{Step 3. Conclusion.}
Combining the comparisons for $\xi_0$ and for $\eta_0\|w_0\|_\infty$ shows
\[
B_{P^0}(\delta)\;\le\; \underline{B}_{P^0}(\delta),
\]
with strict inequality if $P^{0}(S\in S_X\setminus S_Y)>0$ or $\varepsilon_{P^0}>0$.
This completes the proof.

\end{proof}

Given that the Lipschitz constant decreases under data fusion and that the remainder terms in the risk bounds coincide with and without data fusion, we can show that the risk is smaller under data fusion when applying the same estimation algorithm as \Cref{alg:rkhs_est}.

\subsection{Data Fusion Improves the Excess Risk Upper Bound (Proof of \Cref*{thm:bound_fusion})}
\begin{proof}
Under the assumptions of \Cref{thm:rkhs_excess}, there exists a universal constant $C>0$ such that
\[
L_{P^{0}}(\widehat{\theta},g_0)
\;\le\;
C\,\Big(B_{P^{0}}(\delta)^{-2(1+p)}\,n\,\log(n)^{-2}\Big)^{-\frac{1-2\alpha}{\,p+(1-2\alpha)\,}}.
\]
Similarly, since we apply the same estimation algorithm and the remainder terms coincide, the bound without data fusion satisfies
\[
L_{P^{0}}(\underline{\theta},g_0)
\;\le\;
C\,\Big(\underline{B}_{P^{0}}(\delta)^{-2(1+p)}\,n\,\log(n)^{-2}\Big)^{-\frac{1-2\alpha}{\,p+(1-2\alpha)\,}}.
\]
Taking the ratio of the upper bounds and cancelling common $n,\log n$ factors yields
\begin{align*}
\left(\frac{1+\xi_{0}+C_{\sigma, \delta}\,\eta_{0}\|w_{0}\|_{\infty}}{1+\underline{\xi}_{0}+C_{\sigma, \delta}\,\underline{\eta}_{0}\|\underline{w}_{0}\|_{\infty}}\right)^{\frac{2(1+p)(1-2\alpha)}{\,p+(1-2\alpha)\,}}
\;\le\;1,
\end{align*}
with strict inequality when \[
P^{0}\!\big(S\in \mathcal{S}_{X}\setminus \mathcal{S}_{Y}\big)
\;+\;
\essinf_{(X,A)\sim P^{0}}
P^{0}\!\big(S\in \mathcal{S}_{Y}\setminus \mathcal{S}_{X}\mid A,X,S\in\mathcal{S}_{Y}\big)
\;>\;0.\] This completes the proof.
\end{proof}

\subsection{Minimax Lower Bound without Data Fusion (Proof of \Cref*{lem:minimax_bound})}

Define
\begin{align}
N_\delta:= \frac{1}{4} \left(
    \sqrt{\,3\log\!\frac{2}{\delta} \;+\; 4\left(\frac{\delta}{32}\right)^{-1-\frac{(1 - 2\alpha)}{p}}}
    \;-\;
    \sqrt{\,3\log\!\frac{2}{\delta}\,}
\right)^{2}. \label{eq:Ndelta}
\end{align}
As $\delta\rightarrow 0$, $N_\delta=[1+o(1)]\left(\delta/32\right)^{-1 - (1 - 2\alpha)/p}$.
\begin{proof}[Proof of \Cref{lem:minimax_bound}]


Let $\tau=(\tau_{k})_{k=0}^\infty$ denote an estimator sequence, where, for each $k$, $\tau_{k}$ takes as input $k$ i.i.d draws from $Q$ and outputs an estimate of $\theta_Q$; if $k=0$, then $\theta_{0}$ takes as input an empty set and returns an arbitrary value, such as the zero function. Define $\underline{\theta}_\tau$ to be the estimator based on $n$ i.i.d draws from $P$ that first selects the $K:= \sum_{i=1}^n \mathbf{1}(S_i\in\mathcal{S}_{XY})$ observations from sources in $\mathcal{S}_{XY}$, and then applies $\tau_{K}$ to this subsample of observations; expressed in notation, $\underline{\theta}_\tau(\{(X_i,A_i,Y_i,S_i)\}_{i=1}^n)=\tau_{K}(\{(X_i,A_i,Y_i)\}_{i : S_i\in\mathcal{S}_{XY}})$. For a distribution $Q$ on $\mathcal{X} \times \mathcal{A} \times \mathcal{Y}$, 
we denote by $Q^{n_{XY}}$ the product measure on $(\mathcal{X} \times \mathcal{A} \times \mathcal{Y})^{n_{XY}}$ 
corresponding to $n_{XY}$ i.i.d.\ samples drawn from $Q$. 
Similarly, for $P \in \mathcal{P}$ we write $P^{n}$ for the product measure on 
$(\mathcal{X} \times \mathcal{A} \times \mathcal{Y} \times \mathcal{S})^{n}$ 
of $n$ i.i.d.\ copies from $P$. For any $Q$ in the model $\mathcal{Q}$ defined in \Cref{cond:src}, further define 
\begin{align*}
    \mathcal{P}(Q):=\left\{P\in\mathcal{P} : P(X | S \in \mathcal{S}_{X}) = Q(X),
P(Y | X, A, S \in \mathcal{S}_{Y}) = Q(Y | X, A), P(S\in\mathcal{S}_{XY})=\tfrac{n_{XY}}{n}\right\},
\end{align*}
where we have suppressed the dependence on $n_{XY}/n$ in the notation. We also let $\|\cdot\|$ denote the $L^2(\mu)$ norm.

\paragraph{Part 1: Lower bounding the no-data-fusion minimax risk under sampling from $P$ by the worst-case Bayes risk under sampling from $Q$.} Fix $Q\in\mathcal{Q}$, $P\in\mathcal{P}(Q)$ such that $P(A|X,S=s)=Q(A|X)$ for all $s$, and an estimator sequence $\tau$. For any $k$ and $t>0$,
\begin{align*}
P^n\{\|\underline{\theta}_\tau-\theta_Q\|^2\ge t\mid K=k\}&= Q^{k}\{\|\tau_{k}-\theta_Q\|^2\ge t\},
\end{align*}
where we use that $\{(X_i,A_i,Y_i)\}_{i : S_i\in\mathcal{S}_{XY}}| K=k$ is equal in distribution to an iid sample from $Q^k$. 
Above we use the convention that $Q^0\{\|\theta_{0}-\theta_Q\|^2\ge t\}=1\{\|\theta_{0}-\theta_Q\|^2\ge t\}$. Hence, for $r\ge 0$,
\begin{align*}
P^n\{\|\underline{\theta}_\tau-\theta_Q\|^2\ge t\}&\ge  P^n\{\|\underline{\theta}_\tau-\theta_Q\|^2\ge t,K\le r\} \\
&= \sum_{k=1}^{r} Q^{k}\{\|\tau_{k}-\theta_Q\|^2\ge t\} P^n\{K=k\} \\
&\ge \min_{k : 0\le k\le r} Q^{k}\{\|\tau_{k}-\theta_Q\|^2\ge t\} \ P^n\{K\le r\}.
\end{align*}
Taking a supremum over $P\in\mathcal{P}(Q)$, followed by a supremum over $Q\in\mathcal{Q}$ and an infimum over estimator sequences $\tau$ yields
\begin{align*}
&\inf_\tau \sup_{Q\in\mathcal{Q}} \sup_{P\in\mathcal{P}(Q) : P(A|X,S)=Q(A|X)} P^n\{\|\underline{\theta}_\tau-\theta_Q\|^2\ge t\} \\
&\quad\ge \inf_\tau \sup_{Q\in\mathcal{Q}}  \min_{k : 0\le k\le r} Q^{k}\{\|\tau_{k}-\theta_Q\|^2\ge t\} \sup_{P\in\mathcal{P}(Q)} P^n\{K\le r\}.
\end{align*}
Upper bounding the left-hand side shows that
\begin{align*}
\inf_\tau \sup_{Q\in\mathcal{Q}} \sup_{P\in\mathcal{P}(Q)} P^n\{\|\underline{\theta}_\tau-\theta_Q\|^2\ge t\}&\ge \inf_\tau \sup_{Q\in\mathcal{Q}}  \min_{k : 0\le k\le r} Q^{k}\{\|\tau_{k}-\theta_Q\|^2\ge t\} \sup_{P\in\mathcal{P}(Q)} P^n\{K\le r\}.
\end{align*}
We will show that the supremum over $P$ lower bounds by some function $f$ applied to $n$ and $n_{XY}$. Let $r:= C_{\delta, v}n_{XY}$, where $C_{\delta, v}$ is the smallest constant makes $r$ as a positive integer and $C_{\delta, v} \ge 1+\sqrt{3\log(2/\delta)/n_{XY}}$. By choosing $r$ in this way, a multiplicative Chernoff bound yields $P^n\{K\le r\}\ge 1-\delta/2$ for each $P\in\mathcal{P}(Q)$, and so the above shows that
\begin{align}
    &\inf_\tau \sup_{Q\in\mathcal{Q}} \sup_{P\in\mathcal{P}(Q)} P^n\{\|\underline{\theta}_\tau-\theta_Q\|^2\ge t\} \nonumber \\
    &\ge (1-\delta/2)\,\inf_\tau \sup_{Q\in\mathcal{Q}}  \min_{k : 0\le k\le r} Q^{k}\{\|\tau_{k}-\theta_Q\|^2\ge t\} \nonumber \\
    &\ge (1-\delta/2)\,\sup_{\Pi}\inf_\tau \min_{k : 0\le k\le r} \int Q^{k}\{\|\tau_{k}-\theta_Q\|^2\ge t\} \Pi(dQ), \label{eq:bayesRiskLB}
\end{align}
where the final supremum is over priors $\Pi$ on $\mathcal{Q}$; this inequality---which reflects that the minimax risk is lower bounded by the worst-case Bayes risk---is essentially a consequence of the maximin inequality \citep{wald_sequential_1945}. The right-hand side above can be simplified by using that, for any $\Pi$,
\begin{align*}
    \inf_\tau \min_{k : 0\le k\le r} \int Q^{k}\{\|\tau_{k}-\theta_Q\|^2\ge t\} \Pi(dQ)\ge \inf_\tau \int Q^{k}\{\|\tau_{k}-\theta_Q\|^2\ge t\} \Pi(dQ).
\end{align*}
The inequality also trivially holds the other way, but that fact will not be used in this proof. To see why the above holds, fix an estimator sequence $\tau$, and let $\overline{\tau}$ be the same sequence except with $\overline{\tau}_{r}$ an estimator that takes as input $r$ observations and applies $\tau_{j^\star}$ to the first $j^\star$ of them, where $j^\star:=\argmin_{j : 0\le j\le r} \int Q^{j}\{\|\tau_{j}-\theta_Q\|^2\ge t\} \Pi(dQ)$. This construction then implies that
\begin{align*}
    \min_{k : 0\le k\le r}\int Q^{k}\{\|\tau_{k}-\theta_Q\|^2\ge t\} \Pi(dQ) = \int Q^{r}\{\|\overline{\tau}_{r}-\theta_Q\|^2\ge t\} \Pi(dQ),
\end{align*}
and taking an infimum over all possible estimators of $r$ observation on the right, followed by over all estimator sequences on the left, implies the claim. Returning to \eqref{eq:bayesRiskLB}, this shows that
\begin{align}
    \inf_\tau \sup_{Q\in\mathcal{Q}} \sup_{P\in\mathcal{P}(Q)} P^n\{\|\underline{\theta}_\tau-\theta_Q\|^2\ge t\}\ge (1-\delta/2)\,\sup_{\Pi}\inf_\tau \int Q^{r}\{\|\tau_{r}-\theta_Q\|^2\ge t\} \Pi(dQ). \label{eq:bayesLB}
\end{align}
Hence, we have shown that any lower bound on the Bayes risk under a least favorable prior in the problem with $r$ observations drawn from $Q$ also implies a lower bound on the minimax risk under sampling a random number, $K$, of observations from the target distribution.

\paragraph{Part 2: Using Fano's method to lower bound the worst-case Bayes risk.} In Part~3, we will follow arguments from \citet{zhang_optimality_2023} to construct $\{Q_0,Q_1,\ldots,Q_M\}\subset \mathcal{Q}$ with $M\ge 2^{r^{p/[p + (1 - 2\alpha)]}/8}$ such that, for $\theta_j:=\theta_{Q_j}$ and to-be-specified constants $c_{\delta}>0$ and $\kappa:=\delta/4$, the
\begin{enumerate}[label=(\roman*)]
    \item\label{it:funsFar} \textit{dose-response functions are far apart:} $\|\theta_{j} - \theta_{k}\| \geq c_{\delta} \kappa \; r^{-\frac{1-2\alpha}{p + (1 - 2\alpha)}}$ for all $0 \leq j < k \leq M$.
    \item\label{it:distsClose} \textit{distributions are close together:} $\frac{1}{M+1} \sum_{i=1}^{M} D_{\mathrm{KL}}(Q^{r}_{i}, Q^{r}_{0}) \leq \kappa \log M$.
\end{enumerate}
Before giving this construction, we motivate why it will be useful. This motivation begins with Eq.~2.8 from \citet{tsybakov_introduction_2009}, which shows that, if we take $t= c_{\delta} \kappa \; r^{-\frac{1-2\alpha}{p + (1 - 2\alpha)}}/2$, then
\begin{align*}
    Q_j^{r}\{\|\tau_{r}-\theta_j\|^2\ge t\}\ge Q_j^{r}\{\psi_\tau^*\not= j\},\ \ j=0,1,\ldots,M,
\end{align*}
where $\psi_\tau^*:=\argmin_{j\in \{0,1,\ldots,M\}} \|\tau^{r}-\theta_{Q_j}\|$ is a test based on $\tau$ that returns the index of a distribution most likely to have generated the data. Combining the fact that $\Pi_M$ is a prior with the above bound and the fact that $\psi_\tau^*$ is a test yields that the worst-case Bayes risk from \eqref{eq:bayesLB} lower bounds as
\begin{align*}
    &\sup_{\Pi}\inf_\tau \int Q^{r}\{\|\tau_{r}-\theta_Q\|^2\ge t\}\, \Pi(dQ) \\
    &\quad\ge \inf_\tau \frac{1}{M+1}\sum_{j=0}^M Q_j^{r}\{\|\tau_{r}-\theta_j\|^2\ge t\}\ge \inf_\tau \frac{1}{M+1}\sum_{j=0}^M Q_j^{r}\{\psi_\tau^*\not= j\}\ge \inf_\psi \frac{1}{M+1}\sum_{j=0}^M Q_j^{r}\{\psi\not= j\}.
\end{align*}
The final infimum above is over all tests, which take as input $r$ observations and return an element of $\{0,1,\ldots,M\}$ indicating a guess of which product measure in $\{Q_0^r,Q_1^r,\ldots,Q_M^r\}$ they were generated by. \citet{tsybakov_introduction_2009} refers to the right-hand side as the average probability of error. It can be lower bounded via Fano's method (Corollary~2.6 from that work), which, when combined with the above, yields that
\begin{align*}
    \sup_{\Pi}\inf_\tau \int Q^{r}\{\|\tau_{r}-\theta_Q\|^2\ge t\} \Pi(dQ)&\ge \frac{\log(M+1)-\log 2}{\log M} - \kappa\ge 1  - \frac{\log 2}{\log M} - \kappa.
\end{align*}
Plugging in $\kappa=\delta/4$ and $M\ge 2^{n/8}$ yields
\begin{align*}
    \sup_{\Pi}\inf_\tau \int Q^{r}\{\|\tau_{r}-\theta_Q\|^2\ge t\}\, \Pi(dQ)&\ge 1 - \frac{\log 2}{\log 2^{r^{p/[p + (1 - 2\alpha)]}/8}} - \frac{\delta}{4}\ge 1 - 8r^{-p/[p + (1 - 2\alpha)]} - \frac{\delta}
    {4}.
\end{align*}
Since $n_{XY}\ge N_\delta$ as defined in \eqref{eq:Ndelta}, $r:= C_{\delta, v}n_{XY}\ge  (\delta/32)^{-[p + (1 - 2\alpha)]/p}$, and so the right-hand side is lower bounded by $1-\delta/2$. Plugging in our chosen $t$ and recalling \eqref{eq:bayesLB} yields that
\begin{align*}
    \inf_\tau \sup_{Q\in\mathcal{Q}} \sup_{P\in\mathcal{P}(Q)} P^n\left\{\|\underline{\theta}_\tau-\theta_Q\|^2\ge \frac{c_{\delta}\delta}{8} r^{-\frac{1-2\alpha}{p + (1 - 2\alpha)}}\right\}&\ge (1-\delta/2)(1-\delta/2)\ge 1-\delta.
\end{align*}
Plugging in the fact that $r:= C_{\delta,v} n_{XY}\ge (1+\sqrt{3\log(2/\delta)/n_{XY}})n_{XY}$ yields that
\begin{align*}
    &\inf_\tau \sup_{Q\in\mathcal{Q}} \sup_{P\in\mathcal{P}(Q)} P^n\left\{\|\underline{\theta}_\tau-\theta_Q\|^2\ge \frac{c_{\delta}\delta}{8} (1+\sqrt{3\log(2/\delta)/n_{XY}})^{-\frac{1-2\alpha}{p + (1 - 2\alpha)}} n_{XY}^{-\frac{1-2\alpha}{p + (1 - 2\alpha)}}\right\}\ge 1-\delta.
\end{align*}
Finally, plugging in the inequality $c_\delta/8\ge c_{\mathcal{Q}}$ from the forthcoming \eqref{eq:cprimeIneq} for the constant $c_{\mathcal{Q}}$ defined therein, we find that
\begin{align*}
    &\inf_\tau \sup_{Q\in\mathcal{Q}} \sup_{P\in\mathcal{P}(Q)} P^n\left\{\|\underline{\theta}_\tau-\theta_Q\|^2\ge c_{\mathcal{Q}}\delta (1+\sqrt{3\log(2/\delta)/n_{XY}})^{-\frac{1-2\alpha}{p + (1 - 2\alpha)}} n_{XY}^{-\frac{1-2\alpha}{p + (1 - 2\alpha)}}\right\}\ge 1-\delta.
\end{align*}
The desired conclusion is at hand. Let $\underline{\theta}$ be the estimator from the lemma statement. Since it does not use data fusion, there exists $\underline{\tau}$ such that $\underline{\theta}=\underline{\theta}_{\underline{\tau}}$. Hence, by the above, for any $\epsilon>0$ there exist $Q\in\mathcal{Q}$ and $P\in\mathcal{P}(Q)$ such that \eqref{eq:minimaxLB} holds with probability at least $1-\delta-\epsilon$ under sampling from $P^n$; in particular, this holds when $\epsilon=\delta$.


\paragraph{Part 3: Constructing $Q_0,Q_1,\ldots,Q_M$.} We follow arguments from \citet{zhang_optimality_2023}. Let $m := \lceil r^{p/[p + (1 - 2\alpha)]} \rceil$. By the Varshamov-Gilbert lemma \citep[Lemma~2.9][]{tsybakov_introduction_2009}, there exist binary vectors $w^{(0)}, \ldots, w^{(M)} \in \{0,1\}^m$ for some $M \geq 2^{m/8}$ such that $\sum_{l = 1}^{m} (w_{l}^{(i)} - w_{l}^{(j)})^2 \geq \frac{m}{8}$ for all $i \neq j$. Let $\xi := C_\kappa m^{-[p + (1 - 2\alpha)]/p}$ for
\begin{align*}
    C_\kappa := \min\left\{
        2^{-(1-2\alpha)/p} \frac{R}{G_1}, \;
        \frac{\kappa \overline{\sigma}^{2} \log 2}{4}, \;
        \frac{1}{\big( \sup_l \|e_l\|_{\infty}\big)^{2}}
    \right\},
\end{align*}
and define
\[
\theta_{0} \equiv 0, 
\quad \theta_{i} := \xi^{1/2} \sum_{l=1}^{m} w_{l}^{(i)} e_{m+l}, \quad i = 1, \ldots, M,
\]
where $\{e_{l}\}$ are the orthonormal eigenfunctions of the integral operator $T_{\mathcal{K}}$, indexed so the corresponding eigenvalues are nonincreasing. Let $Q_X$ be the marginal distribution of $X$ under an arbitrarily chosen distribution in $\mathcal{Q}$---the particular choice made will play no role in this proof. Let $Q_i$ be the distribution of $(X,A,Y)$ that can be sampled from via $X\sim Q_X$, $A\mid X\sim \mu$, and $Y\mid A = X, A = x \sim \mathcal{N}( \theta_i(a), \overline{\sigma}^{2})$ where $\overline{\sigma} := \min(\sigma, L)$, where $\sigma$ and $L$ are defined in \Cref{cond:subexp-y}. Note that the dose response function of $Q_i$, $\theta_{Q_i}$, is equal to $\theta_i$. These distributions belong to the model $\mathcal{Q}$ since (i) their Gaussian errors easily satisfy the sub-exponential condition in \Cref{cond:subexp-y}, (ii) $\E_{Q_i}[Y|A,X]$ is $Q^{0}_X\times \mu$-a.s. bounded by 1 since $p\le 1-2\alpha$ ensures $\xi^{1/2}\le C_\kappa^{1/2} m^{-1}\le 1/(m\sup_l \|e_l\|_{\infty})=1/(mM_{e})$, and (iii) the source condition \Cref{cond:src} is also satisfied since
\begin{align*}
\|T_{\mathcal{K}}^{\alpha} \theta_{i}\|_{\mathcal{H}}
= \xi \sum_{k=1}^{m} \sigma_{m+k}^{-(1-2\alpha)} \left(w_{k}^{(i)}\right)^{2}
\leq \xi \sum_{k=1}^{m} \sigma_{2m}^{-(1-2\alpha)}
\leq \xi \sum_{k=1}^{m} G_{1} \left( 2m \right)^{\frac{1-2\alpha}{p}}
= C_\kappa G_{1} 2^{\frac{1-2\alpha}{p}}\le R.
\end{align*}

Having constructed $Q_0,Q_1,\ldots,Q_M\in\mathcal{Q}$, it remains to show they satisfy \ref{it:funsFar} and \ref{it:distsClose} from Part~2 of this proof. For \ref{it:funsFar}, we use the orthonormality of $\{e_{l}\}$, the choice of Varshamov-Gilbert weights, and the fact that $\lceil r^{p/[p + (1 - 2\alpha)]} \rceil\le 2r^{p/[p + (1 - 2\alpha)]}$ since $r\ge 1$ to show that, for all $i\not=j$,
\begin{align*}
\|\theta_i-\theta_j\|_{L^2(\mu)}^2
& = \xi \sum_{l=1}^{m} \left( w_{l}^{(i)} - w_{l}^{(j)} \right)^{2} \geq \xi \, \frac{m}{8} = \frac{(C_\kappa/\kappa) \kappa \; m^{-(1-2\alpha)/p}}{8} \\
&\ge \frac{(C_\kappa/\kappa) \kappa \; r^{-\frac{1-2\alpha}{p + (1 - 2\alpha)}}}{8}2^{-(1-2\alpha)/p} = c_{\delta} \kappa \; r^{-\frac{1-2\alpha}{p + (1 - 2\alpha)}},
\end{align*}
where $c_{\delta}:=\frac{(C_\kappa/\kappa)}{8}2^{-(1-2\alpha)/p}$. The $\delta$ subscript indicates the dependence of $c_\delta$ on $\kappa=\delta/4$. When divided by $8$ as at the end of Part~2 of this proof, this quantity lower bounds as
\begin{align}
    c_{\delta}/8&=\min\left\{ 2^{-2(1-2\alpha)/p} R/(16\delta G_1), \; (\overline{\sigma}^{2} \log 2)2^{-(1-2\alpha)/p}/256, \frac{2^{-(1-2\alpha)/p}}{16\delta \big( \sup_l \|e_l\|_{\infty}\big)^{2}} \right\} \nonumber \\
    &\ge \min\left\{ 2^{-2(1-2\alpha)/p} R/(16 G_1), \; (\overline{\sigma}^{2} \log 2)2^{-(1-2\alpha)/p}/256, \frac{2^{-(1-2\alpha)/p}}{16 \big( \sup_l \|e_l\|_{\infty}\big)^{2}} \right\} =: c_{\mathcal{Q}}. \label{eq:cprimeIneq}
\end{align}
It remains to show \ref{it:distsClose}. For this, we use that the errors are Gaussian, which yields that, for any $i\not=0$,
\begin{align*}
    D_{\mathrm{KL}}(Q_{i}^r, Q_{0}^r) &= \frac{r}{2\overline{\sigma}^{2}} \| \theta_{i} \|_{L^{2}(\mu)}^{2} = \frac{r\xi}{2\overline{\sigma}^{2}} \sum_{k=1}^{m} \left(w_{k}^{(i)}\right)^{2} \leq \frac{r\xi m}{2\overline{\sigma}^{2}} \\
    &= \frac{rC_{\kappa}m^{-\frac{1-2\alpha}{p}}}{2\overline{\sigma}^{2}} \leq \frac{C_{\kappa}m}{2\overline{\sigma}^{2}} \leq \frac{4C_{\kappa}\log{M}}{\overline{\sigma}^{2} \log2} \leq \kappa \log M.
\end{align*}
As $i\not=0$ was arbitrary, \ref{it:distsClose} holds.

\end{proof}

\subsection{Sufficient Conditions for Data Fusion to Improve Worst-Case Performance (Proof of \Cref*{thm:minimax_bound})}
\begin{proof}
We will show that, for all $n,n_{XY}$ large enough, the first term in \eqref{eq:minimaxDominance} is at least $1-\delta$ and the second term is at most $\delta$. For brevity, let $\rho := (1-2\alpha)/[p+(1-2\alpha)]\in(0,1)$.
 
\paragraph{Step 1. Show $\sup_{P\in\mathcal P_{XY}} P^{n}\!\left( L_{P}(\underline{\theta},g_{0}) \ge \tfrac{c_{\mathcal{Q}}}{4}\,\delta\, n_{XY}^{-\rho} \right) \ge 1-\delta.$}
By \Cref{lem:minimax_bound}, for all $(n,n_{XY})$ with $n \ge n_{XY} \ge N_1(\delta)$, there exists $P \in \mathcal P_{XY}$ with $P(S \in \mathcal S_X \cap \mathcal S_Y) = n_{XY}/n$ such that
\begin{equation*}
P^{n}\!\left(
  L_{P}(\underline{\theta},g_{0}) 
  \ \ge\ 
  \frac{c_{\mathcal{Q}}}{2}\,\delta\,
  \Bigl(1+\sqrt{3\log(2/\delta)/n_{XY}}\Bigr)^{-\rho}\,
  n_{XY}^{-\rho}
\right)\ \ge\ 1-\delta,
\end{equation*}
where $c_{\mathcal{Q}}>0$ is a constant depending on $\mathcal{Q}$. Using that $(1+x)^{-\rho}\ge 1-\rho x$ for $x\ge 0$, we have
\begin{equation*}
\Bigl(1+\sqrt{3\log(2/\delta)/n_{XY}}\Bigr)^{-\rho} 
\ \ge\ 
1-\rho\sqrt{3\log(2/\delta)/n_{XY}}.
\end{equation*}
Hence, if
\[
n_{XY}\ \ge\ N_2(\delta)
\ :=\ \left\lceil \frac{3\rho^{2}\,\log(2/\delta)}{\delta^{2}} \right\rceil,
\]
then $\Bigl(1+\sqrt{3\log(2/\delta)/n_{XY}}\Bigr)^{-\rho} \ge 1-\delta$.
Therefore, for all $n \ge n_{XY} \ge \max\{N_1(\delta),N_2(\delta)\}$,
\begin{equation*}
P^{n}\!\left( L_{P}(\underline{\theta},g_{0}) \ \ge\ \frac{c}{2}\,\delta(1-\delta)\, n_{XY}^{-\rho} \right)\ \ge\ 1-\delta.    
\end{equation*}
This shows that, for all $n\ge n_{XY}\ge \max\{N_1(\delta),N_2(\delta)\}$, 
\begin{equation*}
\sup_{P\in\mathcal P_{XY}} P^{n}\!\left( L_{P}(\underline{\theta},g_{0}) \ \ge\ \frac{c_{\mathcal{Q}}}{2}\,\delta(1-\delta)\, n_{XY}^{-\rho} \right)\ \ge\ 1-\delta,   
\end{equation*}
and combining this with the fact that $\delta(1-\delta)\ge \delta/2$ for $\delta\in(0,1/2]$ yields that, for all such $n,n_{XY}$,
\begin{equation*}
\sup_{P\in\mathcal P_{XY}} P^{n}\!\left( L_{P}(\underline{\theta},g_{0}) \ \ge\ \tfrac{c_{\mathcal{Q}}}{4}\,\delta\, n_{XY}^{-\rho} \right)\ \ge\ 1-\delta.
\end{equation*}

\paragraph{Step 2. Show $\sup_{P\in\mathcal P_{XY}} P^{n}\!\left( L_{P}(\widehat{\theta},g_{0}) > r(\delta, n)/\log(n)^{h\rho/2} \right) < \delta.$}

To prove the result, we will apply Theorem~3 of \citet{foster_orthogonal_2023} as we did in the proof of \Cref{thm:rkhs_excess}. We first establish that the loss is Lipschitz in its first argument with a Lipschitz constant uniform over $P \in \mathcal P_{XY}$. To see this, for any $P \in \mathcal P_{XY}$, $\theta_{1}, \theta_{2} \in \Theta$, and realization $z:= (x,a,b,y,s)$ of $Z \sim P \times \mu$, 
\begin{align*}
\big|\ell_{P}(\theta_{1}, \widehat{g}; z) - \ell_{P}(\theta_{2}, \widehat{g}; z)\big|
= \big\| \left( \theta_{1}(a) - \theta_{2}(a),\; \theta_{1}(b) - \theta_{2}(b) \right) \big\|_{2}\cdot B(z),
\end{align*}
where $B(z)$ is defined as
\begin{equation*}
B(z) := \left\|
\begin{pmatrix}
\theta_{1}(b) + \theta_{2}(b) - 2\widehat{\tau}(b)
+ 2\widehat{\xi}\,\mathbf{1}(s \in \mathcal{S}_{X})\big(\widehat{\tau}(b) - \widehat{m}(b,x)\big) \\[0.25em]
2\,\widehat{\eta}\,\widehat{w}(x,a)\,\mathbf{1}(s \in \mathcal{S}_{Y})\big(\widehat{m}(x,a) - y\big)
\end{pmatrix}
\right\|_{2}.
\end{equation*}
We next derive a uniform bound on $B(z)$. First, observe that
\begin{align*}
B(z)
&\le \Big| \theta_{1}(b)+\theta_{2}(b)-2\widehat{\tau}(b)
+ 2\widehat{\xi}\,\mathbf{1}(s \in \mathcal{S}_{X})\big(\widehat{\tau}(b)-\widehat{m}(b,x)\big) \Big| \\
&\quad + \Big| 2\,\widehat{\eta}\,\widehat{w}(x,a)\,\mathbf{1}(s \in \mathcal{S}_{Y})\big(\widehat{m}(x,a)-y\big) \Big| \\
&\le 2\|\theta\|_{\infty} + 2|\widehat{\tau}(b)|
+ 2\widehat{\xi}\,\big| \widehat{\tau}(b)-\widehat{m}(b,x)\big|
+ 2\widehat{\eta}\,\widehat{w}(x,a)\,\mathbf{1}(s \in \mathcal{S}_{Y})\big(|\widehat{m}(x,a)|+|y|\big) \\
&\le 4 + 4\widehat{\xi} + 2\widehat{\eta}\,\widehat{w}(x,a)\,\mathbf{1}(s \in \mathcal{S}_{Y})\big(1+|y|\big) \\
&\le 4 + 4M_{\xi} + 2M_{\eta}M_{w}\,\mathbf{1}(s \in \mathcal{S}_{Y})\,(1+|y|)
\qquad Q^{0}_{X}\times\mu\text{-a.e.},
\end{align*}
by $\|\theta\|_{\infty}\le 1$ and \Cref{cond:nuisances}. By \Cref{cond:subexp-y} and the same argument as in the proof of \Cref{thm:excess_risk}, with probability at least $1-\delta/4$,
\[
\mathbf{1}(s \in \mathcal{S}_{Y})\,\eta_{0}\,\|w_{0}\|_{\infty}\,|y|
\;\le\; C_{\sigma, \delta}\,\eta_{0}\,\|w_{0}\|_{\infty}.
\]
Since $\eta_{0}\,\|w_{0}\|_{\infty}\ge (M_{\eta}M_{w})^{-1}>0$, it follows that
$\mathbf{1}(s \in \mathcal{S}_{Y})\,|y|\le C_{\sigma, \delta}$, which further implies
\begin{align*}
B(z)
&\le 4\big(1 + M_{\xi} + C_{\delta}\,M_{\eta}M_{w}\big)
=: B_{\mathcal{P}}
\qquad Q^{0}_{X}\times\mu\text{-a.e.}
\end{align*}
Then we apply Theorem~3 of \citet{foster_orthogonal_2023} with the parameters
\begin{align*}
K_{2} = 2, \quad R = \sqrt{2}, \quad \beta_{1} = 2, \quad r = p, \quad \lambda = 2, \quad \kappa = 0, \quad B_{1} = B_{\mathcal{P}}^{2}, \quad B_{2} = (M_{\lambda}C_{p}c^{p})^{\frac{2}{1+p}},
\end{align*}
and plug in the valid critical radius $\delta_{n} \propto c^{\frac{p}{1+p}} (n\log(n)^{-2})^{-\frac{1}{2+p}}$ resulting in the following bound:
\begin{align*}
L_{P}(\widehat{\theta}, g_{0}) - L_{P}(\theta^{*}, g_{0})
&\lesssim  c^{\frac{2p}{1+p}} (n\log(n)^{-2})^{-\frac{1}{1+p}} 
   + \frac{\log(\delta^{-1})}{n} 
   + c^{\frac{2p}{1+p}} 
     \|\widehat{g} - g_{0} \|^{\frac{4}{1+p}}_{L^{2}(Q_{X} \times \mu, \ell^{2})}
\end{align*}
with probability at least $1-3/4\delta$. By \eqref{eq:ghatUniform}, there exists $N_3(\delta)<\infty$ such that, for all $n\ge N_3(\delta)$, for all $P \in \mathcal P_{XY}$,
\begin{align*}
\|\widehat g-g_0\|_{L^2(Q_X^0\times\mu;\ell^2)} \le M_g \ (n\log(n)^{-2})^{-1/4},
\end{align*}
with probability at least $1-\delta/4$. Then, the nuisance estimation error can be absorbed into the target estimation error, yielding, for all $n \ge N_{3}(\delta)$,
\begin{align*}
L_{P}(\widehat{\theta}, g_{0}) - L_{P}(\theta^{*}, g_{0}) \lesssim c^{\frac{2p}{1+p}} (n\log(n)^{-2})^{-\frac{1}{1+p}}
\end{align*}
with probability at least $1-\delta$.
Then for any $P \in \mathcal P_{XY}$, the excess risk with respect to the true CDRF $\theta_{0}$ is bounded by
\begin{align*}
L_{P}(\widehat{\theta}, g_{0}) \le C' (n\log(n)^{-2})^{-\rho},
\end{align*}
by plugging in $c \propto (n\log(n)^{-2})^{\rho}$ and $C'$ is an universal constant. Hence, for $n \ge N_{3}(\delta)$,
\begin{align*}
    \sup_{P \in \mathcal P_{XY}}P^{n}(L_{P}(\widehat{\theta}, g_{0}) \le C' (n\log(n)^{-2})^{-\rho}) \ge 1 - \delta.
\end{align*}

Next, using $n_{XY}\le n/(\log n)^{2+h}$, we can rewrite the right-hand side in terms of $n_{XY}$: for some $N_4(\delta)<\infty$ and all $n\ge N_4(\delta)$,
\begin{align*}
C'\,(n\log(n)^{-2})^{-\rho} \le \tfrac{c_{\mathcal{Q}}}{4}\,\delta\,n_{XY}^{-\rho}\,(\log n)^{-h\rho/2} = r(\delta,n)\,(\log n)^{-h\rho/2}.
\end{align*}

Letting 
\begin{align}\label{eq:large_n_minimax}
N_{\delta}:= \max\{N_{1}(\delta), N_{2}(\delta), N_{3}(\delta), N_{4}(\delta)\},
\end{align}
we conclude
\begin{align*}
    &\sup_{P\in\mathcal{P}_{XY}} P^n\!\Bigl\{ L_P(\underline{\theta}, g_0) \ge r(\delta,n)\Bigr\}- \sup_{P\in\mathcal{P}_{XY}} P^n\!\Bigl\{ L_P(\widehat{\theta}, g_0) > r(\delta,n)/\log(n)^\frac{h(1-2\alpha)}{2[p+(1-2\alpha)]} \Bigr\}
    \;\ge\; 1 - 2\delta.
\end{align*}
\end{proof}

\end{document}